\tikzset{->-/.style={decoration={
  markings,
  mark=at position .5 with {\arrow[scale=1]{stealth}}},postaction={decorate}}}
\tikzset{->/.style={decoration={
  markings,
  mark=at position 1 with {\arrow[scale=1]{stealth}}},postaction={decorate}}}
\newtheorem{theorem}{Theorem}
\newtheorem{lemma}{Lemma}
\newtheorem{claim}{Claim}
\newtheorem{corollary}{Corollary}
\newtheorem{observation}{Observation}
\newenvironment{proofclaim}{\noindent{\em Proof of the claim.}}{\qedclaim}
\newcommand{\qedclaim}{\hfill $\diamond$ \medskip}
\newcommand{\turn}{s}
\newcommand{\MAPF}{\textsc{Multiagent Path Finding}\xspace}
\newcommand{\MAPFShort}{\textsc{MAPF}\xspace}
\newcommand{\MAPFSwap}{\textsc{Multiagent Path Finding Swaps}\xspace}
\newcommand{\MAPFSwapShort}{\textsc{MAPFS}\xspace}
\newcommand{\MAPFNoSwap}{\textsc{Multiagent Path Finding No Swaps}\xspace}
\newcommand{\MAPFNoSwapsShort}{\textsc{MAPFNS}\xspace}
\newcommand{\vc}{\ensuremath{\operatorname{vc}}}
\newcommand{\tw}{\ensuremath{\operatorname{tw}}}
\newcommand{\cw}{\ensuremath{\operatorname{cw}}}
\newcommand{\dst}{\ensuremath{\operatorname{dist}}}
\begin{document}

\title{Exact Algorithms and Lowerbounds for Multiagent Pathfinding: Power of Treelike Topology\thanks{This work was co-funded by the European Union under the project Robotics and advanced industrial production (reg.\ no.\ CZ.02.01.01/00/22\_008/0004590).
JMK was additionally supported by the Grant Agency of the Czech Technical University in Prague, grant No.\ SGS23/205/OHK3/3T/18. Furthermore, an extended abstract is accepted in AAAI~$'24$.}}

\author[1]{Foivos Fioravantes}
\author[1]{Du\v{s}an Knop}
\author[1]{Jan Maty\'a\v{s} K\v{r}i\v{s}\'{t}an}
\author[1]{Nikolaos Melissinos}
\author[1]{Michal Opler}
\affil[1]{Department of Theoretical Computer Science, Faculty of Information Technology, Czech Technical University in Prague, Prague, Czech Republic}

\date{}

\maketitle

\begin{abstract}
In the \MAPF problem, we focus on efficiently finding non-colliding paths for a set of $k$ agents on a given graph $G$, where each agent seeks a path from its source vertex to a target.
An important measure of the quality of the solution is the length of the proposed schedule~$\ell$, that is, the length of a longest path (including the waiting time).
In this work, we propose a systematic study under the parameterized complexity framework. The hardness results we provide align with many heuristics used for this problem, whose running time could potentially be improved based on our fixed-parameter tractability results.

We show that \MAPFShort{} is \W[1]-hard with respect to~$k$ (even if~$k$ is combined with the maximum degree of the input graph).
The problem remains \NP-hard in planar graphs even if the maximum degree and the makespan~$\ell$ are fixed constants.
On the positive side, we show an FPT algorithm for $k+\ell$.

As we delve further, the structure of~$G$ comes into play.
We give an \FPT{} algorithm for parameter $k$ plus the diameter of the graph~$G$.
The \MAPFShort{} problem is \W[1]-hard for cliquewidth of $G$ plus~$\ell$ while it is \FPT{} for treewidth of~$G$ plus~$\ell$.
\end{abstract}

{\keywords{
vertex-disjoint paths, motion planning, synchronised robots, multiagent, pathfinding.
}}

\section{Introduction}
In this paper, we study the \MAPF{} (\MAPFShort) problem.
\MAPFShort{} has many real-world applications, e.g., in warehouse management~\cite{WurmanDM08,LiTKDKK21}, airport towing~\cite{MorrisPLMMKK16}, autonomous vehicles, robotics~\cite{VelosoBCR15}, digital entertainment~\cite{MaYCKK17}, and computer games~\cite{SnapeGBLM12}.
In \MAPFShort{} the task is to find non-colliding paths for a set of $k$ agents on a given graph~$G$, where each agent seeks a path from its source vertex to a target.

This important problem has been, to the best of our knowledge, formally introduced about 20 years ago and attracted many researchers since then.
Nowadays, there are numerous variants of the formal model for the problem; see~\cite{SternSFKMWLACKBB19}.
It is not surprising that the \MAPFShort{} problem is in general \NP-complete~\cite{Surynek10}; the hardnes holds even in planar graphs~\cite{Yu16}.
Therefore, there are numerous heuristic approaches that allow us to efficiently obtain a useful solution for the given input; see, e.g., the survey~\cite{Stern19} for overview of such results.
A multitude of techniques was used to tackle \MAPFShort---A*-based algorithms~\cite{HartNR68}, SAT-based algorithms~\cite{SurynekSFB17}, scheduling approach~\cite{BartakSV18}, SMT-solvers~\cite{Surynek19,Surynek20}, to name just a few.
Our work focuses on exact algorithms, that is, the aim is to return an optimal solution and the central question is on which kinds of inputs this can be done in an efficient way and where this is unlikely.
While doing so, we initiate parameterised analysis of the \MAPF{} problem with the focus on natural and structural parameters.
Note that the \MAPFShort{} problem features two natural parameters---the number of agents~$k$ and the total length of the schedule~$\ell$ (also called \emph{makespan}).
It is worth noting that two versions of \MAPFShort{} are usually studied in the literature---one allows swapping two agents along an edge while this is prohibited in the other.
Roughly speaking, the first version treats the input topology as a bidirected graph while the other as undirected.
Most of our algorithmic results are independent of allowing or forbidding swaps; therefore, we reserve the name \MAPF{} for the version where we do not care about swaps.
If the result holds only with swaps allowed, we use \MAPFSwap{} to refer to that specific version fo the problem while we use \MAPFNoSwap{} for the version where we explicitly do not allow swaps.

\subsection*{Our Contribution}
It is well-known that \MAPFShort{} can be reduced to a shortest-path problem in a graph known as the $k$-agent search space.
Then, one can apply any algorithm to find a shortest path in this graph to find a solution to the original problem, e.g., using A*.
The main downside of this approach is the size of the search space.
We first show that it is unlikely to desing an efficient pruning algorithm for the $k$-agent search space:

\begin{restatable}{theorem}{thmwhardnumberagentsDelta}\label{thm:w-hardness-number-agents}
  The \MAPF problem is \W[1]-hard parameterised by the number of agents~$k$ plus~$\Delta(G)$ the maximum degree of the graph~$G$.
\end{restatable}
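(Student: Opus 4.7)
The plan is to give a parameterised reduction from \textsc{Multicolored Clique}, which is W[1]-hard parameterised by the number of colour classes $k'$. Given an instance with graph $H$ and partition $V_1 \cup \cdots \cup V_{k'}$, I would construct a \MAPFShort{} instance whose graph $G$ has bounded maximum degree and whose number of agents $k$ is polynomial in $k'$ (say, $k = \binom{k'}{2}$ edge-selector agents plus a few auxiliary agents), together with a carefully chosen makespan $\ell$, such that $H$ contains a multicolored clique iff the \MAPFShort{} instance admits a schedule of length at most $\ell$.

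I would introduce one \emph{edge-selector} agent $a_{ij}$ for each pair $\{i,j\}$ of colours. Each such agent lives in a gadget that offers one canonical route per edge of $H$ between $V_i$ and $V_j$. The gadget is a comb- or ladder-like subgraph built from bounded-degree pieces: a spine path of fixed length and, for each $uv \in E_H(V_i, V_j)$, a side branch whose detour length is engineered so that the canonical schedule uses exactly one branch. Picking that branch encodes the choice of the edge $uv$. The source $s_{ij}$ and target $t_{ij}$ of $a_{ij}$ are the two ends of the spine, and the total budget for this agent is tuned so that exactly one such detour must be taken.

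To enforce the clique property, the edge-selector agents would share \emph{synchronisation vertices}. For a fixed colour $i$, the agents $\{a_{ij} : j \neq i\}$ all pass through a common check vertex at a time step determined by their $V_i$-endpoint choice: $a_{ij}$ visits the check vertex at time $t(u)$ iff it selected an edge with $V_i$-endpoint $u$. Vertex-collision avoidance then forces all $a_{ij}$'s to agree on a single common $V_i$-endpoint, which is exactly the clique condition. Because the same low-degree vertex is reused at different times by different agents, this "fan-in" is realised along the time axis and does not raise the degree of $G$, so all vertex degrees remain bounded by a small constant.

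The main obstacle, I expect, is the tight makespan analysis. In a bounded-degree graph, agents can potentially swap, re-route through an unintended branch, or wait strategically, and the absence of hubs means we cannot rely on unavoidable structural bottlenecks. To rule this out I would pad each gadget so that $\ell$ leaves no slack, verifying that any deviation — a wrong branch, a detour around a check vertex, or a collision resolved by waiting — strictly exceeds $\ell$. Proving that every schedule of length at most $\ell$ corresponds to one edge from $E_H(V_i,V_j)$ per pair with consistent endpoints, and therefore to a multicolored clique in $H$, is where the bulk of the technical work will concentrate.
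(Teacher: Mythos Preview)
Your proposal takes a genuinely different route from the paper. The paper does not reduce from \textsc{Multicolored Clique} directly; instead it reduces from the \textsc{$k$-Disjoint Shortest Paths} problem on DAGs (shown W[1]-hard in~\cite{lo:21}). That problem already has the ``$k$ agents on vertex-disjoint routes'' shape, so the remaining work is only (i) to pass to the underlying undirected graph, (ii) to bound the maximum degree, and (iii) to equalise all source--target shortest-path lengths so that the makespan bound forces every agent onto a shortest path. Step~(ii) is done by hanging a pair of complete binary trees of height $\lceil\log\Delta\rceil+1$ off each high-degree vertex and rerouting its incident edges through the leaves; step~(iii) by layering the vertices along the topological order and subdividing every ``bad'' edge so that each shortest path hits every layer exactly once. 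This is considerably shorter and more robust than a direct clique reduction, and it yields $\Delta(G)\le 3$ with exactly $k$ agents.

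More importantly, your synchronisation mechanism is stated backwards. You say that $a_{ij}$ visits the common check vertex at time $t(u)$ when it selects $V_i$-endpoint $u$, and that collision avoidance then ``forces all $a_{ij}$'s to agree on a single common $V_i$-endpoint.'' But if two agents $a_{ij}$ and $a_{ij'}$ both pick endpoint $u$, they both arrive at the check vertex at the \emph{same} time $t(u)$ and collide; collision avoidance therefore forces them to pick \emph{distinct} endpoints, which is the opposite of the clique condition. To enforce equality via collisions one needs the dual construction --- e.g., an agent picking $u$ must occupy the check vertex at every time $t(u')$ with $u'\neq u$, or one needs a separate check vertex per candidate $u$ together with blocker agents --- and neither is described. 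Since this is the core consistency gadget of your reduction, the correctness argument as written does not go through; fixing it while keeping $\Delta(G)$ bounded and the number of agents bounded in $k'$ is exactly where the difficulty lies, and is what the paper sidesteps by starting from \textsc{$k$-Disjoint Shortest Paths}.
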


Indeed, the above observation yields the following.
\begin{theorem}
  The \MAPF{} problem is in \XP{} parameterised by the number of agents~$k$.
\end{theorem}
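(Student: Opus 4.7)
The plan is to formalise the $k$-agent search space alluded to in the paragraph preceding the statement and then invoke a standard shortest-path subroutine on it. Concretely, given an instance of \MAPFShort{} with agents $a_1, \ldots, a_k$ on the graph $G = (V,E)$ with $|V| = n$, I would define an auxiliary digraph $H$ whose vertices are all tuples $(v_1, \ldots, v_k) \in V^k$ such that the $v_i$ are pairwise distinct (these represent legal joint positions of the agents at a single time step). A directed edge from $(v_1, \ldots, v_k)$ to $(v_1', \ldots, v_k')$ is placed whenever the simultaneous move ``agent $i$ goes from $v_i$ to $v_i'$'' is legal, i.e.\ each $v_i'$ is either $v_i$ or a neighbour of $v_i$ in $G$, the tuple $(v_1', \ldots, v_k')$ is again collision-free, and---depending on whether we consider \MAPFSwapShort{} or \MAPFNoSwapsShort---the edge set does/does not forbid a swap $v_i' = v_j$ and $v_j' = v_i$ for some $i \neq j$.

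Next I would verify the obvious correspondence: a schedule of makespan $\ell$ for the original \MAPFShort{} instance is exactly a walk of length $\ell$ in $H$ from the source configuration $(s_1, \ldots, s_k)$ to the target configuration $(t_1, \ldots, t_k)$. Hence solving \MAPFShort{} reduces to a single-source single-target shortest walk computation in $H$, which is achievable by a BFS from $(s_1,\ldots,s_k)$.

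To conclude, I would bound the size of $H$. Clearly $|V(H)| \leq n^k$, and each vertex has out-degree at most $(\Delta(G)+1)^k \leq n^k$, because every agent has at most $\Delta(G)+1$ options (move to a neighbour or stay). Thus $|V(H)| + |E(H)| = n^{O(k)}$, and $H$ itself can be constructed within the same time bound by enumerating all candidate configurations and transitions and checking the (polynomial-time) validity conditions. BFS then runs in $n^{O(k)}$ time, which is exactly what an \XP{} algorithm parameterised by $k$ requires.

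There is no real obstacle here: the only points demanding a moment of care are (i) making the transition rule match the intended variant regarding swaps, so that the same construction covers both \MAPFSwapShort{} and \MAPFNoSwapsShort{} as promised in the introduction, and (ii) ensuring that ``makespan'' is measured consistently with the number of edges of the walk in $H$, so that BFS genuinely minimises the quantity of interest.
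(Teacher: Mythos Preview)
Your proposal is correct and is precisely the argument the paper has in mind: the theorem is presented as an immediate consequence of the observation that \MAPFShort{} reduces to a shortest-path problem in the $k$-agent search space, and you have simply made that reduction explicit with the appropriate size bounds. The paper gives no further details beyond this, so your write-up is in fact more complete than the paper's own treatment.
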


We continue the study of classical complexity of \MAPFShort{} with focus on the structure of the input graph~$G$.
Driven by some applications, one is interested in specific graph classes such as planar graphs or trees.
Sadly, we show that the \MAPFShort{} problem remains intractable in both these graph classes even if some other parts of the input are fixed constants.

\begin{restatable}{theorem}{thmMAPFisNPCPlanarMakespanDelta}\label{thm:MAPFisNPC:PlanarMakespanDelta}
  The \MAPFSwap{} problem remains \NP-complete even if the input graph~$G$ is planar, $\ell = 3$, and $\Delta(G) = 4$.
\end{restatable}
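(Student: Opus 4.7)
My plan is to reduce from a planar bounded-occurrence variant of 3-SAT (for instance, Planar 3-SAT with each variable appearing in at most three clauses, which is well-known to be \NP-complete). Membership in \NP{} is immediate: a schedule of makespan~$3$ consists of at most $3k$ moves and can be verified in polynomial time. Given a formula $\varphi$, the aim is to construct in polynomial time a \MAPFSwap{} instance on a planar graph $G$ with $\Delta(G)\le 4$ and target makespan $\ell=3$ that is solvable if and only if $\varphi$ is satisfiable.

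I would design three types of gadgets. A \emph{variable gadget} for each variable $x$ is a short cyclic structure on which a group of agents must all shift one position simultaneously to reach their targets, either clockwise or counter-clockwise; the direction of rotation encodes the truth value of~$x$ and is the only free choice of the gadget. A \emph{clause gadget} for each clause $C$ contains a clause agent whose source and target lie at distance at most~$3$ and are joined by three internally disjoint paths, each passing through a vertex shared with the variable gadget of the corresponding literal. Whether the clause agent can use a given literal's path is determined by the occupancy schedule of that shared vertex, which in turn depends on the rotation direction of the associated variable's gadget, so that precisely the true literals expose a free path for the clause agent. The swap moves available in \MAPFSwap{} greatly facilitate the design of such compact variable cycles where all agents rotate in a single step.

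Because $\ell=3$, the classical long ``wire'' gadgets that relay a signal across many edges are infeasible, so the reduction must be strictly local: each clause's literal vertex is inlined directly into the corresponding variable gadget, and the variable gadget is routed in the plane to meet all its incident clause gadgets. To accommodate arbitrary planar layouts, I would concatenate additional rotation-preserving 4-cycles along the variable gadget, each still forcing uniform rotation; this lets a variable gadget stretch along a planar path following the planar embedding of the incidence graph of $\varphi$, while keeping the maximum degree at most~$4$ by allotting at most three internal edges per gluing vertex.

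For correctness, a satisfying assignment induces a rotation direction for each variable gadget that frees one literal vertex per satisfied clause at the right time step; the corresponding clause agent takes that free path and reaches its target within $3$ steps. Conversely, any valid schedule of makespan~$3$ must, for each variable, pick a uniform rotation direction of its cycle (else some agent in the gadget misses its target), inducing a truth assignment; since every clause agent reaches its target, the path it uses witnesses a true literal of its clause. The main obstacle I anticipate is the delicate timing of occupancy of the shared vertices: within only three time steps the construction must guarantee that a shared vertex is free for the clause agent \emph{exactly} in the rotation direction corresponding to a satisfying literal. Coordinating these per-step occupancies while preserving planarity and $\Delta(G)\le 4$ across many gadgets glued along a planar embedding is the core technical challenge of the reduction.
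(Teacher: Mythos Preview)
Your proposal is correct and takes essentially the same approach as the paper: a reduction from Planar 3-SAT in which each variable gadget is a cyclic structure whose agents must all rotate in one of two directions to reach their targets in three steps (encoding the truth value), and each clause gadget carries agents that can meet the makespan-$3$ bound only by routing through a vertex of an adjacent variable gadget that is free precisely when the corresponding literal is satisfied. The paper's concrete gadgets differ in implementation detail---chains of $6$-cycles for the variable gadget and four agents per clause (three literal agents each with a direct path and a detour through the variable gadget, plus a ``blocker'' agent $a^c$ that forces at least one detour) rather than your single clause agent with three internally disjoint literal-paths---but the mechanism, the timing argument, and the planarity/degree bookkeeping are the same.
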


It was known~\cite{MaTSKK16} that \MAPF{} is \NP-complete when $\ell=3$ (in general graphs).
The \NP-hardness for planar graphs~\cite{Yu16} was recently improved to constant degree~\cite{EibenGK23}; there $\ell=26$ and $\Delta(G) = 4$.
However, \Cref{thm:MAPFisNPC:PlanarMakespanDelta} features all three of these properties.
Furthermore, we believe that our reduction is simpler than the one of Eiben et al.
It is not hard to see that instance with makespan~$2$ and swaps are allowed can be solved in polynomial time via a reduction to Hall's Marriage Theorem (which is used to find suitable middle point of all of the paths from sources to destinations).
Thus, with the above theorem we get a tight dichotomy result in classical complexity.

Moreover we observe that, surprisingly, when swapping is not allowed, the problem becomes hard even for $\ell=2$.
Note that for $\ell=1$ it is trivial.

\begin{restatable}{theorem}{thmMAPFisNPCPlanarMakespanTwo}\label{thm:NP-hard-makespan-2}
 The \MAPFNoSwap{} problem remains \NP-complete even if the input graph~$G$ is planar, $\ell = 2$ and $\Delta(G) = 5$.
\end{restatable}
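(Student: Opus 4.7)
The plan is to reduce from a restricted planar satisfiability problem. Since the \MAPFSwap{} variant with $\ell = 2$ is polynomial (the authors note that Hall's Marriage Theorem suffices to find the time-$1$ middle assignment), the reduction must use the no-swap rule essentially, which translates into constraints on pairs of agents beyond the pure ``distinct middles'' matching condition.

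I would start from Planar 3-SAT with every literal occurring in at most one clause, which is obtainable from ordinary Planar 3-SAT by variable duplication and short equality-enforcing clauses. For each variable $x$, I would design a variable gadget that forces a binary choice via a forced swap: place two agents $P, Q$ on adjacent vertices $u, v$ with targets swapped, leaving only two common neighbors $w_x^T, w_x^F$ as possible detours. The no-swap rule forbids $P$ and $Q$ from exchanging along $uv$, while vertex capacity forbids them from using the same intermediate, so $P$ and $Q$ pick different middles from $\{w_x^T, w_x^F\}$, and this assignment encodes the truth value of $x$.

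For each clause $C = (\alpha_1 \vee \alpha_2 \vee \alpha_3)$, I would add a clause gadget whose agent has source and target at distance $2$ and whose only admissible middles are three \emph{literal vertices}, each identified with (or shortly connected to) the middle of the corresponding variable gadget. By arranging that literal $\alpha_j$ is true iff its associated middle is free at time $1$, the clause agent routes successfully iff $C$ is satisfied. The overall graph is planar because the source instance is planar and each gadget is locally planar; the degree bound $\Delta(G) \le 5$ follows because each middle inherits at most two edges from its variable gadget and at most two from one clause (degree at most $4$), clause sources/targets have degree $3$, and all other gadget vertices are small-degree by construction.

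The main obstacle is making the variable gadget crucially depend on no-swap, since any gadget that reduces to plain bipartite matching would contradict the easy case of \MAPFSwap{} at $\ell = 2$. Thus the two-agent forced-swap design is essential, and one must carefully eliminate unwanted alternative schedules (such as $P$ going direct along $uv$ while $Q$ detours via some $w_x^b$, or one of them waiting a step) by suitable placement of edges and, if needed, an auxiliary obstructing agent. A secondary obstacle is handling variables with many occurrences within the degree-$5$ bound, which requires a small equality-enforcement subgadget that makes duplicated copies agree while preserving the planar layout.
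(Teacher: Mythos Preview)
Your plan follows the right template—reduce from Planar 3-SAT, encode truth values by a binary choice forced by the no-swap rule, and let each clause agent route through one of three literal middles—and is close in spirit to the paper's proof. A minor remark first: your worry about ``unwanted alternative schedules'' in the two-agent gadget is misplaced. You do not need $P$ and $Q$ each to sit on a distinct middle; you only need that \emph{at least one} of $w_x^T,w_x^F$ is occupied by a variable agent at time~$1$. The only forbidden time-$1$ states for $(P,Q)$ are $(u,v)$ (would force a swap at step~$2$) and $(v,u)$ (a swap at step~$1$); in every remaining feasible state one of $P,Q$ is on a middle, and that invariant already suffices for the backward direction.

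The genuine gap is the handling of multiple occurrences. Your gadget gives each variable exactly two ports $w_x^T,w_x^F$, and within the degree bound each port can be wired to at most one clause. That means you are effectively reducing from (Planar) 3-SAT with every variable in at most two clauses, which is in \textsf{P} (indeed always satisfiable by Tovey's Hall-type argument), so the reduction as stated proves nothing. You defer the fix to ``variable duplication and short equality-enforcing clauses'' together with an unspecified ``equality-enforcement subgadget,'' but this is precisely the hard part: such a subgadget must, within makespan~$2$, force two independent two-agent cells to make the same binary choice, remain planar, and keep every vertex at degree $\le 5$—and once a middle already carries two variable-gadget edges and two clause-gadget edges, there is no room to attach further equality machinery. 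You give no construction here.

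The paper avoids this entirely with a different variable gadget. For a variable $x$ with $m(x)$ occurrences it builds one cyclic gadget of $m(x)$ cells glued along the cycle; each cell carries four agents and a shared pivot vertex $w^x_i$ that is the unique alternative middle for two competing agents. Fixing one agent's route in cell~$i$ forces the other, which in turn forces an agent in cell~$i{+}1$, and the constraint propagates all the way around the cycle (their Claim~\ref{cl:makespan-2-variable-agents}). Thus all $m(x)$ cells synchronize to the same truth value with no separate equality gadget, and each occurrence gets its own dedicated degree-$5$ literal vertex $u^x_{i,2}$ or $v^x_{i,2}$ to attach to its clause.
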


The proof of Theorem~\ref{thm:NP-hard-makespan-2} is achieved by slightly adjusting the gadgets used in the proof of Theorem~\ref{thm:MAPFisNPC:PlanarMakespanDelta}.

\begin{restatable}{theorem}{thmMAPFisNPCTreesDelta}\label{thm:np-hard-trees}
  The \MAPFNoSwap{} problem remains \NP-complete even if $G$ is a tree of maximum degree $\Delta(G)=5$.
\end{restatable}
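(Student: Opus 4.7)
The plan is to first establish membership in NP, which is standard: any feasible \MAPFNoSwap instance on an $n$-vertex tree admits an optimal schedule of polynomial length (in $n$ and $k$), and such a schedule is a polynomial-size certificate that can be verified in polynomial time by checking vertex- and edge-conflicts (including swaps) at each time step.

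For NP-hardness, the plan is to reduce from a suitable NP-hard problem such as 3-SAT (or a permutation-style scheduling problem like single-track train scheduling). The key observation driving the construction is that on a tree the route of each agent is uniquely determined, namely the unique path from its source to its target. Hence all combinatorial difficulty must be encoded into the \emph{timing} of the agents, that is, who waits where and for how long. The underlying tree will consist of a long ``corridor'' (a backbone path) with several small subtrees attached as ``sidings''. Because swaps are forbidden, any two agents traversing the corridor in opposite directions must pass at a branching vertex, where one agent temporarily steps into a side-branch while the other crosses; the scheduling of which pairs pass at which sidings becomes the main source of non-determinism.

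The gadgets are as follows. A \emph{passing gadget} is simply a branching vertex of degree three on the backbone with one attached subtree, and it allows a single pair of opposite-direction agents to cross within a tight time budget. A \emph{variable gadget} assigns to each variable $x_i$ a pair of agents whose target makespan forces exactly one of two mutually exclusive passing orderings to be used, encoding the truth value of $x_i$. A \emph{clause gadget} concentrates three ``literal agents'' at a designated synchronization vertex such that all three can reach their targets within the target makespan $\ell$ if and only if at least one of the variable gadgets that feeds the clause has chosen the ``true'' ordering, because otherwise two literal agents are forced to meet on an edge with no siding nearby. The overall makespan $\ell$ is chosen tight enough that no further slack is available. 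By arranging the backbone, variable subtrees, and clause subtrees so that every branching has at most two ``backbone'' neighbours and at most three ``side'' neighbours, we keep $\Delta(G)\le 5$.

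The main obstacle I anticipate is controlling long-range interactions between gadgets: unlike the planar/low-makespan reductions in Theorems~\ref{thm:MAPFisNPC:PlanarMakespanDelta} and~\ref{thm:NP-hard-makespan-2}, here the makespan is unbounded, so delays introduced in one gadget can cascade across the whole tree. To manage this I would (i) inflate distances between gadgets, choosing subtree diameters large enough that only the intended interferences fit inside the target makespan $\ell$, and (ii) insert ``filler'' agents that keep unused branches occupied, so that a variable or clause gadget cannot be subverted by agents from another gadget sneaking into its sidings. Verifying that the chosen $\ell$ together with the size inflation yields an \emph{if-and-only-if} correspondence with satisfiability, without blowing up the instance super-polynomially, is the most delicate step of the argument.
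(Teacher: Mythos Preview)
Your proposal is not a proof but a strategy sketch, and the crucial step---the actual gadget construction and the verification that the interactions are confined---is explicitly left undone. You yourself identify ``long-range interactions'' as the main obstacle and then say only that you would inflate distances and add filler agents; you never show that this works. On a tree with unbounded makespan, inflating distances does not by itself localise gadgets: every agent still has a unique route, so two gadgets that share even a single backbone edge interact for the full makespan, and filler agents that block sidings must themselves reach targets, creating new interference. Making this precise is exactly the hard part of a direct 3-SAT reduction, and nothing in the proposal suggests how to close it.

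The paper avoids all of this by taking a completely different route. It does \emph{not} reduce from 3-SAT. Instead it reduces from \MAPFSwap on trees (equivalently, \textsc{Parallel Token Swapping} on trees), which was already shown NP-hard in~\cite{ADKLLMRWW22} for $|A|=|V|$. Given a swap-allowed instance on a tree $T'$, the paper replaces each vertex--children edge bundle by a complete binary tree of height $h\approx\log\Delta(T')$, which both bounds the degree by~$3$ and creates enough room for two agents to cross without swapping. It then attaches two long paths to every new binary-tree vertex and populates them with \emph{path agents} whose source--target distance equals the target makespan exactly; these agents are therefore forced to march in lockstep and occupy each binary-tree vertex at prescribed times (multiples of $m=h+2$). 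This rigid clockwork forces the original agents to sit on original vertices of $T'$ at those same multiples and to move between adjacent original vertices in each block of $m$ turns, which is equivalent to one step of the swap-allowed instance. The degree bound of~$5$ comes from the binary tree (degree~$3$) plus the two attached paths.

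The conceptual gain of the paper's approach is that the ``if and only if'' correspondence is essentially mechanical once the path agents are in place, because their schedules are completely determined by the makespan bound; there is no delicate balancing of slack across gadgets. Your 3-SAT approach, if it can be made to work at all, would require precisely that balancing, and the proposal does not supply it.
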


Note that the \MAPFSwap was recently shown to be \NP-complete for the case where the imput graph is a tree and the number of agents is equal to the number of vertices of that tree~\cite{ADKLLMRWW22}. Our Theorem~\ref{thm:np-hard-trees} differs from this work as, apart from treating the non-swapping version, it also tackles the problem for an arbitrary number of agents.

At this point, we see that none of the standard parameters---$k$ or $\ell$---is a good parameter alone.
We complement this by showing that the combination of the two parameters results in fixed-parameter tractability.
\begin{theorem}\label{thm:fpt-agents-makespan}
  The \MAPF{} problem is in \FPT{} parameterised by the number of agents~$k$ plus the makespan~$\ell$.
\end{theorem}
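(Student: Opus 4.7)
The plan is to translate the problem into finding $k$ vertex-disjoint paths in the \emph{time-expanded graph} $G^{*}$ with vertex set $V(G)\times\{0,1,\ldots,\ell\}$ and an arc from $(u,t)$ to $(v,t+1)$ whenever $u=v$ or $uv\in E(G)$. A valid \MAPFSwap{} schedule of makespan $\ell$ then corresponds bijectively to $k$ pairwise vertex-disjoint directed paths, one from $(s_i,0)$ to $(t_i,\ell)$ per agent $i$. To handle \MAPFNoSwap{} uniformly in the same framework, for every edge $uv\in E(G)$ and every time $t$ I subdivide the two crossing arcs $(u,t)\to(v,t+1)$ and $(v,t)\to(u,t+1)$ through a single shared midpoint vertex; vertex-disjointness then automatically forbids swaps while leaving waiting arcs untouched. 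The crucial structural observation is that any solution uses a set $F$ of only $K = O(k\ell)$ vertices of $G^{*}$, independently of $|V(G)|$.

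This small \emph{footprint} makes color coding directly applicable. I color $V(G^{*})$ independently and uniformly at random with $K$ colors. With probability $K!/K^{K} = 2^{-O(k\ell)}$, the footprint of any fixed optimal solution is rainbow, i.e., the $K$ vertices receive all $K$ colors exactly once. Conditioned on this event, I recover the solution by two nested dynamic programs. First, for each agent $i$, each vertex $v$, each time $t$, and each set $S\subseteq[K]$ with $|S|=t+1$, I compute the Boolean $f_i[v,t,S]$ indicating whether there is a path in $G^{*}$ from $(s_i,0)$ to $(v,t)$ whose vertices carry exactly the colors in $S$, via the transition $f_i[v,t,S] = [c(v,t)\in S] \wedge \bigvee_{u\in N_G[v]} f_i[u,t-1,S\setminus\{c(v,t)\}]$. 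Setting $g_i(S):=f_i[t_i,\ell,S]$, a second DP $h[i,C]=\bigvee_{S\subseteq C,\,|S|=\ell+1,\,g_i(S)} h[i-1,C\setminus S]$ assembles agents one by one while tracking which colors have been consumed; the instance is accepted iff $h[k,[K]]=1$. Both tables together need $2^{O(k\ell)}\cdot\mathrm{poly}(n)$ time.

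The single-coloring success probability $2^{-O(k\ell)}$ is boosted to~$1$ by derandomizing with a standard explicit $(|V(G^{*})|,K)$-perfect hash family of size $2^{O(k\ell)}\log|V(G^{*})|$, yielding a deterministic FPT algorithm in $k+\ell$. In my view the most delicate places are verifying that the midpoint gadget for \MAPFNoSwap{} preserves both correctness and the linear footprint bound (which becomes $K=k(2\ell+1)$ after subdivision), and checking that revisits of a vertex of $G$ by a single waiting agent do not spoil the rainbow argument---which they do not, since distinct time instants of the same vertex of $G$ are distinct vertices of $G^{*}$ and hence carry independent random colors. Beyond these sanity checks, the argument is a fairly routine application of color coding built on top of the $O(k\ell)$ footprint observation.
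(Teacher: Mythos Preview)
Your approach is correct and gives the same asymptotic running time as the paper, but it takes a different and more self-contained route. The paper simply observes that the time-expanded graph reduces \MAPFShort{} to the \textsc{Bounded Vertex Directed Multi-terminal Disjoint Paths} problem and then invokes the $2^{O(kd)}$-time algorithm of Golovach and Thilikos as a black box. You instead unroll that black box and implement color coding directly on the time-expanded graph. What you gain is a self-contained argument that does not depend on an external reference; what the paper gains is brevity and modularity, since the disjoint-paths machinery is already established. Conceptually the two are the same algorithm---Golovach--Thilikos also proceed via color coding / random separation on paths of bounded length---so the difference is one of presentation rather than of technique.

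One small technical wrinkle in your no-swap treatment: by subdividing only the two crossing arcs for each edge and leaving the waiting arcs $(u,t)\to(u,t+1)$ untouched, an agent's path in the subdivided graph has length anywhere between $\ell+1$ and $2\ell+1$, depending on how often it waits. Your first DP, however, enforces $|S|=t+1$ and your final check $h[k,[K]]$ with $K=k(2\ell+1)$ demands that \emph{all} $K$ colors be used; both constraints silently assume every path has exactly $2\ell+1$ vertices. The cleanest fix is the one the paper uses for its graph $G^\star_T(\ell)$: also route each waiting step through an intermediate copy $v_{2i-1}$, so that every path has length exactly $2\ell$ and your DP applies verbatim with $K=k(2\ell+1)$. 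Alternatively you can relax the DP to track colorful paths of variable length and accept whenever $h[k,C]$ holds for \emph{some} $C\subseteq[K]$; the success probability of a random coloring is still $2^{-O(k\ell)}$ since the footprint has at most $K$ vertices. Either way the repair is routine and your overall argument stands.
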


In the rest of the paper, we seek a good structural companion parameters to either of these.
We begin with a very restrictive parameter---the vertex cover number that constitutes a rather simple starting point for our more general results.
Here, we exploit the fact that many agents behave in a same way, i.e., they are almost anonymous.
Furthermore, we can prune the input graph~$G$, so that its size is bounded in terms of parameters (i.e., we provide a kernel).
\begin{theorem}\label{thm:MAPF:FPTk+vc}
  The \MAPF{} problem is in \FPT{} parameterised by the number of agents~$k$ plus the vertex cover number~$\vc(G)$.
\end{theorem}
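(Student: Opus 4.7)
My plan is to kernelize the instance down to a subgraph whose size is bounded by a function of $k + \vc(G)$, and then invoke the \XP{} algorithm in $k$ from above. First, compute a minimum vertex cover $C$ of $G$ in \FPT{} time, so that $V(G) \setminus C$ is an independent set. Partition $V(G) \setminus C$ into \emph{twin classes} by grouping vertices that share the same neighborhood in $C$; since there are at most $2^{\vc(G)}$ such neighborhoods, we obtain at most $2^{\vc(G)}$ classes. Inside each twin class $T_i$, the vertices other than sources and targets are fully interchangeable, so retain $C$, every source, every target, and $k+1$ arbitrary further vertices of $T_i$ (the \emph{extras}), deleting the rest. Let $G'$ be the resulting induced subgraph; then $|V(G')| \leq \vc(G) + 2k + (k+1)\cdot 2^{\vc(G)}$.

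The crux is to show that $G$ admits a schedule of makespan $\ell$ if and only if $G'$ does. The $(\Leftarrow)$ direction is immediate since $G'$ is an induced subgraph. For $(\Rightarrow)$, I convert a schedule on $G$ into one on $G'$ by maintaining, for each time step $t$ and each class $T_i$, an injection $\sigma^t_i$ from the extras of $T_i$ occupied at time $t$ into the $k+1$ retained extras, with the rule that an agent residing at a single extra from time $t$ to $t+1$ keeps the same shadow (the other agents are located on vertices preserved by the reduction and do not move). Vertex-collision-freeness is preserved automatically by injectivity, so the argument works directly for \MAPFSwap{}. The main obstacle is the \MAPFNoSwap{} variant: building $\sigma^{t+1}_i$ from $\sigma^t_i$ might inadvertently create a swap on an edge between a retained extra and a vertex $v \in N(T_i)$, namely when the shadow assigned to an arriver entering $T_i$ from $v$ coincides with the shadow just vacated by a leaver exiting $T_i$ toward $v$. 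I handle this by casting the assignment step as a bipartite matching of arrivers to free shadow slots, where each arriver forbids at most one slot (its leaver's old shadow). Denoting by $s$ the number of stayers, the number of free slots is $(k+1) - s$ while the number of arrivers is at most $k - s$, leaving one slot of slack; this gap makes Hall's condition straightforward to verify, since any subset of arrivers can lose at most one slot to shared forbidden assignments.

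Finally, the reduced instance $G'$ has size bounded by a function of $k$ and $\vc(G)$, and can be solved in time $|V(G')|^{O(k)}$ by the \XP{} algorithm in $k$, yielding the claimed \FPT{} running time in $k + \vc(G)$. The main technical subtlety is the relabeling for the no-swap variant: the naive choice of keeping only $k$ extras per twin class would saturate all shadow slots and can force a swap in $G'$ not present in the original schedule, which is precisely why one slack slot is necessary.
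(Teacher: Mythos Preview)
Your approach coincides with the paper's: both compute a minimum vertex cover, partition the remaining independent set into at most $2^{\vc(G)}$ twin classes by neighborhood in the cover, retain the cover together with all sources, all targets, and a bounded number of further representatives per class, and then argue that any schedule on $G$ can be rerouted onto the reduced induced subgraph by replacing each visit to a discarded vertex with a visit to a retained twin. The paper keeps $k$ non-terminal representatives per class (so at most $3k$ vertices per class in total) and constructs the shadow assignment greedily turn by turn, verifying only vertex-collision-freeness; you keep $k+1$ and use the resulting one-slot slack together with Hall's condition to certify that the rerouting introduces no new swaps along cover--class edges. In that respect your argument is more explicit about the \MAPFNoSwap variant than the paper's proof, which does not separately check swap-freeness of the transformed schedule, but the overall strategy and the resulting kernel size are the same.
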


It is known, that graphs of bounded vertex cover number have bounded diameter---the length of a longest shortest path in the graph (in a connected component).
We strengthen the above result by proving that diameter is a strong companion to~$k$.
Note that this yields tractability for many structural parameters including MIM-width as it implies bounded diameter.
The proof uses the fact that if a graph has many vertices (unbounded in terms of $k$ and the diameter) and small dimater, it must contain a vertex of large degree.
We use such a vertex as a hub and prove that if we first route all agents to the neighborhood of that vertex and then to their respective destinations, we obtain routes with makespan~$O(k\cdot d)$, where~$d$ is the diameter.
The theorem then follows from \Cref{thm:fpt-agents-makespan}.
\begin{theorem}
	The \MAPFShort{} problem is in \FPT{} parameterised by the number of agents~$k$ plus the diameter of~$G$.
\end{theorem}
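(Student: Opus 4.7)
The plan is to reduce to \Cref{thm:fpt-agents-makespan}: given an instance with $k$ agents on a graph~$G$ of diameter at most~$d$, I aim to show that the optimal makespan~$\ell$ is bounded by a computable function of $k+d$, so that plugging $k+\ell$ into the FPT algorithm for the number of agents plus the makespan solves the problem. The argument splits into two cases depending on the size of~$G$.

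If $|V(G)| \le f(k,d)$ for a suitable function~$f$, then the number of valid configurations (assignments of the $k$ agents to pairwise distinct vertices) is at most $f(k,d)^k$, and since an optimal schedule visits pairwise distinct configurations, we directly get $\ell \le f(k,d)^k$. Otherwise, I invoke a Moore-type bound
\[
  |V(G)| \;\le\; 1 + \sum_{i=0}^{d-1}\Delta(G)\bigl(\Delta(G)-1\bigr)^{i} \;\le\; \Delta(G)^{d+1},
\]
so that choosing $f(k,d) := (3k)^{d+1}$ guarantees the existence of a ``hub'' vertex~$v$ with $\deg(v) \ge 3k$. Both cases are handled by \Cref{thm:fpt-agents-makespan} once we exhibit a schedule of length bounded in $k+d$.

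The crux is to construct such a schedule when the hub~$v$ exists. I would proceed in two symmetric phases. In Phase~1, the agents are processed one by one; each agent~$i$ is routed along a shortest $s_i$-to-$v$ path (length at most~$d$) and then onto a currently free neighbor of~$v$, its ``parking slot''. In Phase~2 the procedure is reversed: each agent leaves its parking slot via~$v$ and follows a shortest path to its destination. Ignoring collisions, each individual routing takes $O(d)$ steps, giving a total of $O(kd)$.

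The main obstacle is executing this routing without collisions, since the path chosen for agent~$i$ may be blocked either by already-parked agents or by still-unprocessed agents sitting at their sources. The key observation is that at every moment at most~$k$ of $v$'s neighbors are occupied, leaving at least~$2k$ free parking slots; this gives enough room to temporarily evict every obstructing agent to a fresh neighbor of~$v$ using~$v$ itself as a transit point, at an extra cost of $O(d)$ steps per evicted agent. A careful bookkeeping (for instance, processing the agents in nondecreasing order of distance from~$v$) yields a total makespan of $O(\mathrm{poly}(k)\cdot d)$, and the abundance of free neighbors of~$v$ lets the same evacuation trick work for both the swap and no-swap variants, since an empty neighbor can always serve as an intermediate vertex. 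Plugging $\ell=O(\mathrm{poly}(k)\cdot d)$ into \Cref{thm:fpt-agents-makespan} finishes the proof.
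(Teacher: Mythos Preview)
Your proposal is correct and follows the same high-level strategy as the paper: split on whether $|V(G)|$ is bounded by a function of $k$ and $d$, in the large case use the Moore bound to find a hub vertex of high degree, route all agents to private neighbors of the hub and then reverse the process toward the targets, thereby bounding the optimal makespan by a function of $k$ and $d$ and reducing to \Cref{thm:fpt-agents-makespan}.

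The only real difference lies in how collisions are avoided during the routing phase. You process agents sequentially and evict obstructing agents through the hub, relying on the surplus of free hub-neighbors; this works and yields a makespan bound of $O(\mathrm{poly}(k)\cdot d)$ (in fact $O(kd)$ with the distance ordering you mention, since at most one parked agent can block a given shortest path and eviction costs two steps). The paper instead builds a small connected subgraph $H'$ of size $O(kd)$ containing all sources, targets, the hub, and $k$ fresh hub-neighbors, takes a spanning tree $T$ of $H'$, and moves all agents \emph{simultaneously} along their $T$-paths with start times staggered by one, ordered by increasing $T$-distance to the hub; a simple distance-counting argument then rules out collisions without any eviction step. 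The paper's approach is slightly slicker and gives the $O(kd)$ bound directly, while yours is arguably more intuitive and equally sufficient for the FPT claim.
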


Note that the vertex cover number~$\vc(G)$ also bounds the number of agents that can move simultaneously.
This can be generalized to a number of locally moving agents or, in other words, the size of separators in~$G$.
It is well-known that the size of vertex separators in a graph is related to a graph parameter treewidth. However, in view of our Theorem~\ref{thm:np-hard-trees}, there is little hope to conceive an efficient algorithm parameterized just by the treewidth of the input graph. To remedy this, we additionally parameterize by~$\ell$ to get tractability of \MAPFShort.

\begin{theorem}
  The \MAPF{} problem is in \FPT{} parameterised by the treewidth $\tw(G)$ plus the makespan~$\ell$.
\end{theorem}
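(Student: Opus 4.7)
The plan is to run a dynamic programming algorithm on a nice tree decomposition $(T, \{B_x\}_{x\in T})$ of $G$ of width $w=\tw(G)$, which can be computed in FPT time. For each node $x$, let $V_x$ denote the union of bags in its subtree; the state at $x$ should summarise all that is needed about a partial schedule on $V_x$ in order to extend it consistently through the rest of the graph.

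For the state itself, I would store, for each time step $\tau\in\{0,1,\ldots,\ell\}$, a partial injection $\pi_\tau\colon B_x \to \Lambda$, where $\Lambda$ is an abstract set of ``agent-slots'' of size at most $(w+1)(\ell+1)$, maintained so that the same agent keeps the same label whenever it reappears inside $B_x$ at different times. To each label $\lambda$ used in the state I would attach two bits recording whether the source, respectively the target, of the corresponding agent already lies in $V_x$. A straightforward counting argument then bounds the number of states per bag by some function $f(w,\ell)$.

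The DP transitions would follow the standard introduce/forget/join template. At an introduce node, the plan is to enumerate how the newly added vertex is labelled in each of the $\ell+1$ time layers, enforcing source/target constraints whenever the vertex equals some $s_a$ or $g_a$. At a forget node, the forgotten vertex is checked against the local adjacency constraints of $G$ (movements must go to bag-neighbours at the preceding and following times), and any label that disappears from all $\pi_\tau$ as a result must already have both of its flags set, which certifies a fully routed agent. At a join node, the two child states would be merged by identifying labels through their shared positions in $B_x$ at each time, with source/target flags combined by OR.

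The main obstacle will be that the number of agents $k$ may be as large as $n$, so global agent identifiers cannot be stored in the state without breaking the FPT bound. The key observation to bypass this is that at most $(w+1)(\ell+1)$ distinct agents can have any presence inside $B_x$ across the whole schedule, so $O(w\ell)$ abstract labels suffice; global identities are never carried in the state because source--target matching is certified locally at the bags where the source and the target of each agent are introduced and then forgotten, while overall completion is verified at the root.
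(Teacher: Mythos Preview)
Your direct dynamic programme is morally the same computation the paper ends up invoking, but the state you propose is too coarse: two bits per label (``source seen'' / ``target seen'') do not prevent the source of one agent from being paired with the target of another. Take two triangles on $\{c,u_1,u_2\}$ and $\{c,w_1,w_2\}$ sharing the vertex $c$, with agents $a\colon u_1\to w_1$ and $b\colon w_2\to u_2$ and $\ell=2$. This is a no-instance (both agents would have to occupy $c$ at time $1$), yet your DP accepts the trajectory $u_1\to u_2\to u_2$ for one label (it picks up the source-flag from $s_0(a)=u_1$ and the target-flag from $t(b)=u_2$) and $w_2\to w_1\to w_1$ for the other; when each label disappears at its forget node both flags are set, so your completion check passes even though $a$'s source has been connected to $b$'s target. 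The sentence ``source--target matching is certified locally at the bags where the source and the target of each agent are introduced'' is precisely the step that fails: nothing in your state ties the target-flag of a label back to the \emph{particular} agent whose source set the source-flag. The standard repair is to precompute, for each bag $B_x$, the at most $(w+1)(\ell+1)$ agents whose source lies in $V_x$ and target outside (or vice versa)---their trajectories must cross $B_x\times\{0,\dots,\ell\}$, so there cannot be more---and to use these concrete agent identities as the label set; this is exactly how Scheffler's treewidth DP for vertex-disjoint paths handles its terminals.

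The paper sidesteps all of this by not writing the DP out. It observes (Observation~\ref{obs:time-exp-graph}) that a \MAPFShort{} solution is the same thing as $k$ vertex-disjoint directed paths in the time-expanded graph $G_T(\ell)$ (or $G^\star_T(\ell)$ in the swap-free case), shows in Lemma~\ref{lem:time-exp-tw} that this graph has treewidth $O(\ell\cdot\tw(G))$ by replicating each bag across all layers, and then invokes the known $w^{O(w)}\cdot n$ algorithm for disjoint paths on graphs of treewidth $w$ as a black box.
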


Cliquewidth is a parameter further generalizing both vertex cover number and treewidth~\cite{CourcelleO00}.
However, the same additional parameterization as above leads here to intractability.

\begin{restatable}{theorem}{thmMAPFisWHbyCwplusMakespan}\label{thm:w-hardness-cw-makespan}
  The \MAPF problem is in \W[1]-hard parameterised by the cliquewidth $\cw(G)$ plus the makespan~$\ell$.
\end{restatable}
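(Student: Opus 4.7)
My plan is to give a parameterized reduction from \textsc{Multicolored Clique}, which is \W[1]-hard parameterized by the number $k$ of color classes. From an instance with color classes $V_1,\ldots,V_k$ and edge set $E$ I would build a \MAPF instance $(G',S,T,\ell)$ with $\cw(G')+\ell$ bounded by a function of $k$ such that a valid schedule exists iff the original graph admits a multicolored clique.

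The construction would be organized in two layers of gadgets. For every color $i$, I attach a \emph{selection gadget}: a single agent $a_i$ has to travel from its source $s_i$ to a target $t_i$ and the only time-respecting routes correspond bijectively to choosing a vertex $v\in V_i$. Crucially, the internal vertices of this gadget that encode elements of $V_i$ are arranged as twins sharing the same external neighborhood, so the entire gadget can be produced by a clique-expression using only $O(1)$ labels, regardless of $|V_i|$. For each pair of colors $(i,j)$ I then attach a \emph{verification gadget} populated with a constant number of auxiliary agents; I would design it so that $a_i$ and $a_j$, after having committed to their vertex choices in their selection gadgets, can cross the gadget within $\ell$ steps iff the two chosen vertices are adjacent in $E$. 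The gadget itself would be essentially a biclique between timed copies of $V_i$ and $V_j$, with a single "blocker" slot per non-edge enforcing the incompatibility via a forced collision. Since an agent traverses $O(k)$ verification gadgets, each of constant length, one targets $\ell=O(k)$, and combining $O(k^2)$ biclique-like pieces yields $\cw(G')=O(k^2)$.

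The main obstacle will be engineering the verification gadget so that it simultaneously (i)~encodes an arbitrary adjacency relation between $V_i$ and $V_j$ through inevitable path-collisions rather than through missing edges of $G'$, (ii)~remains sufficiently twin-symmetric that a short clique-expression exists, and (iii)~fits within a constant time slab. A naïve gadget that explicitly forbids each non-edge one by one would fragment the twin structure and blow up $\cw(G')$. I would therefore aim for a compact encoding in which all agents of a pair cross a shared timed arena with a single blocker per non-edge, keeping the construction close to a bounded union of bicliques; verifying that this really forces the intended synchronization, while still admitting a bounded clique-expression, is the heaviest piece of bookkeeping. Once the gadgets are in place, overall correctness should follow by scheduling the selection agents into disjoint time windows so that no cross-gadget interference can create spurious collisions.
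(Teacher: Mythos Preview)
Your outline diverges substantially from the paper's argument, and the divergence is precisely at the point where your sketch is least concrete.

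The paper does not reduce from \textsc{Multicolored Clique}. Instead it starts from \textsc{Bounded Edge Undirected $(s,t)$-Disjoint Paths (BEUP)}, which is already \W[1]-hard parameterised by the \emph{treewidth of its own input graph} for every fixed path-length bound $d\ge 10$. Given an instance $\langle G,s,t,k,d\rangle$, the paper subdivides every edge $e$ of $G$ by a vertex $v_e$, threads a long path through $v_e$ populated with ``edge agents'' whose forced schedule leaves $v_e$ free in at most one turn, and replaces each original vertex by $k$ twins. The $k$ ``path agents'' then have to realise $k$ edge-disjoint short $(s,t)$-paths. Because the construction only subdivides, attaches paths, and blows up vertices into twin classes, $\cw(G'')$ stays bounded by a function of $\tw(G)$, and the makespan is the constant~$2d$. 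The point is that the combinatorial complexity responsible for hardness already sits inside the source graph and is measured by the very parameter one cares about; the reduction merely has to preserve it.

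Your plan, by contrast, must \emph{manufacture} the adjacency relation of an arbitrary $k$-partite graph inside a \MAPF instance of cliquewidth $f(k)$. This is exactly the step you leave open. You correctly note that a gadget which ``explicitly forbids each non-edge one by one'' would destroy the twin structure, and then propose a ``shared timed arena with a single blocker per non-edge''. But that is the same thing: between $V_i$ and $V_j$ there may be $\Theta(|V_i|\cdot|V_j|)$ non-edges, and each blocker must be tied to one specific pair, so their start/target positions cannot be twin-equivalent. Either these $\Theta(n^2)$ blockers sit at structurally distinct places (which shatters the clique-expression), or they are distinguished only by timing (which forces $\ell$ to grow with $n$, not with $k$). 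You give no mechanism that escapes this dichotomy, and ``the heaviest piece of bookkeeping'' is in fact the entire content of the reduction.

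In short, the paper's trick is to import a problem whose hardness is already indexed by a structural graph parameter, so that nothing arbitrary needs to be encoded; your \textsc{Multicolored Clique} route would require such an encoding, and the proposal does not supply one.
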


\section{Preliminaries}
For integers $m$ and $n$, we denote~$[m:n]$ the set of all integers between $m$ and $n$, that is, $[m:n] = \{ m , m+1, \ldots, n\}$.
We use $[n]$ as a shorthand for $[1:n]$.

Formally, in the \MAPF problem we are given a graph $G=(V,E)$, a set of agents $A$, a positive integer~$\ell$, and two functions $s_0\colon A \rightarrow V$ and $t\colon A \rightarrow V$ such that for any pair $a , b \in A$ where $a \neq b$, $s_0(a) \neq s_0(b)$ and $t(a) \neq t(b)$.
Initially, each agent $a \in A$, is placed on the vertex $s_0(a)$. At specific times, called turns, the agents are allowed to move to a neighboring vertex, but are not obliged to do so. The agents can make at most one move per turn and each vertex can host at most one agent at a given turn. The position of the agents in the end of turn $i$ (after the agents have moved) is given by a function $s_i\colon A \rightarrow V$.

We consider two versions of the problem; in \MAPFSwap (\MAPFSwapShort) we allow \emph{swaps}, \emph{i.e.}, two agents to move through the same edge during the same turn, while in \MAPFNoSwap (\MAPFNoSwapsShort) we do not allow swaps. In the first case, given  $s_{i-1}(a)$ for every agent $a \in A$, \emph{i.e.}, the positions of the agents at the turn $i-1$, the positions $s_i$ are considered \emph{feasible} if $s_{i}(a)$ is a neighbor of $s_{i-1}(a)$ in $G$, for every agent $a \in A$. In the second case, the positions $s_i$ are feasible if, in addition to the previous, there is no pair of agents $a,b \in A$ such that $s_{i}(a) = s_{i-1}(b)$ and $s_{i}(b) = s_{i-1}(a)$.

We say that a sequence $s_1,\ldots,s_{\ell}$ is a solution of $\langle G, A, s_0, t, \ell \rangle$ if $s_i$ is considered feasible for all $i \in [\ell]$ and $s_\ell = t$. Also, a feasible solution $s_1,\ldots,s_{\ell}$ has \emph{makespan}~$\ell$.
Our goal is to decide if there exists a solution of makespan~$\ell$.

\paragraph*{Parametrized Complexity.}
Parametrized complexity is a computational paradigm that extends classical measures of time (and space) complexity, aiming to examine the computational complexity of problems with respect to a secondary measure---the parameter.
Formally, a parameterized problem is a set of instances $(x,k) \in \Sigma^* \times \mathbb{N}$, where $k$ is called the parameter of the instance.
A parameterized problem is \emph{fixed-parameter tractable} if it can be determined in $f(k)\cdot\operatorname{poly}(|x|)$ time for an arbitrary computable function $f\colon \mathbb{N}\to\mathbb{N}$.
Such a problem then belongs to the class \FPT.
A parameterized problem is \emph{slicewise polynomial} if it can be determined in $|x|^{f(k)}$ time for a computable function $f\colon \mathbb{N}\to\mathbb{N}$.
Such a problem then belongs to the class \XP.
A problem is presumably not in \FPT{} if it is shown to be \W[1]-hard (by a parameterized reduction).
We refer the interested reader to now classical monographs~\cite{CyganFKLMPPS15,Niedermeier06,FlumG06,DowneyF13} for a more comprehensive introduction to this topic.

\paragraph*{Structural Parameters.}
Let $G=(V,E)$ be a graph. A set $U\subseteq V$ is a \emph{vertex cover} if for every edge $e\in E$ it holds that $U\cap e \not= \emptyset$. The \emph{vertex cover number} of $G$, denoted $\vc(G)$, is the minimum size of a vertex cover of $G$.

A \emph{tree-decomposition} of $G$ is a pair $(\mathcal{T},\beta)$, where~$\mathcal{T}$ is a tree rooted at a node $r\in V(\mathcal{T})$, $\beta\colon V(\mathcal{T})\to 2^{V}$ is a function assigning each node $x$ of $\mathcal{T}$ its \emph{bag}, and the following conditions hold:
\begin{itemize}
	\item for every edge $\{u,v\}\in E(G)$ there is a node $x\in V(\mathcal{T})$ such that $u,v\in\beta(x)$ and
	\item for every vertex $v\in V$, the set of nodes $x$ with $v\in\beta(x)$ induces a connected subtree of $\mathcal{T}$.
\end{itemize}
The \emph{width} of a tree-decomposition $(\mathcal{T},\beta)$ is $\max_{x\in V(\mathcal{T})} |\beta(x)|-1$, and the treewidth $\tw(G)$ of a graph $G$ is the minimum width of a tree-decomposition of $G$.
It is known that computing a tree-decomposition of minimum width is fixed-parameter tractable when parameterized by the treewidth~\cite{Kloks94,Bodlaender96}, and even more efficient algorithms exist for obtaining near-optimal tree-decompositions~\cite{KorhonenL23}.

Cliquewidth is a prominent parameter that generalizes treewidth~\cite{CourcelleO00}.
Cliquewidth corresponds to an algebraic expression describing the graph (i.e., how the graph is created from labeled vertices).
We can use (1) introducing a single vertex of any label, (2) disjoint union of two lebeled graphs, (3) introducing edges between \emph{all} pair of vertices of two labels, and (4) changing the label of \emph{all} vertices of a given label (i.e., collapsing a pair of labels).
An expression describes a graph~$G$ if $G$ the final graph given by the expression (after we remove all the labels).
The \emph{width} of an expression is the number of labels it uses.
The cliquewidth of a graph is the minimum width of an expression describing it.

\section{FPT Algorithms}
For all the results presented in this section, we give a precise running time which was not given in the introduction.

\subsection{Bounded number of agents plus makespan}

Let $G = (V,E)$ be an undirected graph and $\ell$ a positive integer.
We capture the movement of agents through $G$ via a directed graph with vertices representing positions in both place and time.
A \emph{time-expanded graph of $G$ with respect to $\ell$}, denoted $G_T(\ell)$, is a directed graph with one copy of each vertex $v \in V$ for each time step $i \in [0:\ell]$, i.e., its vertex set is $\{v_j \mid v \in V, j\in [0:\ell]\}$.
The set of vertices $\{v_j \mid v \in V\}$ for any fixed~$j$, is called a \emph{layer}.
For every edge $\{u,v\} \in V$ and every $i \in [\ell]$, the graph $G_T(\ell)$ contains two arcs $(u_{i-1}, v_i)$ and $(v_{i-1}, u_i)$.
Moreover for each vertex $v \in V$ and $i \in [\ell]$, the graph $G_T(\ell)$ contains an arc $(v_{i-1}, v_i)$.
Vertex-disjoint paths in $G_T(\ell)$ capture exactly the valid movements of agents when swaps are allowed.

We also introduce a modified version of this graph for the swap-free version of \MAPF.
A \emph{swap-free time-expanded graph of $G$ with respect to $\ell$}, denoted $G^\star_T(\ell)$, is a directed graph consisting of $2\ell+1$ layers, that we again distinguish using subscripts.
It contains a copy $v_i$ of vertex $v\in V$ for every $i \in [0:2\ell]$.
Moreover $G^\star_T(\ell)$ contains a vertex $e_{2i-1}$ for every edge $e \in E$ and every $i \in [\ell]$.
Similarly to $G_T(\ell)$, $G^\star_T(\ell)$ contains an arc $(v_{i-1}, v_i)$ for every vertex $v \in V$ and every $i \in [2\ell]$.
And finally for every edge $e=(u,v)$ in $E$ and every $i\in [\ell]$, there are four arcs $(u_{2i-2},e_{2i-1})$, $(v_{2i-2},e_{2i-1})$, $(e_{2i-1},u_{2i})$ and $(e_{2i-1},v_{2i})$ in $G^\star_T(\ell)$.
Again, it is straightforward to see that vertex-disjoint paths in $G^\star_T(\ell)$ capture exactly the valid movements of agents when swaps are disallowed.

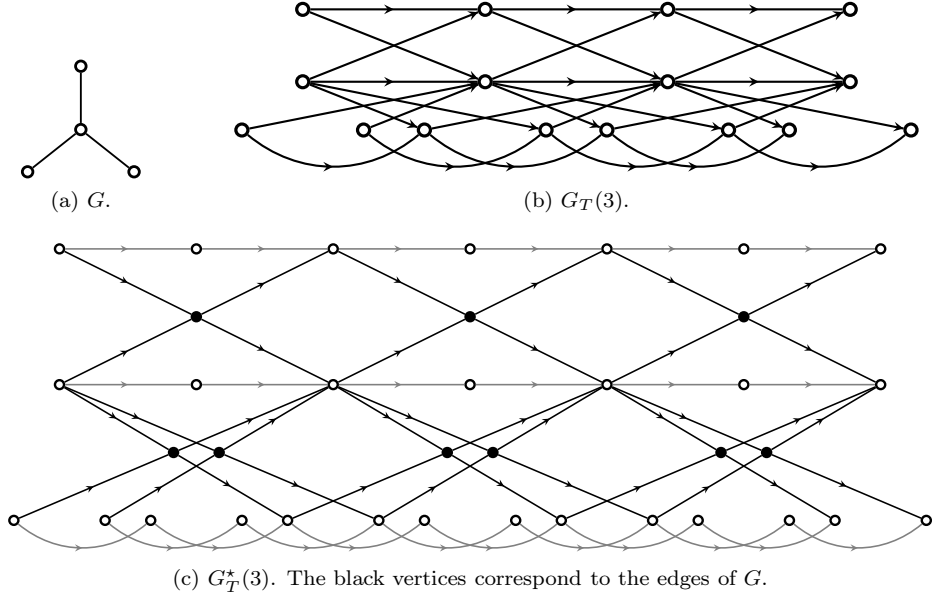
\begin{figure}[!t]
\centering

\subfloat[$G$.]{
\scalebox{0.7}{
\begin{tikzpicture}[inner sep=0.7mm]

\node[draw,black,circle,line width=1.5pt,fill=white] (v1) at (1.5,2){};
\node[draw,black,circle,line width=1.5pt,fill=white] (v2) at (1.5,0.8){};
\node[draw,black,circle,line width=1.5pt,fill=white] (v3) at (2.5,0){};
\node[draw,black,circle,line width=1.5pt,fill=white] (v4) at (0.5,0){};

\draw[line width=1pt,black] (v1) -- (v2);
\draw[line width=1pt,black] (v2) -- (v3);
\draw[line width=1pt,black] (v2) -- (v4);

\end{tikzpicture}
}
}\hspace{20pt}
\subfloat[$G_T(3)$.]{
\scalebox{0.8}{
\begin{tikzpicture}[inner sep=0.7mm]

\node[draw,black,circle,line width=1.5pt,fill=white] (v1) at (1.5,2){};
\node[draw,black,circle,line width=1.5pt,fill=white] (v2) at (1.5,0.8){};
\node[draw,black,circle,line width=1.5pt,fill=white] (v3) at (2.5,0){};
\node[draw,black,circle,line width=1.5pt,fill=white] (v4) at (0.5,0){};

\node[draw,black,circle,line width=1.5pt,fill=white] (v5) at (4.5,2){};
\node[draw,black,circle,line width=1.5pt,fill=white] (v6) at (4.5,0.8){};
\node[draw,black,circle,line width=1.5pt,fill=white] (v7) at (5.5,0){};
\node[draw,black,circle,line width=1.5pt,fill=white] (v8) at (3.5,0){};

\node[draw,black,circle,line width=1.5pt,fill=white] (v9) at (7.5,2){};
\node[draw,black,circle,line width=1.5pt,fill=white] (v10) at (7.5,0.8){};
\node[draw,black,circle,line width=1.5pt,fill=white] (v11) at (8.5,0){};
\node[draw,black,circle,line width=1.5pt,fill=white] (v12) at (6.5,0){};

\node[draw,black,circle,line width=1.5pt,fill=white] (v13) at (10.5,2){};
\node[draw,black,circle,line width=1.5pt,fill=white] (v14) at (10.5,0.8){};
\node[draw,black,circle,line width=1.5pt,fill=white] (v15) at (11.5,0){};
\node[draw,black,circle,line width=1.5pt,fill=white] (v16) at (9.5,0){};

\draw[->-,line width=1pt,black] (v1) -- (v5);
\draw[->-,line width=1pt,black] (v5) -- (v9);
\draw[->-,line width=1pt,black] (v9) -- (v13);

\draw[->-,line width=1pt,black] (v2) -- (v6);
\draw[->-,line width=1pt,black] (v6) -- (v10);
\draw[->-,line width=1pt,black] (v10) -- (v14);

\draw[->-,line width=1,black] (v4) to[out=-40,in=-140] (v8);
\draw[->-,line width=1,black] (v3) to[out=-40,in=-140] (v7);

\draw[->-,line width=1,black] (v8) to[out=-40,in=-140] (v12);
\draw[->-,line width=1,black] (v7) to[out=-40,in=-140] (v11);

\draw[->-,line width=1,black] (v11) to[out=-40,in=-140] (v15);
\draw[->-,line width=1,black] (v12) to[out=-40,in=-140] (v16);

\draw[->,line width=1pt,black] (v1) -- (v6);
\draw[->,line width=1pt,black] (v2) -- (v5);
\draw[->,line width=1pt,black] (v2) -- (v8);
\draw[->,line width=1pt,black] (v2) -- (v7);
\draw[->,line width=1pt,black] (v3) -- (v6);
\draw[->,line width=1pt,black] (v4) -- (v6);

\draw[->,line width=1pt,black] (v5) -- (v10);
\draw[->,line width=1pt,black] (v6) -- (v9);
\draw[->,line width=1pt,black] (v6) -- (v12);
\draw[->,line width=1pt,black] (v6) -- (v11);
\draw[->,line width=1pt,black] (v7) -- (v10);
\draw[->,line width=1pt,black] (v8) -- (v10);

\draw[->,line width=1pt,black] (v9) -- (v14);
\draw[->,line width=1pt,black] (v10) -- (v13);
\draw[->,line width=1pt,black] (v10) -- (v16);
\draw[->,line width=1pt,black] (v10) -- (v15);
\draw[->,line width=1pt,black] (v11) -- (v14);
\draw[->,line width=1pt,black] (v12) -- (v14);

\end{tikzpicture}
}
}

\subfloat[$G^\star_T(3)$. The black vertices correspond to the edges of $G$.]{
\scalebox{0.6}{
\begin{tikzpicture}[inner sep=0.7mm]
\node[draw,black,circle,line width=1.5pt,fill=white] (v1) at (1.5,6){};
\node[draw,black,circle,line width=1.5pt,fill=white] (v2) at (1.5,3){};
\node[draw,black,circle,line width=1.5pt,fill=white] (v3) at (2.5,0){};
\node[draw,black,circle,line width=1.5pt,fill=white] (v4) at (0.5,0){};

\node[draw,black,circle,line width=1.5pt,fill=white] (v5) at (4.5,6){};
\node[draw,black,circle,line width=1.5pt,fill=white] (v6) at (4.5,3){};
\node[draw,black,circle,line width=1.5pt,fill=white] (v7) at (5.5,0){};
\node[draw,black,circle,line width=1.5pt,fill=white] (v8) at (3.5,0){};

\node[draw,black,circle,line width=1.5pt,fill=black] (e11) at (4.5,4.5){};
\node[draw,black,circle,line width=1.5pt,fill=black] (e12) at (4,1.5){};
\node[draw,black,circle,line width=1.5pt,fill=black] (e13) at (5,1.5){};

\node[draw,black,circle,line width=1.5pt,fill=white] (v9) at (7.5,6){};
\node[draw,black,circle,line width=1.5pt,fill=white] (v10) at (7.5,3){};
\node[draw,black,circle,line width=1.5pt,fill=white] (v11) at (8.5,0){};
\node[draw,black,circle,line width=1.5pt,fill=white] (v12) at (6.5,0){};

\node[draw,black,circle,line width=1.5pt,fill=white] (v13) at (10.5,6){};
\node[draw,black,circle,line width=1.5pt,fill=white] (v14) at (10.5,3){};
\node[draw,black,circle,line width=1.5pt,fill=white] (v15) at (11.5,0){};
\node[draw,black,circle,line width=1.5pt,fill=white] (v16) at (9.5,0){};

\node[draw,black,circle,line width=1.5pt,fill=black] (e21) at (10.5,4.5){};
\node[draw,black,circle,line width=1.5pt,fill=black] (e22) at (10,1.5){};
\node[draw,black,circle,line width=1.5pt,fill=black] (e23) at (11,1.5){};

\node[draw,black,circle,line width=1.5pt,fill=white] (v17) at (13.5,6){};
\node[draw,black,circle,line width=1.5pt,fill=white] (v18) at (13.5,3){};
\node[draw,black,circle,line width=1.5pt,fill=white] (v19) at (14.5,0){};
\node[draw,black,circle,line width=1.5pt,fill=white] (v20) at (12.5,0){};

\node[draw,black,circle,line width=1.5pt,fill=white] (v21) at (16.5,6){};
\node[draw,black,circle,line width=1.5pt,fill=white] (v22) at (16.5,3){};
\node[draw,black,circle,line width=1.5pt,fill=white] (v23) at (17.5,0){};
\node[draw,black,circle,line width=1.5pt,fill=white] (v24) at (15.5,0){};

\node[draw,black,circle,line width=1.5pt,fill=black] (e31) at (16.5,4.5){};
\node[draw,black,circle,line width=1.5pt,fill=black] (e32) at (16,1.5){};
\node[draw,black,circle,line width=1.5pt,fill=black] (e33) at (17,1.5){};

\node[draw,black,circle,line width=1.5pt,fill=white] (v25) at (19.5,6){};
\node[draw,black,circle,line width=1.5pt,fill=white] (v26) at (19.5,3){};
\node[draw,black,circle,line width=1.5pt,fill=white] (v27) at (20.5,0){};
\node[draw,black,circle,line width=1.5pt,fill=white] (v28) at (18.5,0){};

\draw[->-,line width=1pt,gray] (v1) -- (v5);
\draw[->-,line width=1pt,gray] (v5) -- (v9);
\draw[->-,line width=1pt,gray] (v9) -- (v13);
\draw[->-,line width=1pt,gray] (v13) -- (v17);
\draw[->-,line width=1pt,gray] (v17) -- (v21);
\draw[->-,line width=1pt,gray] (v21) -- (v25);

\draw[->-,line width=1pt,gray] (v2) -- (v6);
\draw[->-,line width=1pt,gray] (v6) -- (v10);
\draw[->-,line width=1pt,gray] (v10) -- (v14);
\draw[->-,line width=1pt,gray] (v14) -- (v18);
\draw[->-,line width=1pt,gray] (v18) -- (v22);
\draw[->-,line width=1pt,gray] (v22) -- (v26);

\draw[->-,line width=1,gray] (v4) to[out=-40,in=-140] (v8);
\draw[->-,line width=1,gray] (v3) to[out=-40,in=-140] (v7);

\draw[->-,line width=1,gray] (v8) to[out=-40,in=-140] (v12);
\draw[->-,line width=1,gray] (v7) to[out=-40,in=-140] (v11);

\draw[->-,line width=1,gray] (v11) to[out=-40,in=-140] (v15);
\draw[->-,line width=1,gray] (v12) to[out=-40,in=-140] (v16);

\draw[->-,line width=1,gray] (v15) to[out=-40,in=-140] (v19);
\draw[->-,line width=1,gray] (v16) to[out=-40,in=-140] (v20);

\draw[->-,line width=1,gray] (v19) to[out=-40,in=-140] (v23);
\draw[->-,line width=1,gray] (v20) to[out=-40,in=-140] (v24);

\draw[->-,line width=1,gray] (v23) to[out=-40,in=-140] (v27);
\draw[->-,line width=1,gray] (v24) to[out=-40,in=-140] (v28);

\draw[->-,line width=1,black] (v1) -- (e11);
\draw[->-,line width=1,black] (v2) -- (e11);
\draw[->-,line width=1,black] (e11) -- (v9);
\draw[->-,line width=1,black] (e11) -- (v10);

\draw[->-,line width=1,black] (v2) -- (e12);
\draw[->-,line width=1,black] (v4) -- (e12);
\draw[->-,line width=1,black] (e12) -- (v12);
\draw[->-,line width=1,black] (e12) -- (v10);

\draw[->-,line width=1,black] (v2) -- (e13);
\draw[->-,line width=1,black] (v3) -- (e13);
\draw[->-,line width=1,black] (e13) -- (v11);
\draw[->-,line width=1,black] (e13) -- (v10);

\draw[->-,line width=1,black] (v9) -- (e21);
\draw[->-,line width=1,black] (v10) -- (e21);
\draw[->-,line width=1,black] (e21) -- (v17);
\draw[->-,line width=1,black] (e21) -- (v18);

\draw[->-,line width=1,black] (v10) -- (e22);
\draw[->-,line width=1,black] (v12) -- (e22);
\draw[->-,line width=1,black] (e22) -- (v20);
\draw[->-,line width=1,black] (e22) -- (v18);

\draw[->-,line width=1,black] (v10) -- (e23);
\draw[->-,line width=1,black] (v11) -- (e23);
\draw[->-,line width=1,black] (e23) -- (v19);
\draw[->-,line width=1,black] (e23) -- (v18);

\draw[->-,line width=1,black] (v17) -- (e31);
\draw[->-,line width=1,black] (v18) -- (e31);
\draw[->-,line width=1,black] (e31) -- (v25);
\draw[->-,line width=1,black] (e31) -- (v26);

\draw[->-,line width=1,black] (v18) -- (e32);
\draw[->-,line width=1,black] (v20) -- (e32);
\draw[->-,line width=1,black] (e32) -- (v28);
\draw[->-,line width=1,black] (e32) -- (v26);

\draw[->-,line width=1,black] (v18) -- (e33);
\draw[->-,line width=1,black] (v19) -- (e33);
\draw[->-,line width=1,black] (e33) -- (v27);
\draw[->-,line width=1,black] (e33) -- (v26);
\end{tikzpicture}
}
}

\caption{An example of a graph $G$, together with $G_T(3)$ and $G^\star_T(3)$.}
\label{figure:example-time-expanded}
\end{figure}

\begin{observation}
\label{obs:time-exp-graph}
Let $\mathcal{I} = \langle G, A, s_0, t ,\ell\rangle$ be an instance of \MAPFShort. Then $\mathcal{I}$ is a yes-instance of \MAPFSwapShort (\MAPFNoSwapsShort resp.) if and only if there exists a set of directed pairwise vertex-disjoint paths in $G_T(\ell)$ ($G^\star_T(\ell)$ resp.) connecting all pairs $\{(s_0(a)_0, t(a)_\ell) \mid a \in A\}$ ($\{(s_0(a)_0, t(a)_{2\ell}) \mid a \in A\}$ resop.), where the second index is used to denote the corresponding layer of $G_T(\ell)$ ($G^*_T(\ell)$ resp.).

\end{observation}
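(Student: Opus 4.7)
The plan is to prove both directions of the equivalence in parallel for both the swap and no-swap variants. The key structural fact I would exploit is that $G_T(\ell)$ and $G^\star_T(\ell)$ are both layered DAGs in which every arc goes from a layer to the next one, so any directed path from layer $0$ to layer $\ell$ (resp.\ $2\ell$) visits exactly one vertex in every intermediate layer. This means that the encoding of a schedule as a family of paths, and vice versa, is essentially forced.

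For the forward direction in \MAPFSwapShort, given a feasible schedule $s_0,s_1,\ldots,s_\ell$, I would assign each agent~$a$ the path $P_a = s_0(a)_0 \to s_1(a)_1 \to \cdots \to s_\ell(a)_\ell$ in $G_T(\ell)$. Each consecutive step is either a wait (yielding the arc $(v_{i-1},v_i)$) or a move along an edge of~$G$ (yielding an arc of the form $(u_{i-1},v_i)$), so $P_a$ is a valid directed path. Vertex-disjointness of the family $\{P_a\}_{a\in A}$ is exactly the condition that at each turn~$i$ the function $s_i$ is injective, which is part of feasibility. For \MAPFNoSwapsShort{} I would do the same but insert the appropriate intermediate vertex between layers $2i$ and $2(i+1)$: the vertex $u_{2i+1}$ if agent~$a$ waits at~$u$ during turn $i+1$, and the edge-vertex $e_{2i+1}$ if agent~$a$ traverses edge~$e$ during that turn. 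The crucial check here is that two agents can share an edge-vertex $e_{2i+1}$ only if they both use edge~$e$ during turn $i+1$, which forces either a violation of the injectivity of $s_{i+1}$ or a swap—the latter is precisely what \MAPFNoSwapsShort{} forbids.

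For the converse, suppose we are given vertex-disjoint paths connecting the prescribed pairs. I would read off a schedule by defining $s_i(a)$ to be the unique vertex whose copy appears in layer~$i$ (resp.~$2i$) on the path of agent~$a$; this is well-defined by the layered structure, and $s_0,s_\ell$ agree with the input by construction. The arcs of the time-expanded graph immediately give $s_i(a)\in\{s_{i-1}(a)\}\cup N_G(s_{i-1}(a))$ for every turn~$i$, and vertex-disjointness restricted to each layer yields injectivity of every $s_i$. In the no-swap case, an attempted swap between two agents along an edge~$e$ during turn~$i$ would force both corresponding paths to pass through the edge-vertex $e_{2i-1}$, contradicting vertex-disjointness; so the recovered schedule is swap-free.

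I do not foresee any real obstacle: the statement is a direct unfolding of the definitions of $G_T(\ell)$ and $G^\star_T(\ell)$. The only mildly delicate point is to verify that the two conflict types in \MAPFNoSwapsShort{} (vertex conflicts and swap conflicts) are both captured by the single vertex-disjointness condition in $G^\star_T(\ell)$; this is why edge-vertices are designed to be shared between \emph{both} endpoints of the corresponding edge of~$G$, and it is the one place where it pays to write the argument out explicitly.
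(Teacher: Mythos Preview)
Your proposal is correct and carefully handles both directions and both variants; the only minor remark is that in the reverse direction for $G^\star_T(\ell)$ a path could in principle go $u_{2i-2}\to e_{2i-1}\to u_{2i}$ (entering and leaving the edge-vertex on the same side), but this just encodes a wait and causes no trouble for the schedule you extract. The paper itself does not give a proof of this observation at all---it merely remarks before the statement that ``it is straightforward to see that vertex-disjoint paths in $G^\star_T(\ell)$ capture exactly the valid movements of agents when swaps are disallowed''---so your explicit unfolding of the definitions is exactly the argument the paper is leaving to the reader.
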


\begin{theorem}
\label{thm:}
The \MAPF{} problem can be solved by an FPT-algorithm parameterised by the number of agents $k$ plus the makespan $\ell$ in $O(2^{O(k\cdot \ell)} \cdot m \cdot \log n)$  time.
\end{theorem}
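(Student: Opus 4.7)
My plan is to convert the instance into a $k$-vertex-disjoint paths problem on the time-expanded graph and then attack that auxiliary problem by color coding. By \Cref{obs:time-exp-graph}, an instance $\langle G,A,s_0,t,\ell\rangle$ is positive iff $G_T(\ell)$ (respectively $G^\star_T(\ell)$ for the no-swap variant) admits $k$ pairwise vertex-disjoint directed paths joining the prescribed source-target pairs. Since $G_T(\ell)$ is layered with one vertex of each path per layer, every such path has length exactly $\ell$ and therefore visits exactly $\ell+1$ vertices; any feasible solution touches only $N:=k(\ell+1)=O(k\ell)$ vertices of $V(G_T(\ell))$. This small footprint is precisely what color coding will exploit. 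The no-swap case is identical with $N=k(2\ell+1)$.

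I would then color $V(G_T(\ell))$ uniformly at random with $N$ colors and look for a \emph{colorful} solution, one in which every vertex used receives a distinct color. A fixed solution is colorful with probability at least $N!/N^{N}\ge e^{-N}=2^{-O(k\ell)}$, so $2^{O(k\ell)}$ random colorings suffice. This is derandomized via a standard $(|V(G_T(\ell))|,N)$-perfect hash family of size $2^{O(N)}\cdot\log n = 2^{O(k\ell)}\cdot\log n$, which supplies the $\log n$ factor of the final running time.

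For a fixed coloring, I extract a colorful solution by two nested dynamic programs. First, for each agent $a\in A$ compute the table $f_a$ indexed by color subsets $S\subseteq[N]$ with $|S|=\ell+1$, where $f_a(S)=1$ iff there is a directed path from $s_0(a)_0$ to $t(a)_\ell$ in $G_T(\ell)$ whose multiset of colors equals $S$. This is the textbook colorful-path DP with states $(v,S)$, relaxed along the arcs of $G_T(\ell)$; a single pass takes $2^{O(N)}\cdot m\ell = 2^{O(k\ell)}\cdot m$ time per agent, hence the same bound across all $k$ agents. Second, run a partition DP: for $j\in[0:k]$ let $h_j(S)$ be true iff $S$ admits a partition into $j$ blocks $S_1,\dots,S_j$ of size $\ell+1$ with $f_{a_i}(S_i)=1$ for all $i\le j$, via the recurrence
\[
h_j(S) \;=\; \bigvee_{S'\subseteq S,\,|S'|=\ell+1} h_{j-1}(S\setminus S')\,\wedge\, f_{a_j}(S'),
\]
evaluated by subset enumeration in $2^{O(N)}=2^{O(k\ell)}$ time; we accept iff $h_k([N])=1$.

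Multiplying the $2^{O(k\ell)}\cdot\log n$ colorings by the $2^{O(k\ell)}\cdot m$ work per coloring collapses to the claimed $2^{O(k\ell)}\cdot m\cdot\log n$ bound, and the same recipe applied to $G^\star_T(\ell)$ handles \MAPFNoSwap{}. The only point that requires genuine care is bookkeeping the constants in the exponent so that $\ell$ and $k$ multiplicative factors coming out of the DPs are absorbed into $2^{O(N)}$ without blowing up $N$ itself; once this is done, both the perfect-hash-family derandomization and the two dynamic programs are standard, which is why I expect no substantive obstacle beyond the reduction and the bookkeeping.
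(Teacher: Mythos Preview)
Your proposal is correct. Both your argument and the paper's proof proceed via \Cref{obs:time-exp-graph}, reducing the instance to finding $k$ pairwise vertex-disjoint directed paths of length exactly~$\ell$ (respectively~$2\ell$) in the time-expanded graph. The difference is purely in how this auxiliary problem is solved: the paper invokes the $O(2^{O(k\cdot d)}\cdot m\cdot\log n)$ algorithm of Golovach and Thilikos for \textsc{Bounded Vertex Directed Multi-terminal Disjoint Paths} as a black box, whereas you unfold essentially the same algorithm by hand---color coding with $N=k(\ell+1)$ colors, a per-agent colorful-path DP, and a subset-partition DP to combine the agents, derandomised by an $(|V(G_T(\ell))|,N)$-perfect hash family. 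Your bookkeeping is sound: the $\ell$, $k$, and $N^{O(\log N)}$ factors are all absorbed into $2^{O(N)}=2^{O(k\ell)}$, and the time-expanded graph has $O(\ell m)$ arcs, giving the claimed bound. The trade-off is that the paper's proof is a two-line citation while yours is self-contained; neither is more general, since color coding is precisely what underlies the cited result.
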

\begin{proof}
We reduce to the \textsc{Bounded Vertex Directed Multi-terminal disjoint Paths (BVDMP)} problem.
Its input consists of a directed graph $G$, two positive integers $k, d$, and $k$ pairs of vertices $\{(s_i,t_i) \mid i\in [k]\}$ in~$G$; and the task is to decide whether there is a set of $k$ directed vertex-disjoint paths of length at most $d$ connecting all pairs $\{(s_i, t_i) \mid i \in [k]\}$.
The \textsc{BVDMP} problem can be solved by an FPT-algorithm parameterised by $k$ plus $d$ in $O(2^{O(k\cdot d)} \cdot m \cdot \log n)$ time~\cite{GolovachT11}.

Let $\langle G, A, s_0, t, \ell \rangle$ be an instance of \MAPFShort.
Due to Observation~\ref{obs:time-exp-graph}, it is sufficient to test whether there exists a set of $k$ directed pairwise vertex-disjoint paths either in $G_T(\ell)$ or in $G^\star_T(\ell)$ depending on whether we allow swaps or not.
Notice that such paths have length exactly $\ell$ in $G_T(\ell)$ and $2\ell$ in $G_T^\star(\ell)$.
Thus for \MAPFSwapShort, it suffices to invoke the FPT algorithm for \textsc{BVDMP} on the graph $G_T(\ell)$ with the pairs of vertices $\{(s_0(a)_0, t(a)_\ell) \mid a \in A\}$ and setting $d = \ell$.
While for \MAPFNoSwapsShort, we run the same algorithm on the graph $G^\star_T(\ell)$ with the pairs of vertices $\{(s_0(a)_0, t(a)_{2 \ell}) \mid a \in A\}$ and $d = 2\ell$.
The resulting algorithms run in $O(2^{O(k\cdot \ell)} \cdot m \cdot \log n)$ time since both $G_T(\ell)$ and $G_T^\star(\ell)$ have $O(\ell \cdot m)$ edges.
\end{proof}

\subsection{Bounded treewidth plus makespan}

\begin{lemma}
\label{lem:time-exp-tw}
For an undirected graph $G$ and a non-negative integer $p$, the treewidth of both graphs $G_T(\ell)$ and $G^\star_T(\ell)$ is at most $O(\ell \cdot \tw(G))$.
\end{lemma}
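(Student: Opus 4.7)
The plan is to lift an optimal tree decomposition $(\mathcal{T},\beta)$ of the original graph $G$ to a tree decomposition of the time-expanded graph, paying only a multiplicative factor of roughly $\ell$ in the bag size. Setting $w = \tw(G)$ and fixing a tree decomposition of $G$ of width $w$, I would define a new bag function by $\beta'(x) = \{\, v_i : v \in \beta(x),\ i \in [0:\ell]\,\}$ (or $i \in [0:2\ell]$ in the starred version). Each bag then has size at most $(\ell+1)(w+1) = O(\ell \cdot \tw(G))$, and the underlying tree is left unchanged.

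For $G_T(\ell)$ I would verify the two tree decomposition conditions directly. The arcs are of two types: \emph{stay} arcs $(v_{i-1}, v_i)$ for some $v \in V$, whose endpoints both lie in $\beta'(x)$ whenever $v \in \beta(x)$; and \emph{move} arcs $(u_{i-1}, v_i)$ arising from an edge $\{u,v\} \in E(G)$, whose endpoints both lie in $\beta'(x)$ for any bag $x$ with $\{u,v\} \subseteq \beta(x)$---and such a bag exists because $(\mathcal{T},\beta)$ is a tree decomposition of $G$. For the connectivity condition, observe that $v_i \in \beta'(x)$ if and only if $v \in \beta(x)$, so the subtree of bags containing $v_i$ coincides with the connected subtree of bags containing $v$.

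The case $G^\star_T(\ell)$ is a bit more delicate because of the edge-copy vertices $e_{2i-1}$: naively adding each $e_{2i-1}$ to a bag already containing both of its endpoints $u,v$ would blow up the width to $O(\ell \cdot \tw(G)^2)$, and avoiding this quadratic loss is what I expect to be the one substantive obstacle. The fix is to introduce, for every edge $e = \{u,v\} \in E(G)$ and every $i \in [\ell]$, a fresh leaf node attached to some node $x_e$ of $\mathcal{T}$ with $\{u,v\} \subseteq \beta(x_e)$, and to set its bag to be the five vertices $\{u_{2i-2}, v_{2i-2}, u_{2i}, v_{2i}, e_{2i-1}\}$. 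Together with the lifted bags $\beta'(x)$ defined above (now over the $2\ell+1$ layers), every arc of $G^\star_T(\ell)$ is covered: stay arcs exactly as before, and the four arcs incident to each $e_{2i-1}$ by its dedicated leaf bag. The connectivity condition holds trivially for each $e_{2i-1}$ (one bag only), and for each $v_i$ it holds because every newly attached leaf bag containing $v_i$ sits adjacent to a bag that already contained $v$, so the subtree of bags containing $v_i$ remains connected. Since the new leaf bags have constant size $5$, the overall width stays $O(\ell \cdot \tw(G))$, as claimed.
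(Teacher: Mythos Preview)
Your proof is correct and follows essentially the same approach as the paper: lift the tree decomposition of $G$ by replacing each vertex in each bag with all its time-layer copies, and handle the edge-vertices of $G^\star_T(\ell)$ by attaching leaf bags at a node $x_e$ whose bag contains both endpoints of $e$. The only cosmetic difference is that the paper attaches a single leaf per edge and places all of $e_1,e_3,\dots,e_{2\ell-1}$ into that one bag (yielding bag size at most $(2\ell+1)|\beta(x)|+\ell$), whereas you attach $\ell$ constant-size leaves per edge; both give width $O(\ell\cdot\tw(G))$.
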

\begin{proof}
Let $(T,\beta)$ be a tree decomposition of $G = (V,E)$ of optimal width.
First, let us consider the graph~$G_T(\ell)$.
Let $(T,\beta')$ be the tree decomposition obtained by replacing every occurrence of a vertex $v$ in any bag with its $\ell+1$ copies $v_0, \dots, v_\ell$.
Clearly, $|\beta'(x)| = (\ell+1) \cdot |\beta(x)|$ for every node~$x$ and it is easy to check that all the properties of tree decompositions hold.

Now, let us consider $G_T^\star(\ell)$.
We first modify the tree decomposition $(T, \beta)$ into $(T', \beta')$ such that each edge $e$ has a unique associated node $x'_e$ in $T'$ such that both endpoints of~$e$ lie in $\beta'(x'_e)$.
That is easily achieved by first choosing an arbitrary such node $x_e$ for each edge and then creating its new copy $x'_e$ attached to $x_e$ as a leaf.
We construct a tree decomposition $(T'',\beta'')$ of $G_T^\star(\ell)$ by the following modification of $(T', \beta')$.
As before, we replace every occurrence of a vertex~$v$ in any bag with its $2\ell+1$ copies $v_0, \dots, v_{2\ell}$.
Moreover, for each edge $e \in E$, we add the vertices $e_1, e_3, \dots, e_{2\ell-1}$ to the bag $\beta''(x'_e)$.
Observe that $|\beta''(x)| \le (2\ell+1) \cdot |\beta(x)| + \ell$ for every node $x$ in $T''$.
It is again straightforward to verify the required properties of tree decompositions.
\end{proof}

\begin{theorem}
The \MAPF problem can be solved by an FPT-algorithm parameterised by the treewidth~$w$ of~$G$ plus the makespan~$\ell$ in $(\ell \cdot w)^{O(\ell \cdot w)} \cdot n$ time.
\end{theorem}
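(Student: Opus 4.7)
The plan is to combine Observation~\ref{obs:time-exp-graph} and Lemma~\ref{lem:time-exp-tw} with a textbook tree-decomposition dynamic program for the $k$-\textsc{Disjoint Paths} problem.

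By Observation~\ref{obs:time-exp-graph}, the given \MAPF{} instance is equivalent to finding $k = |A|$ pairwise vertex-disjoint paths joining prescribed source--target pairs in the time-expanded graph $H$, which is either $G_T(\ell)$ or $G^\star_T(\ell)$ depending on whether swaps are permitted. Let $w' := \tw(H)$; by Lemma~\ref{lem:time-exp-tw} we have $w' = O(\ell \cdot w)$, and a tree decomposition of $H$ of width $O(w')$ can be computed in single-exponential time in $w'$ by standard treewidth-approximation algorithms. Because $H$ is layered so that every arc goes strictly from one layer to the next, every directed $s$--$t$ path in $H$ is automatically monotone in the layer index, so directedness introduces no additional difficulty for the disjoint-paths DP.

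The second step is the DP itself. On a nice tree decomposition of $H$ with bags of size at most $w'+1$, a state at a bag $X$ consists of: (i) a labelling $\lambda\colon X \to \{\bot\} \cup [k]$ indicating, for each bag-vertex, to which source--target pair (if any) it currently belongs, and (ii) a pairing on the labelled vertices recording which of them are already joined by a path fragment realised in the processed subtree. The number of such states per bag is at most $(w'+1)^{O(w'+1)}$. The transitions at introduce, forget, and join nodes follow the classical recipe for $k$-\textsc{Disjoint Paths}: introducing a vertex either starts a new fragment or extends an existing one; forgetting a vertex is legal only when the fragment through it has been fully absorbed into the pairing (or that vertex is an endpoint of its pair); and joining two children merges their pairings by union followed by transitive closure, rejecting whenever a cycle is created or the labellings disagree on a bag-vertex.

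Multiplying the $O(w' \cdot n)$ bag visits by $(w')^{O(w')}$ work per visit yields the claimed running time of $(\ell \cdot w)^{O(\ell \cdot w)} \cdot n$, and the same algorithm handles \MAPFSwap{} and \MAPFNoSwap{} uniformly by swapping $G_T(\ell)$ for $G^\star_T(\ell)$. I expect the main technical subtlety to be the bookkeeping in join nodes: one must concatenate partially completed fragments that meet at a common bag vertex without creating spurious cycles or merging fragments carrying different pair labels. These checks are, however, standard for the \textsc{Disjoint Paths} DP and are not affected by the layered, directed structure of $H$.
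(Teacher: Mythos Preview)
Your high-level plan coincides with the paper's: reduce to vertex-disjoint paths in the time-expanded graph via Observation~\ref{obs:time-exp-graph}, bound that graph's treewidth via Lemma~\ref{lem:time-exp-tw}, and solve disjoint paths by dynamic programming on a tree decomposition. The paper does not spell out the DP either; it simply invokes Scheffler's linear-time algorithm~\cite{Scheffler94} for vertex-disjoint paths on bounded-treewidth graphs, remarking that orientations are handled by a trivial modification.

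There is, however, a genuine gap in your state-count argument. You define a bag state to include a labelling $\lambda\colon X\to\{\bot\}\cup[k]$ and then claim at most $(w'+1)^{O(w'+1)}$ states per bag. But $k=|A|$ is the number of agents, which is \emph{not} part of the parameter here and may be as large as $n$. The number of such labellings is $(k+1)^{|X|}$, so the DP as you describe it runs in $n^{O(\ell w)}$ time --- this is \XP{} in $\ell+w$, not the claimed $(\ell w)^{O(\ell w)}\cdot n$. Eliminating the dependence on $k$ is exactly the non-trivial content of Scheffler's result: one argues that in a yes-instance at most $|X|$ terminal pairs can have a path crossing any given bag $X$ (otherwise reject immediately), and one arranges the decomposition so that the set of pair-labels relevant at each bag is determined in advance and has size $O(w')$, rather than being guessed from $[k]$. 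Your sketch omits this step, and without it the stated running-time bound does not follow from the DP you describe.
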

\begin{proof}
Let $\langle G, A, s_0, t, \ell \rangle$ be an instance of \MAPFShort.
First, let us consider the variant \MAPFSwapShort where swaps are allowed.
Due to Observation~\ref{obs:time-exp-graph}, it suffices to check whether there exists a set of directed pairwise vertex-disjoint paths in $G_T(\ell)$ connecting all pairs $\{(s_0(a)_0, t(a)_\ell) \mid a \in A\}$.
Moreover by Lemma~\ref{lem:time-exp-tw}, the treewidth of $G_T(\ell)$ is at most $O(\ell\cdot \tw(G))$.
Finding vertex-disjoint paths in a directed graph $G_T(\ell)$ is in FPT parameterised by the treewidth $G_T(\ell)$ via a simple modification of the undirected case~\cite{Scheffler94} accounting for the orientation of the arcs.
The algorithm runs in $(w')^{O(w')} \cdot n$ time, where $w'$ is the treewidth of $G_T(\ell)$.
Thus, we can solve the \MAPF{} problem in $(\ell \cdot w)^{O(\ell \cdot w)} \cdot n$ time, where $w$ is the treewidth of~$G$.

For the swap-free variant \MAPFNoSwapsShort, the algorithm follows via the same argument simply by using the swap-free time-expanded graph $G_T^\star(\ell)$ instead of $G_T(\ell)$.
Observe that $G_T^\star(\ell)$ has $O(\ell \cdot (n+ m))$ vertices and $O(\ell \cdot (n+m))$ edges where $n$, $m$ is the number of vertices and edges in $G$, respectively.
However, we can bound $m$ by $O(w \cdot n)$ since graphs of bounded tree-width are sparse~\cite{Kloks94} and thus the asymptotic bound on the runtime remains unchanged.
\end{proof}

\subsection{Bounded number of agents plus vertex cover number}
\begin{theorem}
The \MAPF problem, admits a kernel of size $O(2^{\vc}k)$, where $\vc$ is the size of a minimum vertex cover of the given graph and $k$ the number of agents.
\end{theorem}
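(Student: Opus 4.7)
The plan is to exploit the fact that the vertices of the independent set $I = V(G) \setminus C$, where $C$ is a minimum vertex cover of size $\vc$, partition into \emph{false-twin classes} according to their neighborhood in $C$: two vertices $u, v \in I$ are equivalent iff $N(u) = N(v)$. Since each neighborhood is a subset of $C$, there are at most $2^{\vc}$ such classes, and all vertices within a class are completely interchangeable from the viewpoint of movement.

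For each twin class $T$, I would distinguish the \emph{sources} $S^T = T \cap s_0(A)$, the \emph{targets} $G^T = T \cap t(A)$, and the \emph{extras} $E^T = T \setminus (S^T \cup G^T)$. The kernel retains all of $C$, all of $S^T \cup G^T$ (at most $2k$ vertices per class), and an arbitrary subset $\widehat{E}^T \subseteq E^T$ of size $\min(k, |E^T|)$. This gives a kernel of size at most $\vc + 2^{\vc}\cdot 3k = O(2^{\vc} k)$. One direction of correctness is immediate: the reduced graph $G'$ is an induced subgraph of $G$ with the same source/target assignments, so any solution in $G'$ is already a solution in $G$.

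The main obstacle is the converse: given a solution in $G$, I must construct one in the kernel. The key observation is that because $T$ is independent, any agent that lies in $T$ stays at a single vertex throughout any maximal interval of consecutive time steps it spends in $T$---call such an interval a \emph{visit}---and it can enter or leave $T$ only via the common neighborhood $N(T) \subseteq C$. I would reroute each visit to a deleted extra onto some vertex of $\widehat{E}^T$, consistently: the same kept extra throughout the entire visit, and distinct kept extras for visits that overlap in time; visits to kept extras are pre-assigned to themselves.

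This reduces to a precoloring-extension problem on an interval graph. The crucial capacity fact is that at every time step the number of agents simultaneously in $E^T$ is at most $\min(k, |E^T|) = |\widehat{E}^T|$, so a greedy procedure---assign any still-free kept extra when a visit begins and release it when the visit ends---always succeeds. Because all vertices of $T$ share the same neighborhood in $C$, such a reassignment automatically preserves every movement step, while sources and targets remain in place as they are never deleted; agents that never enter a deleted extra keep their original trajectories verbatim. Concatenating this argument over the (at most $2^{\vc}$) twin classes yields the desired solution in the reduced instance.
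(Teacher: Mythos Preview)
Your kernel construction is identical to the paper's, and the overall strategy---partition the independent set into twin classes, keep sources, targets, and $k$ extra representatives per class, then reroute a solution---is sound. However, there is a genuine gap in the correctness argument: the precolouring-extension step can fail.

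Take $k=2$ and a twin class $T$ with three extras, so $|\widehat{E}^T|=\min(2,3)=2$; say $v_1,v_2$ are kept and $w$ is deleted. Consider a solution in which agent $a$ sits at $v_1$ at time~$1$, moves to some vertex of $C$ at time~$2$, and sits at $v_2$ at time~$3$, while agent $b$ sits at $w$ throughout times $1$--$3$. Your scheme precolours $a$'s two (kept-extra) visits with $v_1$ and $v_2$ respectively; but then $b$'s single visit is an interval overlapping both, so it would need a colour different from $v_1$ and from $v_2$, and only two colours exist. Thus the greedy does not succeed---and in fact \emph{no} extension exists, so the issue is not merely with greediness: precolouring extension on interval graphs with clique-number many colours is not always feasible.

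The fix is easy and is exactly what the paper does (phrased turn by turn rather than via visits): do \emph{not} pin visits to kept extras to themselves. Reroute \emph{every} visit to $E^T$---whether the original vertex was kept or deleted---onto $\widehat{E}^T$ by ordinary greedy interval colouring; only positions in the cover $C$ and at source/target vertices are preserved verbatim. Without precolouring the problem is pure interval colouring, greedy uses exactly the clique number of colours, this is at most $|\widehat{E}^T|$, and the rest of your argument goes through unchanged.
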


\begin{proof}
    Let $\langle G,A, s_0 , t, \ell\rangle$ be an instance of \MAPFShort problem and $U\subseteq V(G)$ be a minimum vertex cover of $G$. For each subset $S\subset U$, let $V_S$ denote the subset of $V(G) \setminus U $ where $v \in V_S$ if and only if $N(v) = S$. Notice that any pair of vertices $u,v$ that belong in the same set $V_S$ , for some $S\subset U$, are twins. For each $S$, we select a \emph{representative} set of vertices $U_S$, defined as follows: if $|V_S|\le 3k $ then $U_{S} = V_S$, otherwise we select a $U_S$ that satisfies the following properties:
    \begin{itemize}
        \item $U_{S} \subseteq  V_S$,
        \item $U_{S} \supseteq  V_S \cap \{s_0(a), t(a)\mid a \in A \}$ and
        \item $|U_{S} \setminus \{s_0(a), t(a)\mid a \in A \} | = k$.
    \end{itemize}
    There are enough vertices to add so that the last property is satisfied since $|\{s_0(a), t(a)\mid a \in A \}| \le 2k$ and $|V_S| > 3k$. Also, any set that satisfies these conditions has size at most $3k$.
    Now, we create the new instance of \MAPFShort as follows. First, we create $H$, the induced subgraph of $G$ where $H = G[  U \cup \bigcup_{S \subseteq U} U_S]$. The new instance has the same set of agents, with the same starting and terminal positions as the original one. This gives us a new instance $\langle H,A, s_0 , t, \ell\rangle$, where $|V(H)| = |U \cup \bigcup_{S \subseteq U} U_S | = \vc + 2^{\vc} 3k$. We will show that the two instances are equivalent.

    Since $H$ is an induced subgarph of $G$, any feasible solution in $H$ is also a feasible solution in $G$. Therefore, if $\langle H,A, s_0 , t, \ell\rangle$ is a yes-instance, this is also true for $\langle G,A, s_0 , t, \ell \rangle$.
    Next, we show that any feasible solution of $\langle G,A, s_0 , t, \ell \rangle$ can be transformed into a feasible solution of $\langle H, A, s_0 , t, \ell\rangle$.
    Let $s_1, \ldots, s_\ell$ be a feasible sequence for $\langle G,A, s_0 , t, \ell \rangle$. We create a sequence $s_1',\ldots, s_\ell'$ for $\langle H, A, s_0, t, \ell \rangle$ inductively.

    In particular, we will create a sequence $s_1',\ldots, s_\ell'$ where, for any $i \in [\ell]$ and agent $a\in A$, the following holds:
    \begin{itemize}
        \item if $s_i(a) \in U \cup \{s_0(a), t(a)\mid a \in A \}$ then $s_i' (a) = s_i(a)$,
        \item otherwise, $s_{i}' (a) \in U_S$ where $S = N_{G}(s_i(a))$.
    \end{itemize}

    We start the construction with $s_1'$. First we partition the agent set $A$ in to two sets $A_1$ and $A_2$ where $a \in A_1$ if $s_1(a) \in V(H)$ and $a \in A_2$ otherwise. For any agent $a \in A_1$ we set $s_1' (a) = s_1(a)$.
    Then, we select an agent $b \in A_2$; let $s_1(b) = v $ and $S = N_G(b)$.  We are selecting a vertex $u \in U_{S}$ such that $u \neq s_1(a)$ for any $a \in A_1$. We set $s_1'(b) = u$ and we move $b$ form $A_2$ to $A_1$. Notice that, since $v \notin V(H)$, we have enough vertices in $U_S \setminus \{s_0(a), t(a)\mid a \in A \}$ to achieve this.
    We repeat until $A_2$ is empty.

    All the moves described by $s_1'$ are valid in $H$. Indeed,
    notice that for any agent $a \in A$ such that $s_1' (a)= s_1(a)$, this move is obviously valid as it was valid in $G $. Also, for any other agent the move is also valid because the selected $s_1' (a)$ is a twin of $ s_1(a)$.
    Finally, notice that for any agent $a \in A$ such that $s_1' (a) \neq  s_1(a)$, neither of $s_1' (a) $ and $ s_1(a)$ belong to $ U \cup \{s_0(b), t(b)\mid b \in A \}$.

    Assume now that we have created the sub-sequence $s_1', \ldots , s_i'$ for some $i <\ell$, all moves between $s_{j-1}'$ and $s_j'$ are valid for any $j \in [i]$ and that for any $j \in [i]$ and agent $a\in A$,
    \begin{itemize}
        \item if $s_j(a) \in U \cup \{s_0(a), t(a)\mid a \in A \}$ then $s_j' (a) = s_j(a)$,
        \item otherwise, $s_{j}' (a) \in U_S$ where $S = N_{G}(s_j(a))$.
    \end{itemize}

    We create $s_{i+1}'$ using $s_{i+1}$, $s_{i}$ and $s_{i}'$.
    We start by partitioning $A$ in to three sets $A_1$, $A_2$ and $A_3$ such that:
    \begin{itemize}
        \item $A_1 = \{a \in A \mid s_{i+1}(a) \in U \cup \{s_0(b), t(b)\mid b \in A \}\}$,
        \item $A_2 = \{a \in A \setminus A_1 \mid s_{i+1}(a) = s_{i} (a) \}$,
        \item $A_3 =  A \setminus (A_1 \cup A_2) $.
    \end{itemize}
    For any agent $a \in A_1$, we set $s_{i+1}'(a) = s_{i+1}(a)$.
    For any agent $a \in A_2$, we set $s_{i+1}'(a) = s_{i}'(a)$.
    We create an empty set $B$. Then we select an agent $b \in A_3 \setminus B$; let $s_{i+1}(b) = v $ and $S = N_G(v)$.
    We select a vertex $u \in U_{S} \setminus \{s_0(a), t(a)\mid a \in A \}$ such that $u \neq s_{i+1}'(a)$ for any $a \in A_1 \cup A_2 \cup B$.
    We set $s_{i+1}'(b) = u$ and we add $b$ to $B$.
    We repeat until $A_3 \setminus B$ is empty.

    \begin{claim}
        We always have enough vertices in the sets $U_{S}\setminus \{s_0(a), t(a)\mid a \in A \}$ in order to select vertices that have the wanted properties for the agents in the set $A_3$.
    \end{claim}

    Before we prove this claim we need to observe that, for any turn $i \in \ell$ and any agent $a \in A$, $s'_{i}(a)\in U_S$ if and only if $s_{i}(a)\in V_S$. This is obvious if $a\in A_1 \cup A_3$. To prove this for the agents in $A_2$, consider the first turn $j <i$ such that $s_{k}(a) = s_{i}(a)$ for all $j\le k\le i $. During turn $j$, $a$ belonged to $A_1 \cup A_3$ (defined during that turn) therefore it is true.

    \begin{proofclaim}
        Assume that we have an agent $a \in A_3$, $s_{i+1}(a) = v $ and $S = N_G(v)$. We consider two cases, either $|V_S| > 3k$ or not.  In the former, we know that $|U_{S}\setminus \{s_0(c), t(c)\mid c \in A \}| = k$. Therefore, since $|A\setminus\{a\}| = k-1$, even if for all agents $b \in A$ we have that $s_{i+1}'(b) \in U_{S}\setminus \{s_0(c), t(c)\mid c \in A \} $, there exists at least one vertex in $U_{S}\setminus \{s_0(c), t(c)\mid c \in A \}$ that can be assigned to $s_{i+1}'(a)$.

        In the latter cases, let $A' \subseteq A$ be the set of agents where $b \in A'$ if and only if $s_{i+1}(b) \in V_S$ and $s_{i+1}(b) \notin \{s_0(b), t(b)\mid b \in A \}$. Since $s_{i+1}'(b) \in U_S\setminus \{s_0(b), t(b)\mid b \in A \}$ only if $s_{i+1}(b) \in V_S$, we need to prove that $|A'| \le |U_S| \setminus \{s_0(b), t(b)\mid b \in A \}$. This is indeed the case, as we have assumed that $|V_S| \le 3k$, and therefore $U_S = V_S$.
    \end{proofclaim}

    Notice that, for any agent $a \in A$, either $s_{i}'(a) = s_{i}(a)$ or $s_{i}'(a)$ is a twin of $s_{i}(a)$. The same holds for $s_{i+1}'(a)$ and $s_{i+1}(a)$.
    We will show that $s_{i+1}'(a)$ is either equal to $s_{i}'(a)$ or its neighbor.
    Consider the case where $s_{i}(a) = s_{i+1}(a)$. Regardless of whether $a \in A_1$ or $a \in A_2$, we have that $s_{i}'(a) = s_{i+1}'(a)$ which is a valid move.
    Now, consider the case where $s_{i}(a) \neq s_{i+1}(a)$. In this case, we know that $s_{i}(a)$ and $ s_{i+1}(a)$ are neighbors in $G$. Since $s_{i}'(a) = s_{i}(a)$ or $s_{i}'(a)$ is a twin of $s_{i}(a)$ and the same holds for $s_{i+1}'(a)$ and $s_{i+1}(a)$, we can conclude that $s_{i}'(a)$ and $ s_{i+1}'(a)$ are neighbors in $G$. Therefore $s_{i}'(a)$ and $ s_{i+1}'(a)$ are neighbors in $H$ and the move is valid.

    Next, we show that there are no two agents that occupy the same vertex during the same turn. Notice that the the values $s_{i+1}(a)$ for $a\in A_3$ are selected in such a way that no agent of $A_3$ will ever occupy the same vertex as any other agent.
    Therefore we need to show that the same holds true for agents in $A_1 \cup A_2$.
    Assume that this is not true and there exist two agents $a,b \in A_1 \cup A_2$ such that $s_{i+1}'(a)=s_{i+1}'(b)$. We consider three cases.

    \textbf{Case 1. $\boldsymbol{{a,b} \subseteq A_1}$:} Since $s_1,\ldots,s_\ell$ is a feasible solution, we have that $s_{i+1}(a)\neq s_{i+1}(b)$.
    Furthermore, since ${a,b} \subseteq A_1$, we have that $s_{i+1}'(a) =s_{i+1}(a) \neq s_{i+1}(b) = s_{i+1}'(b) $. This is a contradiction.

    \textbf{Case 2. $\boldsymbol{a \in A_1$ and $b \in A_2}$:} Since $s_1,\ldots,s_\ell$ is feasible, we have that $s_{i}(a)\neq s_{i}(b)$ and $s_{i+1}(a)\neq s_{i+1}(b)$.

    We also have that $s_1', \ldots , s_i'$ is feasible and, thus, $s_{i}'(a) \neq s_{i}'(b)$.
    Since $a \in A_1$, we have that $s_{i+1}'(a) = s_{i+1}(a) \in U \cup \{s_0(c), t(c)\mid c \in A \}$.
    Furthermore, we have assumed that $s_{i+1}'(a)=s_{i+1}'(b)$ and $b \in A_2$. It follows that $s_{i}'(b) = s_{i+1}'(b) \in U \cup \{s_0(c), t(c)\mid c \in A \}$.
    By the hypothesis for $s_i'$ and $s_{i}'(b) \in U \cup \{s_0(c), t(c)\mid c \in A \}$ we have that $s_{i}'(b) = s_{i}(b)$. Therefore, $s_{i+1}(b) = s_{i}(b) = s_{i}'(b) = s_{i+1}'(b)= s_{i+1}'(a) = s_{i+1}(a)$. This is a contradiction as $s_{i+1}(a) \neq  s_{i+1}(b)$.

    \textbf{Case 3. $\boldsymbol{{a,b} \subseteq A_2}$:} Since $a \in A_2$, we have that $s_{i}'(a) = s_{i+1}'(a)$. Also, since $a \in A_2$, we have that $s_{i}'(b) = s_{i+1}'(b)$.
    By the induction hypothesis, we know that $s_{i}'(a) \neq s_{i}'(b)$, from which follows that $s_{i+1}'(a) \neq s_{i+1}'(b)$. This is a contradiction.

    Therefore, no two agents occupy the same vertex during the same turn. Also, notice that the wanted properties between $s_{i+1}$ and $s_{i+1}'$ hold by the construction. Finally, since $s_\ell=t$, we conclude that $s'_\ell=t$.

    This shows that any feasible solution of $\langle G,A, s_0 , t, \ell \rangle$ can be transformed into a feasible solution of $\langle H, A, s_0 , t, \ell\rangle$. Therefore, $\langle H, A, s_0 , t, \ell \rangle$ is a kernel of the claimed size.
\end{proof}

\subsection{Bounded number of agents plus diameter}

\begin{theorem}\label{thm:fpt-agents-diameter}
	The \MAPFShort problem can be solved by an FPT-algorithm parameterised by the diameter~$d$ of~$G$ plus the number of agents~$k$ in $2^{O(k^2 \cdot d \cdot \log d)} \cdot m \cdot \log n$ time.
\end{theorem}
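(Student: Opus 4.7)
The plan is to perform a case distinction on the maximum degree of~$G$, following the sketch in the introduction. Fix the threshold $\tau = 2k$: if $\Delta(G) < \tau$ the graph itself is small, whereas if $\Delta(G) \ge \tau$ a high-degree vertex can serve as a routing hub.

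\emph{Case~1: $\Delta(G) < 2k$.} A standard count of BFS layers from an arbitrary vertex yields $|V(G)| \le 1 + \sum_{i=1}^{d}(2k-1)^i \le (2k)^{d+1}$, so $n$ is bounded by a function of $k$ and $d$. We solve the instance by running BFS directly on the configuration graph, whose vertices are injective placements of the $k$ agents on $V(G)$ and whose arcs correspond to feasible simultaneous moves. Since this graph has at most $n^k \le (2k)^{(d+1)k}$ vertices and out-degree bounded by $(2k+1)^k$, the BFS runs in time $2^{O(kd \log k)}$, comfortably within the claimed bound.

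\emph{Case~2: there exists a vertex $v$ with $\deg(v) \ge 2k$.} The central claim is that every solvable instance admits a solution of makespan at most $c \cdot k \cdot d$ for an absolute constant $c$. Granting this, we set $\ell' = \min(\ell, c k d)$ and invoke Theorem~\ref{thm:fpt-agents-makespan} on the instance with makespan bound~$\ell'$. Its running time is $2^{O(k \ell')} \cdot m \log n = 2^{O(k^2 d)} \cdot m \log n$. The decision is correct because the problem is monotone in~$\ell$: any solvable instance is solvable with makespan $\le c k d$ by the claim, and we can compare against the given $\ell$ accordingly.

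To prove the key claim, designate $k$ distinct neighbors $h_1, \ldots, h_k$ of $v$ as \emph{parking spots}, one per agent, reserving the remaining $\ge k$ neighbors as \emph{spare spots}. The routing has two sequential phases. In the \emph{parking phase} we route the agents one by one from their sources, each along a shortest path through $v$ to its assigned parking spot $h_i$, in $O(d)$ turns per agent. In the \emph{delivery phase}, we again process the agents one by one: agent $a_i$ moves from $h_i$ through $v$ along a shortest path to $t(a_i)$, again in $O(d)$ turns. Whenever the chosen path is blocked by a parked agent or by an agent that has already been delivered, the blocker is temporarily relocated to a currently free spare spot via a short detour through $v$.

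\emph{Main obstacle.} The main technical challenge lies in Case~2, specifically in formalising the routing scheme while maintaining a strict $O(d)$-turn budget per agent. The delicate point is that a shortest path from $v$ to $t(a_i)$ may traverse vertices occupied by already-delivered agents at their targets or by agents still parked around~$v$, and clearing these obstructions through the spare spots must be orchestrated without cascading delays. An inductive argument on the number of processed agents, together with the invariant that at most $k$ of the $\ge 2k$ neighbors of $v$ are ever simultaneously occupied, is expected to make this work. A cleaner alternative would be to argue the existence of the required vertex-disjoint paths in the time-expanded graph $G_T(c k d)$ (or $G_T^{\star}(c k d)$) via a Hall- or flow-type argument, bypassing the explicit scheduling entirely.
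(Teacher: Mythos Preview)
Your high-level structure---case-split on the maximum degree, brute-force the small case, and in the large case bound the makespan by $O(kd)$ so that Theorem~\ref{thm:fpt-agents-makespan} applies---is exactly the paper's strategy, and your Case~1 is fine. The gap is precisely where you flag it: the routing scheme in Case~2 is not an argument yet. Sequential routing with ad-hoc detours through spare spots does cascade in general: moving a delivered blocker from $t(a_j)$ back to a spare spot via $v$ traverses a path of length up to $d$ that may itself hit other delivered agents, and ``an inductive argument \ldots\ is expected to make this work'' does not discharge that obligation. You also never address collisions with \emph{unmoved} agents during the parking phase. Finally, your threshold $\tau=2k$ is too tight to make any of this clean: with only $2k$ neighbours of $v$ you cannot guarantee that the parking spots are disjoint from the shortest paths you want to use.

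The paper fills this gap with a different, cleaner scheme that avoids sequential detours altogether. First, it raises the degree threshold to $5dk$ (equivalently, it splits on whether $n \ge (5dk)^{d+1}$), which lets it choose $k$ parking neighbours $w_1,\dots,w_k$ of the hub $v$ that are \emph{disjoint} from an auxiliary connected subgraph $H$ of size $O(kd)$ containing all sources, all targets, and $v$. Second, it takes a spanning tree $T$ of $H\cup\{w_1,\dots,w_k\}$ and moves all agents \emph{concurrently} along their unique $T$-paths to their parking leaves, with staggered start times: the agents are sorted by increasing $\dst_T(s_0(\cdot),v)$, and the $j$-th agent in this order begins moving at time $j$. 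The one-line invariant is that at every time step the moving agents are at pairwise distinct $T$-distances from $v$, hence at distinct vertices; agents that have not started yet are never on any moving agent's path by the distance ordering. The delivery phase is the time-reversal of the same argument. No detours, no cascades, and the $O(kd)$ makespan bound is immediate from $|V(T)|=O(kd)$.

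So the missing ideas are: (i) raise the degree threshold to $\Theta(dk)$ so that parking spots can be chosen disjoint from all relevant paths, and (ii) route on a spanning tree of a small auxiliary subgraph with staggered starts, rather than one agent at a time in $G$.
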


The result is a consequence of the fact that in any sufficiently large graph $G$, there is always a feasible swap-free solution with makespan at most $O(d \cdot k)$.

\begin{lemma}\label{lem:diameter-makespan}
Let $\langle G, A, s_0, t, \ell\rangle$ be an instance of \MAPFShort such that for every $a \in A$, the vertex $t(a)$ is reachable from $s_0(a)$ and $G$ has at least $(5 \cdot d \cdot k)^{d+1}$ vertices where $d$ is the diameter of~$G$ and $k$ is the number of agents.
There exists a swap-free feasible solution of makespan $O(d \cdot k)$.
\end{lemma}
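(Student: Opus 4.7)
\emph{Finding a hub.} By a Moore-type counting argument, if every vertex of $G$ had degree less than $5dk$, a BFS from any vertex would reach at most $\sum_{i=0}^{d}(5dk)^{i} < (5dk)^{d+1}$ vertices; since $G$ has diameter $d$ and $|V(G)| \ge (5dk)^{d+1}$, some vertex $h$ must have degree at least $5dk$. Among the neighborhood $N(h)$, at most $2k$ vertices coincide with sources or destinations, so I can earmark a set $P = \{p_1, \ldots, p_{3k}\} \subseteq N(h)$ of $3k$ \emph{parking spots} that contains neither any source nor any destination.

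\emph{Two-phase construction.} I build the solution in two sequential phases, each of length $O(kd)$. In the \emph{gathering phase}, I process agents one by one and route each agent from its source to a distinct spot in $P$ along a path of the form $s_0(a) \rightsquigarrow h \to p_a$ of length at most $d+1$. In the \emph{dispatch phase}, I process agents one by one and route each agent from its parking spot to its true destination along a path of the form $p_a \to h \rightsquigarrow t(a)$, again of length at most $d+1$. Between (and during) the phases, idle agents simply wait. Because only a single agent is ever moving, the construction is automatically swap-free.

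\emph{Collision avoidance.} This is the technical heart of the proof. If some source $s_0(a^\star)$ equals $h$ itself, I first move $a^\star$ in a single step onto an arbitrary $p_{a^\star}$ so that $h$ is vacated. In the gathering phase I then process the remaining agents in order of \emph{increasing} distance from $h$. When agent $a$ is processed, let $r = \dst(s_0(a), h) \le d$; the stationary obstacles split into (i) at most $k-1$ already parked agents lying in $P \subseteq N(h)$ and (ii) the sources of not-yet-moved agents, which lie at distance at least $r$ from $h$ by the ordering. Along any shortest BFS path $s_0(a) = u_0, u_1, \ldots, u_r = h$, the internal vertex $u_j$ lies at distance $r - j \in \{1, \ldots, r-1\}$ from $h$, so it cannot match any obstacle of type (ii); the only possibly blocked internal vertex is $u_{r-1}$, a neighbor of $h$. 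Since $h$ has at least $5dk$ neighbors while only $k-1$ are occupied, I can always choose a \emph{free} neighbor $w$ of $h$ that lies on some shortest path from $s_0(a)$ to $h$, or, failing that, extend the path by at most one extra step through a free neighbor of $h$ (using the triangle inequality, any neighbor of $h$ is at distance at most $r+1$ from $s_0(a)$). The routing therefore takes at most $d+2$ steps, after which agent $a$ steps onto an unoccupied $p_a \in P$, which exists because $|P| = 3k$ far exceeds the $i-1 < k$ already parked agents. The dispatch phase is handled symmetrically with $t(a)$ in place of $s_0(a)$ and processing in order of increasing distance of $t(a)$ from $h$.

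\emph{Main obstacle.} The crux is verifying that each individual routing avoids the $k-1$ stationary obstacles without incurring more than $O(d)$ steps. This relies on two facts working together: the distance-based ordering keeps sources of unmoved agents far enough from $h$ to stay off the interior of shortest $(s_0(a), h)$-paths, while the vast gap between $5dk$ and $k-1$ gives enough free neighbors of $h$ to dodge any parked agent on the final edge. The bound $(5dk)^{d+1}$ on $|V(G)|$ is calibrated precisely so that the Moore-type inequality delivers this degree.
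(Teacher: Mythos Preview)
Your approach differs from the paper's in an interesting way: the paper routes all agents \emph{concurrently} along a spanning tree of a small auxiliary subgraph containing the hub, using staggered start times and a distance-from-hub ordering to keep agents strictly separated; you instead route agents \emph{sequentially}, one at a time, each in $O(d)$ steps. Both ideas yield makespan $O(dk)$, and yours is arguably more elementary since only one agent moves at any given turn.

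However, there is a genuine gap in your collision analysis. Your fallback (``failing that, extend the path by at most one extra step through a free neighbor of $h$'') does not work as stated. If every neighbour of $h$ lying on a shortest $s_0(a)$--$h$ path happens to be an occupied parking spot, you propose to route via some other free neighbour $w$ of $h$ with $\dst(s_0(a),w)\le r+1$. But a shortest $s_0(a)$--$w$ path is not controlled by your distance-from-$h$ argument: its early internal vertices can sit at distance $\ge r$ from $h$ (and hence coincide with an unmoved agent's source), and its late internal vertices can be other neighbours of $h$ that are themselves occupied parking spots. The triangle inequality bounds the \emph{length} of the detour, not its obstacle-avoidance. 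A smaller issue: in the dispatch phase you must process agents in \emph{decreasing} order of $\dst(t(a),h)$ (the time-reversal of gathering), not increasing; otherwise already-dispatched agents sitting at nearby targets can block the interior of your $h\to t(a)$ path.

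Both issues are easily repaired within your framework. Before starting, fix one shortest path from $s_0(a)$ to $h$ and one from $h$ to $t(a)$ for every agent, and choose the parking set $P\subseteq N(h)$ to avoid not only sources and targets but also the at most $2k$ neighbours of $h$ appearing on these fixed paths. Since $|N(h)|\ge 5dk$ this still leaves at least $k$ parking spots. Then the penultimate vertex on each fixed path is never a parking spot, so your Case~A always applies; together with decreasing-distance dispatch the argument goes through cleanly. The paper sidesteps the whole issue by working inside a spanning tree, where shortest paths are unique and the distance-monotonicity argument is immediate.
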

\begin{proof}
Let us assume that $G$ is connected as otherwise, the claim follows by considering each connected component separately.
Any graph with maximum degree $\Delta>2$ and diameter~$d$ has its number of vertices bounded by the Moore bound $1 + \Delta \frac{(\Delta-1)^d-1}{\Delta - 2}$, see~\cite{HoffmanS60}.
Therefore, there is a vertex $v$ of degree at least $5 \cdot d \cdot k$ in $G$.

Without loss of generality, let $A = [k]$ be the set of agents.
For each agent $i \in A$, let $P_i$ be an arbitrary shortest path between $s_0(i)$ and $t(i)$ and let $H$ be the subgraph of $G$ obtained as the union $\bigcup_{i \in [k]} P_i$.
First, observe that $H$ contains at most $k \cdot (d+1)$ vertices since each $P_i$ is a shortest path in $G$.
If $H$ is disconnected, it consists of at most $k$ connected components and we keep connecting disconnected components using shortest paths until we make it connected.
Thereby, we obtain a connected graph on at most $2 \cdot k \cdot (d+1)$ vertices including the starting and target vertices of all agents.

Next, we construct a graph $H'$ from $H$ in the following way.
If $H$ does not contain the high-degree vertex $v$, we connect it by adding an arbitrary shortest path between $v$ and $H$.
In doing so, we increase the number of vertices by at most $d-1$.
Moreover, we add to $H'$ arbitrary $k$ neighbors $w_1, \dots, w_k$ of~$v$ that are contained neither in $H$ nor in the added shortest path between $H$ and $v$.
This is possible since there are at most $2 \cdot k \cdot (d+1) + d-1 \le 4 \cdot d\cdot k$ such vertices and $v$ has at least $5 \cdot d \cdot k$ neighbors.
In total, $H'$ contains at most $O(d \cdot k)$ vertices.

Let $T$ be an arbitrary spanning tree of $H'$ and let $i_1,\dots, i_k$ be an ordering of the set of agents $[k]$ such that $\dst_T(s_0(i_j), v) \le \dst_T(s_0(i_{j+1}),v)$ for every $j \in [k-1]$ where $\dst_T(x,y)$ is the distance between $x$ and $y$ in $T$.
In other words, we order agents in increasing order with respect to the distance between their starting positions and $v$ in $T$.

We construct a feasible solution with short makespan in the following way.
We choose to describe the movements of all agents instead of tediously defining all the functions $s_1, s_2, \ldots$.
For $j \in [k]$, the agent $i_j$ starts moving at time~$j$ and follows the shortest path from $s_0(i_j)$ to $w_{i_j}$ in~$T$ without any further delay.
We claim that there cannot be any conflicts.
Assume for a contradiction that there exist $j, j' \in [k]$ such that the agents $i_j$ and $i_{j'}$ collide at time~$p$.
Let us assume that $j < j'$.
Observe that such a collision must occur only once the agents start moving and before they reach the vertex $v$ because of the ordering by distances from $v$.
In particular the agent $i_j$ is at time $p$ in distance $\dst_T(s_0(i_j),v) - p + j - 1$ from vertex $v$ and the agent $i_{j'}$ is at distance $\dst_T(s_0(i_{j'}),v) - p + j' - 1$.
However, we have $\dst_T(s_0(i_j),v) \le \dst_T(s_0(i_{j'}),v)$ due to the way we defined the ordering and $j < j'$ by assumption.
Therefore, the distance of the agent $i_j$ is strictly smaller than that of agent $i_{j'}$ and we reach contradiction.

Once we stashed every agent $i \in [k]$ in the leaf $w_i$, we use the same strategy in reverse with respect to their target positions.
In the first half of the process, it takes $k$ turns before all the agents start moving and afterwards, they must all arrive at the leaves $w_1, \dots, w_k$ in at most $O(d \cdot k)$ turns since that is the total number of vertices in~$T$.
The same bound holds for moving the agents from $w_1, \dots, w_k$ to their target positions and thus, the total makespan of this solution is $O(d \cdot k)$.
\end{proof}

\begin{proof}[Proof of Theorem~\ref{thm:fpt-agents-diameter}]
Let $n$ denote the number of vertices of $G$.
If $n < (5\cdot d \cdot k)^{d+1}$, we simply search the $k$-agent search space of size $(d \cdot k)^{O(kd)}$ using any of the standard graph searching algorithms (BFS, A*, etc.).
Otherwise, we distinguish two cases.
Either $\ell < C \cdot d \cdot k$ where the constant~$C$ is given by Lemma~\ref{lem:diameter-makespan}, and we invoke the FPT-algorithm of Theorem~\ref{thm:fpt-agents-makespan} on the given instance.
Otherwise, we invoke the same FPT-algorithm on a modified instance with $\ell = C \cdot d \cdot k$ which is guaranteed to be equivalent by Lemma~\ref{lem:diameter-makespan}.
\end{proof}

\begin{corollary}
Let $\mathcal{G}$ be a class of bounded treedepth, modularwidth, shrubdepth, or MIM-width.
The \MAPFShort{} problem is in FPT parameterised by the number of agents $k$ if $G \in \mathcal{G}$.
\end{corollary}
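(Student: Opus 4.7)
The plan is to derive the corollary directly from Theorem~\ref{thm:fpt-agents-diameter}, by showing that a bound on any of the four listed parameters forces a bound on the diameter of each connected component of~$G$. Since agents cannot cross between connected components (a starting vertex and its target in different components immediately rules out feasibility), the instance splits along components, and it suffices to argue about one connected component at a time. Once a bound $d = f(p)$ on the diameter is obtained in terms of the structural parameter~$p$, plugging into Theorem~\ref{thm:fpt-agents-diameter} yields the claimed FPT algorithm in~$k$, with overall running time $2^{O(k^2 \cdot f(p) \cdot \log f(p))} \cdot m \cdot \log n$.

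For treedepth, I would invoke the classical fact that a graph of treedepth at most $t$ contains no path on more than $2^t$ vertices, so each connected component has diameter at most $2^t-1$. For modularwidth at most $w$, I would induct along the modular decomposition of a connected component: the quotient at the top is either complete (diameter $1$), a join (diameter at most $2$), or a prime graph on at most $w$ vertices; in the prime case, two vertices $u,v$ lying in distinct modules $M_i, M_j$ are at distance at most the distance between $M_i$ and $M_j$ in the quotient (hop between adjacent modules), while two vertices in the same module can always be connected in two hops via any neighbouring module. This yields a diameter bounded by a function of $w$. For shrubdepth, I would appeal to the tree-model of depth $d$ defining the class: adjacency is determined by the colours of endpoints and the depth of their least common ancestor, so a walk between any two vertices in the same connected component can be realised within the tree using only $O(d)$ hops, bounding the diameter by a function of the depth and the width of the model.

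The main obstacle is the MIM-width case, since bounded MIM-width does \emph{not} straightforwardly imply bounded diameter (for instance, paths have MIM-width~$1$). Here I would argue by combining the bounded induced matching number across every cut of an optimal branch-decomposition with the hub-and-spoke routing of Lemma~\ref{lem:diameter-makespan}: along any long shortest path one finds a vertex of large degree that can serve as a hub for the agents crossing that cut, which, as in the proof of Theorem~\ref{thm:fpt-agents-diameter}, allows one to replace the original instance with an equivalent one of bounded effective diameter. Alternatively, if that route cannot be made to work within the parameter, the MIM-width case can be handled by an independent algorithm performing dynamic programming along the branch-decomposition, using the fact that each separator admits only a bounded number of combinatorially distinct induced matchings and hence only a bounded number of ``interface patterns'' of $k$ agents.

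Once these four diameter (or effective-diameter) bounds are in place, the corollary follows by invoking Theorem~\ref{thm:fpt-agents-diameter} on each connected component with the bound $d = f(p)$, and the per-component FPT running times combine to an overall FPT algorithm in~$k$ for the entire graph~$G$.
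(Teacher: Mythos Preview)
For treedepth, modularwidth, and shrubdepth your plan coincides with the paper's: in each case the paper argues that a bound on the parameter forbids long induced paths (treedepth bounds path length by~$2^{w}$; modularwidth is monotone under induced subgraphs and a path on~$n$ vertices has modularwidth~$n$; bounded-shrubdepth classes exclude arbitrarily long induced paths), hence bounds the diameter, and then invokes Theorem~\ref{thm:fpt-agents-diameter}. Your modular-decomposition argument is more elaborate than necessary, but it goes through.

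For MIM-width you are right to be suspicious, and in fact the paper's own argument is flawed here: it asserts that ``the MIM-width of a graph~$G$ is defined as the size of the largest induced matching in~$G$'' and concludes that long induced paths are excluded. That is not the standard definition of MIM-width (which is a branch-decomposition parameter using the maximum induced matching \emph{across a cut}), and as you observe, paths have MIM-width~$1$ yet unbounded diameter and unbounded induced matchings. So bounded MIM-width does not bound the diameter, and Theorem~\ref{thm:fpt-agents-diameter} does not apply directly. Your two proposed workarounds, however, do not close the gap either. The first (an ``effective diameter'' via Lemma~\ref{lem:diameter-makespan}) needs a high-degree hub, and bounded-MIM-width graphs need not have one---paths again are the counterexample. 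The second (dynamic programming over a branch decomposition) would be a substantial independent algorithm, not a corollary of Theorem~\ref{thm:fpt-agents-diameter}; you would have to specify the DP states and argue they are bounded by a function of~$k$ and the MIM-width, which is far from clear for a routing problem with unbounded makespan. In short, for the MIM-width case neither the paper's proof nor your proposal is complete.
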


\begin{proof}
In each case, the diameter of $G$ is bounded by some function of the respective parameter and thus, the result follows directly from Theorem~\ref{thm:fpt-agents-diameter}.
In fact, we show that in each case the respective parameter implies non-existence of long (induced) paths.
A graph with treedepth $w$ cannot contain path of length $2^{w}$ even as a subgraph, see~\cite{NesetrilOM12}.
Any graph class with bounded shrubdepth cannot contain arbitrarily long induced paths, see~\cite{GanianHNOM19}.
Modularwidth is monotone under taking induced subgraphs, and a path of $n$ vertices has modularwidth $n$.
Therefore, a graph with modularwidth $k$ cannot contain an induced path of length $k$.
Finally, the MIM-width of a graph~$G$ is defined as the size of the largest induced matching in~$G$.
Thus clearly, there cannot be an induced path of length more than $2w+1$ in any graph with MIM-width~$w$.
\end{proof}

\section{Hardness}

\subsection{Number of agents}

Allow us to first remind the following notation. In a tree $T$ rooted at $r$, the $depth(v)$ is the number of edges in a shortest path from $v$ to $r$. The $height(T)$ is $\max_{v\in V(T)}depth(v)+1$.

\thmwhardnumberagentsDelta*

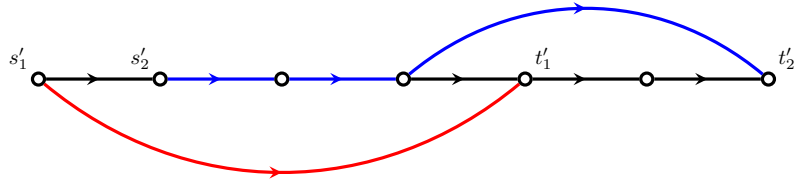
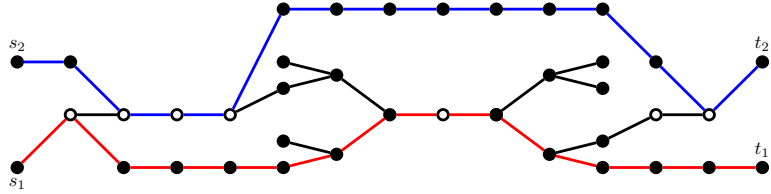
\begin{figure}[!t]
\centering

\subfloat[A topological ordering of a directed acyclic graph $D$.]{
\scalebox{0.8}{
\begin{tikzpicture}[inner sep=0.7mm]

\node[draw,black,circle,line width=1.5pt,fill=white] (v1) at (0,0)[label=above left:$s'_1$]{};
\node[draw,black,circle,line width=1.5pt,fill=white] (v2) at (2,0)[label=above left:$s'_2$]{};
\node[draw,black,circle,line width=1.5pt,fill=white] (v3) at (4,0){};
\node[draw,black,circle,line width=1.5pt,fill=white] (v4) at (6,0){};
\node[draw,black,circle,line width=1.5pt,fill=white] (v5) at (8,0)[label=above right:$t'_1$]{};
\node[draw,black,circle,line width=1.5pt,fill=white] (v6) at (10,0){};
\node[draw,black,circle,line width=1.5pt,fill=white] (v7) at (12,0)[label=above right:$t'_2$]{};

\draw[->-,line width=1.5pt,black] (v1) -- (v2);
\draw[->-,line width=1.5pt,blue] (v2) -- (v3);
\draw[->-,line width=1.5pt,blue] (v3) -- (v4);
\draw[->-,line width=1.5pt,black] (v4) -- (v5);
\draw[->-,line width=1.5pt,black] (v5) -- (v6);
\draw[->-,line width=1.5pt,black] (v6) -- (v7);

\draw[->-,line width=1.5pt,red] (v1) to[out=-40,in=-140] (v5);
\draw[->-,line width=1.5pt,blue] (v4) to[out=40,in=140] (v7);

\end{tikzpicture}
}
}

\subfloat[The graph $G$ constructed based on $D$.]{
\scalebox{0.7}{
\begin{tikzpicture}[inner sep=0.7mm]

\node[draw,black,circle,line width=1.5pt,fill=black] (v1) at (0,1)[label=above:$s_2$]{};
\node[draw,black,circle,line width=1.5pt,fill=black] (v2) at (0,-1)[label=below:$s_1$]{};
\node[draw,black,circle,line width=1.5pt,fill=white] (v3) at (1,0){};
\node[draw,black,circle,line width=1.5pt,fill=white] (v4) at (2,0){};
\node[draw,black,circle,line width=1.5pt,fill=white] (v5) at (3,0){};
\node[draw,black,circle,line width=1.5pt,fill=white] (v6) at (4,0){};

\node[draw,black,circle,line width=1.5pt,fill=black] (v7) at (5,-1){};
\node[draw,black,circle,line width=1.5pt,fill=black] (v8) at (5,-0.5){};
\node[draw,black,circle,line width=1.5pt,fill=black] (v9) at (5,0.5){};
\node[draw,black,circle,line width=1.5pt,fill=black] (v10) at (5,1){};
\node[draw,black,circle,line width=1.5pt,fill=black] (v11) at (6,-0.75){};
\node[draw,black,circle,line width=1.5pt,fill=black] (v12) at (6,0.75){};
\node[draw,black,circle,line width=1.5pt,fill=black] (v13) at (7,0){};

\node[draw,black,circle,line width=1.5pt,fill=white] (v14) at (8,0){};

\node[draw,black,circle,line width=1.5pt,fill=white] (v15) at (9,0){};
\node[draw,black,circle,line width=1.5pt,fill=black] (v16) at (9,0){};
\node[draw,black,circle,line width=1.5pt,fill=black] (v17) at (10,-0.75){};
\node[draw,black,circle,line width=1.5pt,fill=black] (v18) at (10,0.75){};
\node[draw,black,circle,line width=1.5pt,fill=black] (v19) at (11,-1){};
\node[draw,black,circle,line width=1.5pt,fill=black] (v20) at (11,-0.5){};
\node[draw,black,circle,line width=1.5pt,fill=black] (v21) at (11,0.5){};
\node[draw,black,circle,line width=1.5pt,fill=black] (v22) at (11,1){};

\node[draw,black,circle,line width=1.5pt,fill=white] (v23) at (12,0){};
\node[draw,black,circle,line width=1.5pt,fill=white] (v24) at (13,0){};
\node[draw,black,circle,line width=1.5pt,fill=black] (v25) at (14,-1)[label=above:$t_1$]{};
\node[draw,black,circle,line width=1.5pt,fill=black] (v26) at (14,1)[label=above:$t_2$]{};

\draw[line width=1.5pt,red] (v2) -- (v3);
\draw[line width=1.5pt,black] (v3) -- (v4);
\draw[line width=1.5pt,blue] (v4) -- (v5);
\draw[line width=1.5pt,blue] (v5) -- (v6);
\draw[line width=1.5pt,black] (v6) -- (v9);

\draw[line width=1.5pt,black] (v10) -- (v12);
\draw[line width=1.5pt,black] (v9) -- (v12);
\draw[line width=1.5pt,black] (v8) -- (v11);
\draw[line width=1.5pt,red] (v7) -- (v11);
\draw[line width=1.5pt,red] (v11) -- (v13);
\draw[line width=1.5pt,black] (v12) -- (v13);

\draw[line width=1.5pt,red] (v13) -- (v14);
\draw[line width=1.5pt,red] (v14) -- (v15);

\draw[line width=1.5pt,red] (v16) -- (v17);
\draw[line width=1.5pt,black] (v16) -- (v18);
\draw[line width=1.5pt,red] (v17) -- (v19);
\draw[line width=1.5pt,black] (v17) -- (v20);
\draw[line width=1.5pt,black] (v18) -- (v21);
\draw[line width=1.5pt,black] (v18) -- (v22);

\draw[line width=1.5pt,black] (v20) -- (v23);
\draw[line width=1.5pt,black] (v23) -- (v24);
\draw[line width=1.5pt,blue] (v24) -- (v26);

\node[draw,black,circle,line width=1.5pt,fill=black] (u1) at (1,1){};
\draw[line width=1.5pt,blue] (v1) -- (u1);
\draw[line width=1.5pt,blue] (u1) -- (v4);

\node[draw,black,circle,line width=1.5pt,fill=black] (u2) at (2,-1){};
\node[draw,black,circle,line width=1.5pt,fill=black] (u3) at (3,-1){};
\node[draw,black,circle,line width=1.5pt,fill=black] (u4) at (4,-1){};
\draw[line width=1.5pt,red] (v3) -- (u2);
\draw[line width=1.5pt,red] (u2) -- (u3);
\draw[line width=1.5pt,red] (u3) -- (u4);
\draw[line width=1.5pt,red] (u4) -- (v7);

\node[draw,black,circle,line width=1.5pt,fill=black] (u5) at (5,2){};
\node[draw,black,circle,line width=1.5pt,fill=black] (u6) at (6,2){};
\node[draw,black,circle,line width=1.5pt,fill=black] (u7) at (7,2){};
\node[draw,black,circle,line width=1.5pt,fill=black] (u8) at (8,2){};
\node[draw,black,circle,line width=1.5pt,fill=black] (u9) at (9,2){};
\node[draw,black,circle,line width=1.5pt,fill=black] (u10) at (10,2){};
\node[draw,black,circle,line width=1.5pt,fill=black] (u11) at (11,2){};
\node[draw,black,circle,line width=1.5pt,fill=black] (u12) at (12,1){};
\draw[line width=1.5pt,blue] (v6) -- (u5);
\draw[line width=1.5pt,blue] (u5) -- (u6);
\draw[line width=1.5pt,blue] (u6) -- (u7);
\draw[line width=1.5pt,blue] (u7) -- (u8);
\draw[line width=1.5pt,blue] (u8) -- (u9);
\draw[line width=1.5pt,blue] (u9) -- (u10);
\draw[line width=1.5pt,blue] (u10) -- (u11);
\draw[line width=1.5pt,blue] (u11) -- (u12);
\draw[line width=1.5pt,blue] (u12) -- (v24);

\node[draw,black,circle,line width=1.5pt,fill=black] (u13) at (12,-1){};
\node[draw,black,circle,line width=1.5pt,fill=black] (u14) at (13,-1){};
\draw[line width=1.5pt,red] (v19) -- (u13);
\draw[line width=1.5pt,red] (u13) -- (u14);
\draw[line width=1.5pt,red] (u14) -- (v25);

\end{tikzpicture}
}
}

\caption{An example for the construction used in the proof of Theorem~\ref{thm:w-hardness-number-agents}. The color white denotes the original vertices, while the auxiliary vertices are colored black. The colors red and blue denote the two vertex-disjoint paths of $D$ in subfigure (a), and the corresponding trajectories of $G$ that the two agents will follow in subfigure (b). Vertices lying on the same vertical line in (b) belong in the same (sub-)layer.}
\label{figure:example-w-hardness-agents}
\end{figure}

\begin{proof}
The reduction is from the \textsc{$k$-disjoint shortest paths} problem, which was shown to be $\W[1]$-hard parameterised by $k$ in~\cite{lo:21}, even when the graph given in the input is directed and acyclic. The input of this problem consists of a graph $G$ and a set of pairs of vertices $\mathcal{P}=\{(s_i,t_i):i\in [k]\}$; the question is whether there exists a set of $(s_i,t_i)$-paths hat are of minimum length and pairwise vertex-disjoint.

Let $\langle D,\mathcal{P}'\rangle$ be an instance of the \textsc{$k$-disjoint shortest paths} problem, where $D$ is a directed acyclic graph.
Let $v_1,\dots, v_n$ be a topological ordering of the vertices of $D$. We define \emph{layers} $l_0,\dots,l_{n+1}$ such that $v_i\in l_i$ for all $i\in[n]$ while $l_0$ and $l_{n+1}$ are initially empty. We begin constructing the graph for our reduction, illustrated in Figure~\ref{figure:example-w-hardness-agents}. We start with $G$ being the underlying (undirected) graph of $D$. Let us say that the vertices of $G$ at this stage are the \emph{original vertices}. Let $A=\{a_i\mid i\in[k]\}$ denote the set of agents. For each $i\in[k]$, add the vertices $s_0(a_i)$ and $t(a_i)$ in layers $l_0$ and $l_{n+1}$ respectively, as well as the edges $s_0(a_i)s'_i$ and $t(a_i)t'_i$. Observe that the vertices we added to the layers $l_0$ and $l_{n+1}$ are all of degree $1$. Currently, $G$ contains a unique vertex in $l_0$ for each $s'_i$, and a unique vertex in layer $l_{n+1}$ for each $t'_i$, $i\in [k]$, and is of maximum degree $\Delta\leq n$. In what follows, we modify $G$ so that $\Delta(G)\leq 3$ and all the shortest $(s_0(a_i),t(a_i))$-paths are of the same length.

We now modify $G$ so that $\Delta(G)\leq 3$ as follows. Let $v_i$, $i\in[n]$, be the first vertex such that $d_G(v_i)\geq 4$. First, add to $G$ the trees $T_i^1$ and $T_i^2$, both being a copy of the complete binary tree of height $H=\lceil \log(\Delta(D)+1)\rceil+1$. Let $r_i^1$ and $r_i^2$ denote the roots of $T_i^1$ and $T_i^2$ respectively. Then, add the edges $v_ir_i^1$ and $v_ir_i^2$. We insert the vertices of the newly added trees in new \emph{sub-layers} of $l_i$ defined as follows. For each vertex $u$ that belongs to $T_i^1$ ($T_i^2$ resp.) and $u$ is at depth $0\leq h\leq H-1$, add $u$ inside the layer $l_{i,h}^1$ ($l_{i,h}^2$ resp.). Then, for each vertex $v$ that belongs in a layer $l_j$ or a sub-layer $l_{j,H}^2$, for $j<i$ (or a sub-layer $l_{j,H}^1$ for $j>i$ resp.), such that $vv_i\in E(G)$, delete the edge $vv_i$ and add an edge between $v$ and any one of the vertices of $T_i^1$ ($T_i^2$ resp.) that is still of degree $1$ (whose existence is guaranteed by the height of the the binary trees we used). Repeat this process as long as there exist vertices in $G$ that are of degree greater than $3$. This terminates after at most $n$ repetitions, and adds in total at most $\mathcal{O}(n^2)$ vertices to $G$. The resulting graph $G$ has maximum degree at most $3$.

Lastly, we modify $G$ so that all the shortest $(s_0(a_i),t(a_i))$-paths are of the same length. To do so, we will make use of the layers and sub-layers defined thus far. Let $uv\in E(G)$ and $i,j$ be such that $u$ and $v$ belong in layers $l$ and $l'$, respectively, with $l\in \{l_i,l_{i,H}^1,l_{i,H}^2\}$ and $l'\in \{l_j,l_{j,H}^1,l_{j,H}^2\}$ and $|i-j|>1$; such edges are called \emph{bad}. We subdivide each bad edge $e\in E(G)$ so that each layer and sub-layer between $l$ and $l'$ receives a new vertex. Observe that in the resulting graph, all the shortest $(s_0(a_i),t(a_i))$-paths have the same length $L$
Finally, all the vertices of $G$ at this stage that are not original, are denoted as \emph{auxiliary}. This finishes the construction of $G$.

We are now ready to proceed with our reduction. In particular, we will show that $\ell(\langle G,A,s_0,t\rangle)\leq L$ if and only if $\langle D,\mathcal{P}'\rangle$ is a yes-instance of the \textsc{$k$-disjoint shortest paths} problem.

Let us first assume that $\langle D,\mathcal{P}'\rangle$ is a yes-instance of the \textsc{$k$-disjoint shortest paths} problem and let $\{P'_i=(s_i'=u_1^i,\dots,u_p^i=t_i'):i\in[k],p\leq n\}$ be a set of disjoint shortest $(s_i',t_i')$-paths in $D$. For each $i\in[k]$ we define an $(s_0(a_i),t(a_i)$ path $P_i$ as follows. The first and last vertices of $P_i$ are $s_0(a_i)$ and $t(a_i)$ respectively. Then, starting from $s_0(a_i)$, we follow the edges of $G$ until we reach an original vertex $v$. From the construction of $G$, it follows that $v=s_i'$. We proceed by following the original vertices of $G$ according to $P'_i$ as long as possible. Let us assume that it is not possible to do so, and let $u$ be the last (original) vertex we have inserted into $P_i$ and $v$ be the next vertex proposed by $P'_i$.

\begin{claim}
If $(uv)\in A(D)$ and $uv\notin E(G)$, there exists a unique $(u,v)$-path $P$ in $G$.
\end{claim}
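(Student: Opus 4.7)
My plan is to trace what the construction did to the edge $uv$ and show that every modification replaced it by a tree-like piece, so the only $(u,v)$-walk in $G$ that avoids backtracking is forced. Since $(uv)\in A(D)$, the topological ordering places $u$ in some layer $l_i$ and $v$ in some layer $l_j$ with $i<j$. The edge $uv$ was present in the initial underlying undirected graph of $D$, so the only way it can fail to lie in $E(G)$ is that one (or both) of the degree-reduction or bad-edge-subdivision steps touched it.

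First I would handle the degree-reduction step. If $u$ ever received tree gadgets $T_i^1,T_i^2$, then the only endpoint of the (current) edge incident to $u$ pointing toward a later layer got rerouted to some leaf $w_u$ of $T_i^2$ (the tree dedicated to higher-indexed neighbours), because $v\in l_j$ with $j>i$. Symmetrically, if $v$ received tree gadgets, the endpoint toward $u$ was rerouted to a leaf $w_v$ of $T_j^1$. Crucially, each tree $T_i^1,T_i^2$ is a complete binary tree, so it contains a \emph{unique} path from its root to any leaf, and every internal tree vertex has degree at most $3$ entirely inside the tree; thus the portions of $P$ that live inside $T_i^2$ and $T_j^1$ are uniquely determined by the leaves $w_u$ and $w_v$, which were fixed at construction time.

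Next I would argue that the edge $w_u w_v$ (or $u w_v$, $w_u v$, as appropriate) produced by the rerouting is then either kept intact or subdivided by the bad-edge phase. Subdivision simply inserts a sequence of new degree-$2$ vertices, one per intermediate layer or sub-layer, so it contributes a unique chain between its endpoints. Concatenating the (unique) root-to-leaf path inside $T_i^2$, the (unique) subdivision chain, and the (unique) leaf-to-root path inside $T_j^1$ yields one specific $(u,v)$-path $P$ in~$G$.

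The only remaining worry, and the thing I would prove carefully, is that there is no \emph{other} $(u,v)$-path in $G$ that leaves this canonical piece. For this I would observe that every vertex introduced by the construction (tree vertices and subdivision vertices) has degree at most $3$ and all of its incident edges lie on the unique canonical piece: tree vertices only meet their parent, their two children, and at most one external edge coming from the rerouting; subdivision vertices only meet their two chain neighbours. Hence, once a walk from $u$ enters the gadget rebuilt around the original edge $uv$, it has no option other than to continue along the canonical piece until it exits at $v$. The main (minor) obstacle here is bookkeeping over which of $u,v$ were actually high-degree in $D$ and which sub-layers are crossed; I would handle this by treating the cases ``only $u$ was exploded'', ``only $v$ was exploded'', ``both were exploded'', and ``neither was, so only subdivision applied'' uniformly, since in every case the inserted structure is a path glued from a root-to-leaf path, a chain, and a leaf-to-root path.
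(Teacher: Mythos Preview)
Your construction of the canonical path is essentially identical to the paper's: both of you split into cases according to which of $u,v$ acquired binary-tree gadgets, then thread the unique root-to-leaf path through $T_i^2$, the (possibly subdivided) bridge, and the unique leaf-to-root path through $T_j^1$. That part is fine and matches the paper exactly.

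Where you go beyond the paper is in the final paragraph, and there the argument does not work. You claim that every auxiliary vertex has all of its incident edges on ``the unique canonical piece'', but this is false for the internal vertices of $T_i^2$ (and of $T_j^1$): the tree $T_i^2$ was attached to $u$ precisely to absorb \emph{all} of $u$'s forward arcs in $D$, so its internal vertices branch toward leaves that connect to \emph{other} successors of $u$, not just to $v$. Consequently a walk from $u$ that enters $T_i^2$ can exit toward a different original vertex, reach $v$ via some completely different route in $G$, and your local degree argument cannot rule this out. In fact, literal global uniqueness of a $(u,v)$-path in $G$ is simply false in general, and the paper does not attempt to prove it: despite the wording of the claim, the paper's proof only exhibits the canonical replacement path and records that each of its three segments is individually forced by the construction. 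That weaker statement is what is actually used downstream (to define $P_i$ from $P'_i$), so you should drop the global-uniqueness paragraph rather than try to repair it.
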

\begin{proofclaim}
Let $u\in l_i$ and $v\in l_j$. If $d_D(u)\leq d_D(v)\leq 3$, then it follows that during the construction of $G$, the edge $uv$ was a bad edge that got subdivided into a path, which is exactly $P$. Otherwise, at least one of $T_i^2$ and $T_j^1$ exists in $G$. We distinguish cases:
\begin{itemize}
 \item if both $T_i^2$ and $T_j^1$ exist, then by the construction of $G$ we have that there exist unique vertices $w\in l_{i,H}^2$ and $z\in l_{j,H}^1$ such that $wz\in E(G)$ or there exists a unique $(w,z)$-path (the result of the subdivision of a bad edge). Follow the unique path leading from $u$ to $w$, continue to $z$, and then follow the unique path leading from $w$ to $v$;
 \item if only $T_i^2$ exists, then by the construction of $G$ we have that there exists a unique vertex $w\in l_{i,H}^2$ such that $wv\in E(G)$ or there exists a unique $(w,v)$-path (the result of the subdivision of a bad edge). Follow the unique path leading from $u$ to $w$ and then continue to $v$;
 \item if only $T_j^1$ exists, then by the construction of $G$ we have that there exists a unique vertex $z\in l_{j,H}^1$ such that $uz\in E(G)$ or there exists a unique $(u,z)$-path (the result of the subdivision of a bad edge). Start by moving from $u$ to $z$ and then follow the unique path leading from $z$ to $v$.
\end{itemize}
\end{proofclaim}

So, following the unique path as it is defined in the above claim, we are able to reach $v$. From there, we continue following $P'_i$ as long as possible.
This finishes the definition of $P_i$. Observe that, for each $i\in[k]$, we have that $P_i$ is of length exactly $L$, as it contains exactly one vertex of each layer and sub-layer of $G$. Also, observe that by their construction, the paths $P_1,\dots,P_k$ are pairwise vertex-disjoint paths. This means that any two agents $a_i$ and $a_j$ that, at each turn, move from their current vertex to the next one according to $P_i$ and $P_j$ respectively, are never going to attempt to go on the same vertex, and they are going to reach their destinations after $L$ turns. Thus, $\ell(\langle G,A,s_0,t\rangle)\leq L$.

For the reverse direction, assume that $\ell(\langle G,A,s_0,t\rangle)= L$ and let $s_i$, $i\in[L]$, be the sequence of functions such that $s_i:A\rightarrow V(G)$ gives us the set of vertices on which the agents are placed at the end of the $i^{th}$ turn. For each $j\in[k]$, let the path $P'_j$ be defined as follows. Consider the turns $i_1,\dots,i_p$ (for $p\leq L$) such that the agent $a_j$ is visiting an original vertex of $G$ according to $s_{i_m}$, $m\in [p]$. Let $P'_j=(s_{i_1}(a_j),\dots, s_{i_p}(a_j))$. It follows from the construction of $G$ that $P'_j$ is indeed a path of $D$
and that $s_{i_1}(a_j)=s_j'$ and $s_{i_p}(a_j)=t_j'$. Also, since all the $(s_0(a_i),t(a_i))$-paths of $G$ are of length $L$, we have that at each turn, all the agents are located on vertices of the same layer or sub-layer of $G$. Moreover, by the construction of $G$, each layer contains at most one original vertex. It follows that $P'_j$ contains at most one vertex for each layer of $D$ and, thus, that $P'_j$ is a shortest $(s_j',t_j')$-path. It remains to show that these paths that we defined are also pairwise vertex-disjoint. Towards a contradiction, assume that there are $j_1,j_2\in [k]$ such that the paths $P'_{j_1}$ and $P'_{j_2}$ intersect on a vertex $v$. Then, by the definition of these paths, we have that there exists a turn $i$ such that $s_{i}(a_{j_1})=s_{i}(a_{j_2})=v$, which  contradicts the feasibility of the solution of the \textsc{Multiagent Pathfinding Problem}.
\end{proof}

\subsection{Trees of bounded maximum degree}

\thmMAPFisNPCTreesDelta*

\begin{proof}
    It was recently shown that the \textsc{Parallel Token Swapping} is NP-hard on trees~\cite{ADKLLMRWW22}. This problem is exactly the same as the \textsc{Multiagent Pathfinding Problem} when swaps are allowed and the number of agents is equal to the number of vertices of the tree.
    Here, we present a reduction from this version of the problem to the version where swaps are not allowed, the input graph is a tree with maximum degree $5$, and the number of agents is arbitrary.
    Let $\langle T',A', \turn_0' , t'\rangle$ be an instance of \textsc{Multiagent Pathfinding Problem}, where swaps are allowed and $T'$ is a tree.
    We will construct an instance $\langle T, A, \turn_0 , t \rangle$ of the  \textsc{Multiagent Pathfinding Problem} where swaps are not allowed and $T$ is a tree of maximum degree $5$, such that $\ell(\langle T', A', \turn_0' , t' \rangle)\leq \ell'$ if and only if $\ell(\langle T, A, \turn_0 , t \rangle)\leq \ell$, for a given value $\ell'$.

    First we present the construction of $T=(V,E)$. Start with $T'=(V',E')$ and let $h$ be the value $ \{ \lceil \log \Delta (T') \rceil +1 \}$ if this is even, otherwise, $h= \{ \lceil \log \Delta (T') \rceil +2 \}$.
    We will bound the degree of this graph by replacing the edges between any non-leaf vertex and its children with a
    complete binary tree of height $h$.
    In particular, for each non-leaf vertex $v \in V'$,
    we create $T_v$, a complete binary tree of height $ h $ rooted by a vertex $r_v$.
    Let $U$ be the set $\bigcup_{v\in V'} V(T_v)$ and $E'= \bigcup_{v\in V'} E(T_v)$. We also add to $E'$ the set of
    edges $\{vr_v: v \in V'\}$.
    Now we will replace the edges of the original graph.
    For each non-leaf vertex $v \in V'$ let $v_i$, $i \in [d_{T'}(v)]$, be the children of $v$ in $T'$ and
    $l_i$, $i \in [d_{T'}(v)]$, be $d_{T'}(v)$ distinguished leaves in $T_v$.
    We remove $\{ vv_i : i \in [d_{T'}(v)] \}$ from $E'$ and we add $\{ v_il_i : i \in [d_{T'}(v)] \}$ to $E'$.
    Notice that the resulting graph 
    is a tree of maximum degree $3$.

    To complete the construction, for each vertex $u \in U$, we add two paths $P^{u,s} = ( u_{s,0},\ldots u_{s,\ell' m -1} )$
    and $P^{u,t} = ( u_{t,1},\ldots u_{t,(\ell'-1)m} )$, where $m = 2 + h$,
    and the edges $uu_{s,\ell' m -1}$ and $uu_{t,1}$. Let $W$ be the set of all the vertices in these paths and
    $E_W$ all the edges with at least one incident vertex belonging in $W$.
    The final graph is $T = (V'\cup U\cup W, E'\cup E_W)$.
    Notice that the addition of $E_W$ increases the maximum degree of $T$ to at most $5$.

    Next, we define the set of agents $A$ and the functions $\turn_0$ and $t$.
    We start by setting $A=A'$; we will call these the \textit{original agents}. Also for each vertex $u \in U$ we will add $\ell'$ \textit{path agents} $a_{{u,i}}$, $i \in [\ell']$ to $A$.
    We define the function $\turn_0$ as follows. For any original agent $a$, $\turn_0 (a) = \turn_0'(a)$.
    For each path agent $a_{{u,i}}$, where $u \in U$ and $i \in [\ell']$, $\turn_0 (a_{{u,i}}) = u_{s,(i-1)m}$.
    Finally, we define the function $t$ as follows. For any original agent $a$, $t (a) = t'(a)$.
    For each path agent $a_{{u,i}}$, where $u \in U$ and $i \in [\ell']$, if $i>1$ then $t (a_{{u,i}}) = u_{t,(i-1)m}$ and $t (a_{{u,1}})  = u$.

    \begin{claim}
        If $\ell(\langle T', A', \turn_0' , t' \rangle)\leq \ell'$ then $\ell(\langle T, A, \turn_0 , t \rangle)\leq\ell = \ell' m$.
    \end{claim}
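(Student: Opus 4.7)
The plan is to lift any solution $\turn'_0,\turn'_1,\ldots,\turn'_{\ell'}$ of $\langle T',A',\turn'_0,t'\rangle$ to a solution of $\langle T,A,\turn_0,t\rangle$ of makespan $\ell'm$. Time in $T$ will be split into $\ell'$ consecutive \emph{macro-steps} of $m$ turns each, and the invariant to maintain is that at the end of macro-step $j$ every original agent $a$ occupies the vertex $\turn'_j(a)$ (which lies in $V'\subseteq V(T)$).

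We first fix the trajectory of every path agent rigidly: $a_{u,i}$ advances one vertex per turn along the unique path in $T$ from $u_{s,(i-1)m}$ to its target. Since that path has length exactly $\ell'm$, and the $\ell'$ path agents at each $u$ start out spaced $m$ apart on the same directed chain $P^{u,s}\cup\{u\}\cup P^{u,t}$ and all shift in lockstep, they never collide or swap. A direct calculation shows that $u$ is occupied by a path agent only at turns that are multiples of $m$, so $u$ is available to original agents throughout the interior of every macro-step. During macro-step $j$, each original agent mirrors the move $\turn'_j\to\turn'_{j+1}$: agents that stay in $T'$ also stay in $T$, and an agent moving from $v$ to a child $v_i$ (or vice versa) walks the unique $m$-edge path $v\to r_v\to\cdots\to l_i\to v_i$ of $T$, one edge per turn; the internal $U$-vertices it visits are free of path agents at those turns.

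The main obstacle is simulating a \emph{swap move} of $T'$ in which two original agents $a,b$ with $\turn'_j(a)=v$, $\turn'_j(b)=v_i$ exchange positions. In $T$, the two required $m$-edge paths $v\to v_i$ and $v_i\to v$ are reverses of one another through $T_v$, so executing them naively in parallel would produce a forbidden swap (or a collision) inside $T_v$. To resolve this, we exploit two resources built into the construction. First, $T_v$ is a complete binary tree of height $h$ with $2^h\ge 2\Delta(T')$ leaves, so it contains many sibling subtrees not used as any distinguished $l_j$. Second, each internal vertex $y$ of $T_v$ has the attached-path endpoints $y_{s,\ell'm-1}$ and $y_{t,1}$ at distance one from $y$, both free of path agents during the macro-step interior. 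The scheme is to route $a$ along its direct path while $b$, upon reaching an appropriately chosen intermediate vertex, briefly steps aside into a spare sibling subtree (or into a nearby attached-path endpoint), lets $a$ pass, and then rejoins its route. Verifying that this coordinated detour fits inside the $m$-turn budget is the crux of the argument, and it is here that the parity choice forcing $h$ (and hence $m=h+2$) to be even plays its role.

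With this schedule in hand, feasibility and target-correctness are routine. Feasibility combines the lockstep motion of the path agents among themselves, the fact that original and path agents visit any fixed $U$-vertex only at disjoint subsets of turns, and the coordinated passing scheme inside each $T_v$. Target-correctness follows from the invariant, which gives $\turn_{\ell'm}(a)=\turn'_{\ell'}(a)=t'(a)=t(a)$ for every original agent, together with the observation that each path agent advances exactly $\ell'm$ edges along a path of length $\ell'm$ and so finishes precisely at its prescribed target. Hence the constructed sequence is a feasible swap-free solution of $\langle T,A,\turn_0,t\rangle$ of makespan $\ell'm$.
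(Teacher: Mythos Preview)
Your plan is essentially the paper's proof: split time into $\ell'$ macro-steps of $m$ turns, let the path agents advance in lockstep so that each $u\in U$ is occupied by a path agent exactly at the turns that are multiples of $m$, and resolve a swap in $T'$ by having one of the two original agents briefly sidestep to a sibling in the binary tree $T_v$.

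Two small corrections. First, since the source instance is \textsc{Parallel Token Swapping} on $T'$ with $|A'|=|V(T')|$, every vertex of $T'$ is permanently occupied, so every non-trivial move in $T'$ is a swap; your ``single agent moving from $v$ to $v_i$'' case is therefore vacuous. The paper notes this explicitly and handles only the stay/swap dichotomy. Second, your alternative of stepping into an attached-path endpoint is not sound as stated: during macro-step $i$ the vertex $y_{t,1}$ is occupied by the path agent $a_{y,\ell'-i+2}$ at turn $(i-1)m+1$, and $y_{s,\ell'm-1}$ is occupied by $a_{y,\ell'-i+1}$ at turn $im-1$, so these endpoints are not free throughout the macro-step interior. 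The paper uses only the sibling-vertex mechanism---specifically the sibling $v'_{h/2+1}$ of the midpoint $v_{h/2+1}$ along the $v$--$v_i$ path inside $T_v$; the evenness of $h$ is what makes this midpoint well defined and lets the one-turn detour fit exactly in the $m$-turn budget.
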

    \begin{proofclaim}

    Let $\turn_0',\turn_1',\ldots, \turn_{\ell'}'$ be a feasible solution of $\langle T', A', \turn_0' , t' \rangle$.
    We will define a new sequence $\turn_0,\turn_1,\ldots, \turn_{\ell}$ that is a feasible solution of
    $\langle T, A, \turn_0 , t \rangle$. To do so, for each $\turn_i'$, $i \in [\ell']$, we create a sequence $\turn_{(i-1)m+1},\ldots,\turn_{im}$. The definition is inductive. First, $\turn_0$ is the starting position of each agent in $\langle T,A,\turn_0,t\rangle$. Assuming that we have defined the sequence until the $\turn_{(i-1)m}$, we define the $\turn_{(i-1)m+1},\ldots,\turn_{im}$ as follows:

    \begin{itemize}
        \item for each agent $a_{u,p}$, $u \in U$ and $p \in [\ell']$, for $j=1$ to $m$:
        \begin{itemize}
            \item if $\turn_{(i-1)m+j-1}(a_{u,p}) = u$ then $\turn_{(i-1)m+j}(a_{u,p}) = u_{t,1}$;
            \item if $\turn_{(i-1)m+j-1}(a_{u,p}) = u_{s, \ell'm-1}$ then $\turn_{(i-1)m+j}(a_{u,p}) = u$;
            \item if $\turn_{(i-1)m+j-1}(a_{u,p}) = u_{s, q}$ for some $q \in \{ 0\} \cup [\ell'm-2]$, then $\turn_{(i-1)m+j}(a_{u,p}) = u_{s,q+1}$;
            \item if $\turn_{(i-1)m+j-1}(a_{u,p}) = u_{t, q}$ for some $ q \in  [(i-1)m-1] $, then $\turn_{(i-1)m+j}(a_{u,p}) = u_{t, q+1}$.
        \end{itemize}
    \end{itemize}
    Now, we need to deal with the original agents. Notice that, since $T'$ is acyclic, for any agent $a$ and turn $i \in [\ell']$, we either have that $\turn_{i-1}'(a)= \turn_{i}'(a)$ or there exists an agent $b$ such that $\turn_{i-1}'(a)= \turn_{i}'(b)$ and $\turn_{i-1}'(b)= \turn_{i}'(a)$ ($a$ and $b$ swap positions).

    \begin{itemize}
        \item for each original agent $a$, such that $\turn_{i-1}'(a)= \turn_{i}'(a)$:
        \begin{itemize}
            \item we create the sequence $\turn_{(i-1)m+1}(a) = \ldots = \turn_{im}(a) = \turn_{i}'(a)$.
        \end{itemize}
        \item for each pair of original agents $(a, b)$, such that $\turn_{i-1}'(a)= \turn_{i}'(b)$ and $\turn_{i-1}'(b)= \turn_{i}'(a)$:
        \begin{itemize}
            \item W.l.o.g., assume that $\turn_{i-1}'(a)$ is the parent of $\turn_{i}'(a)$ in $T'$. Let $(\turn_{i-1}'(a)=v_0, v_1, \ldots, v_h, v_{h+1}= \turn_{i}'(a))$ be the unique path between $\turn_{i-1}'(a)$ and $\turn_{i}'(a)$ in $T$ and $v'_{\nicefrac{h}{2}+1}$ be the child of $v_{\nicefrac{h}{2}}$ that does not belong in that path.
            \item For $b$ we set $\turn_{im-1}(b) = \turn_{im}(b)= \turn'_{i}(b) $ and $\turn_{(i-1)m+j}(b) = v_{h-j+1}$ for all $j\in [h]$.
            \item For $a$ we set $\turn_{(i-1)m+j}(a) = v_{j}$ for all $j\in [\nicefrac{h}{2}]$, $\turn_{(i-1)m+\nicefrac{h}{2}+1}(a)= v'_{\nicefrac{h}{2}+1}$ and $\turn_{(i-1)m+j}(a) = v_{j-1}$ for all $j\in [h+2]\setminus [\nicefrac{h}{2}+1]$.
        \end{itemize}
    \end{itemize}
    Notice that the previous sequence is feasible as the agents always move form one vertex to one of its neighbors and no two agents occupy the same vertex at the end of the same turn.
    \end{proofclaim}

    It remains to show that if $\ell(\langle T, A, \turn_0 , t \rangle)\leq\ell$ then,
    $\ell(\langle T', A', \turn_0' , t' \rangle)\leq \ell'$.
    Assume that $\ell(\langle T, A, \turn_0 , t \rangle)=\ell$ and let $\turn_1,\ldots, \turn_\ell$ be a feasible solution.
    First we show that, in any such solution, the movements of path agents are known.
    \begin{claim}\label{claim:para-np-hard-trees-path-agents}
        In any solution of $\langle T, A, \turn_0 , t \rangle$ that has makespan $\ell=m\ell'$, any path agent $a$ needs to constantly move through the shortest path between its starting and finishing positions. Also, any vertex $u \in U$ is occupied by the agent $a_{u,i}$ during the turn $im$.
    \end{claim}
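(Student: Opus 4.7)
The plan is to show that the makespan budget $\ell = \ell'm$ exactly matches the distance in $T$ from $\turn_0(a)$ to $t(a)$ for every path agent $a$, which eliminates all slack and forces the trajectory of each path agent to be its unique shortest path.

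First, I would compute these distances explicitly. For $a_{u,1}$, which starts at the far end $u_{s,0}$ of $P^{u,s}$ and targets $u$, the unique path in $T$ is the whole of $P^{u,s}$ followed by the edge $u_{s,\ell'm-1}u$, of length exactly $\ell'm$. For $a_{u,i}$ with $i>1$, the unique path from $u_{s,(i-1)m}$ to $u_{t,(i-1)m}$ in the tree $T$ is forced to pass through $u$, and splits into the $\ell'm-(i-1)m$ edges along the tail of $P^{u,s}$ ending at $u$, followed by the $(i-1)m$ edges along $P^{u,t}$ from $u$ to $u_{t,(i-1)m}$; these again sum to exactly $\ell'm = \ell$. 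In each case $\dst_T(\turn_0(a), t(a)) = \ell$.

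Second, because the distance from start to target equals the available time budget with no room to spare, any turn in which the agent either waits or steps off its unique shortest path would push its arrival strictly beyond $\ell$, contradicting the assumption that the makespan is $\ell$. The swap-free restriction cannot help either, as a swap replaces a forward step by a lateral exchange and hence does not shorten the remaining distance. Thus every path agent must, at every turn, advance one edge along its unique shortest path, which uniquely determines its trajectory.

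Third, I would read off the position of each vertex $u \in U$ in time from these rigid trajectories. The agent $a_{u,i}$ occupies $u$ at the precise moment its $(s)$-segment is completed, which happens after $\ell'm - (i-1)m$ steps when $i > 1$ and after $\ell'm$ steps when $i=1$; both are integer multiples of $m$ lying in $\{m, 2m, \dots, \ell'm\}$. Matching the corresponding index to the stated turn $im$ then yields the occupation pattern claimed. The only point requiring a small additional remark is why path agents bound to different vertices $u \in U$ cannot exchange roles: the subsystems $P^{u,s} \cup \{u\} \cup P^{u,t}$ are pendant subtrees attached to distinct vertices of the binary-tree expansion and are pairwise disjoint, so a path agent starting in one such subsystem has no way to reach the target of a path agent in another. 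Within a single subsystem the start-target pairing is itself rigid, and the result follows.
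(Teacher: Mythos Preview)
Your proposal is correct and follows essentially the same approach as the paper: compute that the distance in the tree from each path agent's start to its target is exactly $\ell'm=\ell$, conclude there is no slack, and hence the agent must advance along its unique shortest path at every turn. Your distance computations are in fact more carefully stated than the paper's (which contains an index slip), and your observation that $a_{u,i}$ reaches $u$ after $(\ell'-i+1)m$ steps is the right one; the phrase ``matching the corresponding index to the stated turn $im$'' is exactly the bijection $i\mapsto \ell'-i+1$ that reconciles your count with the claim's indexing, and it is all that the subsequent argument actually uses (namely that $u$ is occupied by some path agent at every multiple of $m$). The closing remarks about swaps and about agents from different pendant subsystems not exchanging roles are harmless but unnecessary: once the unique shortest path in a tree is forced, the trajectory is already fully determined.
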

    \begin{proofclaim}
    Let $a_{u,i}$, for some $u \in U$ and $i \in \ell'$, be a path agent. The unique path between $s_{u,i}$ and $t_{u,i}$ is $(u_{s,im -1},\ldots u_{s,\ell m -1},u,u_{t,1},\ldots, u_{t,im})$. This path is of length $m\ell'$ (number of edges) therefore $a_{u,i}$ needs to constantly move through this path in order for a sequence with makespan $\ell=m\ell'$ to be valid.
    For the second part of the claim, it suffices to observe that $u$ belongs in the path that $a_{u,i}$ needs to move through, and $s_{u,i} = u_{s,im -1}$ has distance exactly $im$ from $u$.
    \end{proofclaim}

    Now, we show that the original agents occupy vertices of $V'$ during specific turns.
    \begin{claim}\label{claim:para-np-hard-trees-original-agents}
        For any original agent $a$ it holds that:
        \begin{itemize}
            \item for all $i \in [\ell'] \cup \{0\}$, $\turn_{im}(a) \in V'$ and
            \item for all $i \in [\ell']$,  $\turn_{im}(a)$ is a neighbor of $\turn_{(i-1)m}(a)$ in $T'$.
        \end{itemize}
    \end{claim}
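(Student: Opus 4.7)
The plan is to prove Claim~\ref{claim:para-np-hard-trees-original-agents} by strong induction on $i \in \{0, 1, \ldots, \ell'\}$. The base case $i=0$ is immediate since $s_0(a) \in V'$ by construction. For the inductive step, assume $s_{jm}(a) \in V'$ for all $j < i$ and every original agent~$a$. By Claim~\ref{claim:para-np-hard-trees-path-agents} every vertex of $U$ is occupied by a path agent at time~$im$, which already rules out $s_{im}(a) \in U$; the main technical step is to exclude $s_{im}(a) \in W$.

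Suppose towards contradiction $s_{im}(a) = w \in W$, so $w \in P^{u,s} \cup P^{u,t}$ for some $u \in U$, and write $v = s_{(i-1)m}(a) \in V'$. Since in $T$ every $T'$-edge is realized as a path of length exactly $m$ through a single tree $T_{v'}$, the only vertices of $W$ reachable from $v$ in $m$ steps lie in the subtree attached to some $u \in V(T_v)$ at distance $d := \dst_T(v,u) \in [1, h+1]$. If $w = u_{s,q}$, any trajectory of length at most $m$ ending at $w$ must pass through $u$ and then walk backward along $P^{u,s}$; by Claim~\ref{claim:para-np-hard-trees-path-agents} the path agent $a_{u,\ell'-i+1}$ sweeps the same stretch of $P^{u,s}$ in the opposite direction between times $im-1$ and $im$, and a short parity case analysis (using that $m=h+2$ is even) shows that this produces either a forbidden swap on the edge $\{u, u_{s,\ell'm-1}\}$ (the case $d = m-1$) or a swap/vertex-collision with $a_{u,\ell'-i+1}$ somewhere inside $P^{u,s}$ for every other value of $d$.

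The case $w = u_{t,q}$ is more subtle because the entry move $u \to u_{t,1}$ is not in itself a swap. Here I would exploit the fact that at time~$im$ the agent $a_{u,\ell'-i+1}$ sits exactly at $u$ and enters $P^{u,t}$ in the following turn: by an easy induction on the subsequent steps, the original agent is forced one vertex further into $P^{u,t}$ at every turn, since staying would collide with $a_{u,\ell'-i+1}$ and moving backward would swap with it. This monotone drift pushes the agent toward the dead end $u_{t,(\ell'-1)m}$ without ever returning to $u$, so the agent cannot reach its target $t(a) \in V'$ by time~$\ell = \ell'm$, contradicting $s_\ell = t$.

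Combining the two subcases gives $s_{im}(a) \in V'$; the adjacency claim then follows at once from $\dst_T(v,v') = m \cdot \dst_{T'}(v,v')$ for $v,v' \in V'$, so having traversed at most $m$ edges of $T$ forces $\dst_{T'}(s_{(i-1)m}(a), s_{im}(a)) \leq 1$. The main obstacle is the parity case analysis in the $w \in P^{u,s}$ subcase, where the timing of the original agent must be matched carefully against the deterministic sweep of path agents guaranteed by Claim~\ref{claim:para-np-hard-trees-path-agents}, and the evenness of $m$ is what seals the otherwise-open timing window.
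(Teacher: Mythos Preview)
Your approach is essentially the paper's: rule out $U$ via Claim~\ref{claim:para-np-hard-trees-path-agents}, handle $P^{u,t}$ by a trapping argument (the path agent sitting at $u$ at time $im$ enters $P^{u,t}$ and forever blocks the original agent from returning to $u$), and handle $P^{u,s}$ by observing that entering from $u$ forces a swap or collision with the incoming path agent---the paper phrases this last step via a minimal-$i$ argument rather than your explicit strong induction, but the content is identical. One remark: the parity/evenness-of-$m$ case analysis you flag as the main obstacle is unnecessary, since the relevant path agent sweeps the top segment of $P^{u,s}$ monotonically toward $u$ throughout the interval $[(i-1)m,im]$ while the original agent enters from the $u$-side and remains inside until time $im$, so the two must cross on the path and hence swap or collide regardless of parity.
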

    \begin{proofclaim}
    Due to Claim~\ref{claim:para-np-hard-trees-path-agents}, we know that $\turn_{im}(a) \notin U$.
    We need to prove that that $\turn_{im}(a) \notin W$.
    Assume that there exists an original agent $a$ such that $\turn_{im}(a) = v \in W$ for an $i \in \ell'$.
    We consider two cases: either $v = u_{t,q}$ for some $u \in U$ and $q \in [(\ell' - 1)m]$ or $v = u_{s,q}$ for some $u \in U$ and $q \in [\ell'm-1]$.

    \textbf{Case 1. $\boldsymbol{(v = u_{t,q})}$:} Notice that there exists a path agent $b$ such that $\turn_{im}(b) = u$ and, in any solution of makespan $\ell$, they need to continue moving through the path $P^{u,t}$. Since swaps are not allowed, the agent $a$ cannot leave the path $P^{u,t}$. Therefore this sequence cannot be a feasible solution as $t (a) \in V$ by construction. 

    \textbf{Case 2. $\boldsymbol{(v = u_{s,q})}$:} We show that this is impossible to happen. Let $i$ be the minimum integer such that $\turn_{im}(a) = u_{t,q}$. For the path agent $b=a_{u,\ell'-i}$, by the Claim~\ref{claim:para-np-hard-trees-path-agents} and the fact that $\turn_0(b) = u_{s,(\ell'-i)m-1}$, we know that $\turn_{(i-1)m +j-1}(b) = u_{s,(\ell'-1)m +j-1}$ for all $j \in [m]$. Finally, since $i$ is the minimum integer such that $\turn_{im}(a) = u_{t,q}$, we know that $\turn_{(i-1)m}(a) \notin P^{u,s}$. Therefore, if $\turn_{im}(a) = u_{t,q}$, then $a$ and $b$ must swap positions during some turn between $(i-1)m$ and $im$, contradicting the fact that swaps are not allowed. 

    Therefore, we have shown that $\turn_{im}(a) \in V'$ for every $i\in[\ell']\cup \{0\}$. It remains to show that $\turn_{im}(a)$ is a neighbor of $\turn_{(i-1)m}(a)$ in $T'$. Observe that any two neighboring vertices in $T'$ have distance $m-2$ in $T$ while two non-neighboring vertices in $T'$ have distance at least $2m-4$ in $T$. Observe also that $h\ge 2$ and, thus, $m\ge 6$. Therefore, if $\turn_{im}(a)$ is not a neighbor of $\turn_{(i-1)m}(a)$ in $T'$, we would have that in $T$, the agent $a$ would have to travel a distance of $2m-4$ which is strictly larger than the $m$ turns during which he is supposed to do so.
    \end{proofclaim}

    Finally, we claim that the sequence $\turn'_0,\turn'_1,\ldots, \turn'_{\ell'}$, where $\turn'_i = \turn'_{im}\mid_{A'}$ (the $\turn'_{im}$ restricted to the set of the original agents), is a feasible solution of $\langle T',A',\turn_0',t'\rangle$.
    First notice that for any agent $a \in A'$, we have that $\turn'_i(a)$ is a neighbor of $\turn'_{i-1}(a)$ (by Claim~\ref{claim:para-np-hard-trees-original-agents}). Since $T'$ is a tree, we know that the sequence is feasible if for any agent $a \in A'$ and $i \in [\ell']$, where $\turn'_{i-1}(a) \neq \turn'_{i}(a)$, there exists an agent $b \in A'$ such that $\turn'_{i}(a) =  \turn'_{i-1}(b)$ and $\turn'_{i}(b) =  \turn'_{i-1}(a)$.

    Assume that for an agent $a \in A'$ and $i \in [\ell']$ we have $\turn'_{i-1}(a) \neq \turn'_{i}(a)$. Furthermore, let $b$ be the agent that occupies the vertex $\turn'_{i}(a)$ at the end of turn $i-1$.
    We will show that $\turn'_{i}(b) = \turn'_{i-1}(a)$.
    Let $V_1$ and $V_2$ be the vertex sets of the two connected components of $T' - e$, where $e = \turn'_{i-1}(a) \turn'_{i}(a)$. W.l.o.g., assume that $\turn'_{i-1}(a) \in V_1$ and $ \turn'_{i}(a) \in V_2$. Let $A_1$ and $A_2$ be the partition of the agent set $A'$ where, for any $i \in [2]$, $a' \in A_i $ implies that $\turn_{i-1}'(a') \in V_i $.
    Observe that $N(\turn'_{i-1}(a)) \cap  V_2$ contains only the vertex $\turn'_{i}(a)$. Therefore, the only agent in $A_2$ that can move to $\turn'_{i-1}(a)$ in turn $i$ is $b$ (as $b$ occupies $\turn'_{i}(a)$ in turn $i-1$). Finally, since $|A_2| = |V_2|$, if $\turn'_{i}(b) \neq \turn'_{i-1}(a)$ then, at turn $i$, we would have $|A_2| +1 $ agent occupying $|V_2|$, which is a contradiction. Therefore, the aforementioned sequence is a feasible solution of $\langle T',A',\turn_0',t\rangle$ with makespan $\ell'$. This completes the proof.
 \end{proof}

\subsection{Cliquewidth plus makespan}

\thmMAPFisWHbyCwplusMakespan*

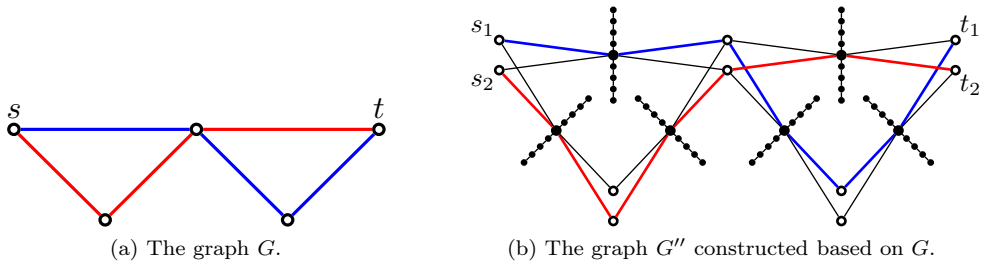
\begin{figure}[!t]
	\centering

	\subfloat[The graph $G$.]{
		\scalebox{1.2}{
			\begin{tikzpicture}[inner sep=0.4mm]

				\node[draw,black,circle,line width=1pt,fill=white] (s) at (0,1)[label=above:$s$]{};
				\node[draw,black,circle,line width=1pt,fill=white] (x) at (1,0){};
				\node[draw,black,circle,line width=1pt,fill=white] (z) at (3,0){};
				\node[draw,black,circle,line width=1pt,fill=white] (t) at (4,1)[label=above:$t$]{};
				\node[draw,black,circle,line width=1pt,fill=white] (y) at (2,1){};

				\draw[line width=1pt,blue] (s) -- (y);
				\draw[line width=1pt,red] (s) -- (x);
				\draw[line width=1pt,red] (x) -- (y);
				\draw[line width=1pt,red] (y) -- (t);
				\draw[line width=1pt,blue] (y) -- (z);
				\draw[line width=1pt,blue] (z) -- (t);
			\end{tikzpicture}
		}
	}\hspace{10pt}
	\subfloat[The graph $G''$ constructed based on $G$.]{
		\scalebox{1}{
			\tikzmath{\td = 0.2; \dist=0.15; \pd=\dist/sqrt(2);}
			\begin{tikzpicture}[inner sep=0mm,vertex/.style={draw,black,circle,line width=1pt,fill=white,minimum size=3pt},edge/.style={draw,black,circle,line width=1pt,fill=black,minimum size=3pt},path/.style={draw,black,circle,line width=1pt,fill=black,minimum size=1.5pt}]

				\node[vertex] (s1) at (0,2+\td)[label=above left:$s_1$]{};
				\node[vertex] (s2) at (0,2-\td)[label=below left:$s_2$]{};
				\node[vertex] (x1) at (1.5,0+\td){};
				\node[vertex] (x2) at (1.5,0-\td){};
				\node[vertex] (y1) at (3,2+\td){};
				\node[vertex] (y2) at (3,2-\td){};
				\node[vertex] (z1) at (4.5,0+\td){};
				\node[vertex] (z2) at (4.5,0-\td){};
				\node[vertex] (t1) at (6,2+\td)[label=above right:$t_1$]{};
				\node[vertex] (t2) at (6,2-\td)[label=below right:$t_2$]{};

				\node[edge] (e1) at (0.75,1){};
				\node[edge] (e2) at (2.25,1){};
				\node[edge] (e3) at (1.5,2){};
				\node[edge] (e4) at (3.75,1){};
				\node[edge] (e5) at (5.25,1){};
				\node[edge] (e6) at (4.5,2){};

				\foreach \x in {0,...,4}
				{
					\node[path] (p1a\x) at (0.75+\x*\pd,1+\x*\pd){};
					\node[path] (p1b\x) at (0.75-\x*\pd,1-\x*\pd){};
					\ifthenelse{\x > 0}{
						\pgfmathtruncatemacro{\dest}{\x - 1}
						\draw[line width=0.5pt,black] (p1a\x) -- (p1a\dest);
						\draw[line width=0.5pt,black] (p1b\x) -- (p1b\dest);
					}{}
				};

				\foreach \x in {0,...,4}
				{
					\node[path] (p2a\x) at (2.25-\x*\pd,1+\x*\pd){};
					\node[path] (p2b\x) at (2.25+\x*\pd,1-\x*\pd){};
					\ifthenelse{\x > 0}{
						\pgfmathtruncatemacro{\dest}{\x - 1}
						\draw[line width=0.5pt,black] (p2a\x) -- (p2a\dest);
						\draw[line width=0.5pt,black] (p2b\x) -- (p2b\dest);
					}{}
				};

				\foreach \x in {0,...,4}
				{
					\node[path] (p3a\x) at (1.5,2+\x*\dist){};
					\node[path] (p3b\x) at (1.5,2-\x*\dist){};
					\ifthenelse{\x > 0}{
						\pgfmathtruncatemacro{\dest}{\x - 1}
						\draw[line width=0.5pt,black] (p3a\x) -- (p3a\dest);
						\draw[line width=0.5pt,black] (p3b\x) -- (p3b\dest);
					}{}
				};

				\foreach \x in {0,...,4}
				{
					\node[path] (p4a\x) at (4.5,2+\x*\dist){};
					\node[path] (p4b\x) at (4.5,2-\x*\dist){};
					\ifthenelse{\x > 0}{
						\pgfmathtruncatemacro{\dest}{\x - 1}
						\draw[line width=0.5pt,black] (p4a\x) -- (p4a\dest);
						\draw[line width=0.5pt,black] (p4b\x) -- (p4b\dest);
					}{}
				};

				\foreach \x in {0,...,4}
				{
					\node[path] (p5a\x) at (3.75+\x*\pd,1+\x*\pd){};
					\node[path] (p5b\x) at (3.75-\x*\pd,1-\x*\pd){};
					\ifthenelse{\x > 0}{
						\pgfmathtruncatemacro{\dest}{\x - 1}
						\draw[line width=0.5pt,black] (p5a\x) -- (p5a\dest);
						\draw[line width=0.5pt,black] (p5b\x) -- (p5b\dest);
					}{}
				};

				\foreach \x in {0,...,4}
				{
					\node[path] (p6a\x) at (5.25-\x*\pd,1+\x*\pd){};
					\node[path] (p6b\x) at (5.25+\x*\pd,1-\x*\pd){};
					\ifthenelse{\x > 0}{
						\pgfmathtruncatemacro{\dest}{\x - 1}
						\draw[line width=0.5pt,black] (p6a\x) -- (p6a\dest);
						\draw[line width=0.5pt,black] (p6b\x) -- (p6b\dest);
					}{}
				};

				\draw[line width=1pt,blue] (s1) -- (e3);
				\draw[line width=1pt,blue] (e3) -- (y1);
				\draw[line width=1pt,blue] (y1) -- (e4);
				\draw[line width=1pt,blue] (e4) -- (z1);
				\draw[line width=1pt,blue] (z1) -- (e5);
				\draw[line width=1pt,blue] (e5) -- (t1);
				\draw[line width=1pt,red] (s2) -- (e1);
				\draw[line width=1pt,red] (e1) -- (x2);
				\draw[line width=1pt,red] (x2) -- (e2);
				\draw[line width=1pt,red] (e2) -- (y2);
				\draw[line width=1pt,red] (y2) -- (e6);
				\draw[line width=1pt,red] (e6) -- (t2);

				\draw[line width=0.5pt,black] (s2) -- (e3);
				\draw[line width=0.5pt,black] (s1) -- (e1);
				\draw[line width=0.5pt,black] (e1) -- (x1);
				\draw[line width=0.5pt,black] (x1) -- (e2);
				\draw[line width=0.5pt,black] (e2) -- (y1);
				\draw[line width=0.5pt,black] (e3) -- (y2);
				\draw[line width=0.5pt,black] (y2) -- (e4);
				\draw[line width=0.5pt,black] (y1) -- (e6);
				\draw[line width=0.5pt,black] (e4) -- (z2);
				\draw[line width=0.5pt,black] (z2) -- (e5);
				\draw[line width=0.5pt,black] (e5) -- (t2);
				\draw[line width=0.5pt,black] (e6) -- (t1);
			\end{tikzpicture}
		}
	}

	\caption{An example for the construction used in the proof of Theorem~\ref{thm:w-hardness-cw-makespan} on a yes-instance $\langle G, s,t,2,3 \rangle$ of \textsc{BEUP}.
    In subfigure (a), the colors red and blue distinguish two edge-disjoint paths of length $3$. In subfigure (b), they denote the corresponding trajectories followed by the two path agents.}
	\label{figure:example-w-hardness-cw}
\end{figure}

\begin{proof}
We reduce from the problem \textsc{Bounded Edge Undirected $(s,t)$-disjoint Paths (BEUP)} where the input consists of an undirected graph~$G$ with two distinct vertices $s,t$ and two positive integers~$k$ and~$d$; and the question is whether there are~$k$ edge-disjoint $(s,t)$-paths of length at most~$d$ in~$G$.
In~\cite{GolovachT11}, \textsc{BEUP} was shown to be $\W[1]$-hard when parameterised by the treewidth of~$G$ for every fixed $d \ge 10$.

Let $\langle G, s, t, k, d \rangle$ be an instance of \textsc{BEUP} where $G = (V,E)$.
We first describe the high level idea of the reduction.
We subdivide each edge of $G$ with a new vertex and we enforce that this vertex is blocked in all but at most one turn.
This can be done in a similar fashion to the path agents used in the proof of Theorem~\ref{thm:np-hard-trees}.
Finally, we add to $G$ many twins of each original vertex simulating that several agents can occupy a single original vertex at the same time.
Any solution to the instance $\langle G, s, t, k, d \rangle$ then corresponds to moving agents situated on $k$ twins of $s$ to arbitrary $k$ twins of $t$.

Now, we describe the construction formally.
First, we subdivide each edge $e \in E$ with a new vertex $v_e$.
Afterwards, we add a path $P_e$ on $4d-3$ vertices $v_e^1, v_e^2, \dots, v_e^{4d - 3}$ such that its middle vertex $v_e^{2d-1}$ is identified with $v_e$ and all other vertices are disjoint from the rest of the graph.
Let $G'$ be the obtained graph.
It is easy to see that $\tw(G') = \tw(G)$.
And therefore, it follows that $\cw(G') = 3 \cdot 2^{\tw(G')-1} \le 3 \cdot 2^{\tw(G)-1}$ by~\cite{CorneilR05}.

We now modify $G'$ to a graph $G''$ by replacing every original vertex $v \in V$ of $G$ with an independent set of $k$ twins $v_1, \dots, v_k$.
See Figure~\ref{figure:example-w-hardness-cw} for example.
The treewidth of $G''$ may no longer be bounded by a function of $\tw(G)$.
However, it is easy to see that $\cw(G'') = \cw(G')$ 
and thus, the cliquewidth of $G''$ is at most exponential in the treewidth of $G$.

It remains to define the set of agents $A$ together with their starting and ending positions.
The set of agents $A$ consists of a set of \emph{path agents} $A_0 = \{a_i \mid i \in [k]\}$ and a set of $2d-2$ \emph{edge agents} $B_e = \{b_e^i \mid i \in [2d-2]\}$ for every edge $e \in E$.
We set $s'_0(a_i) = s_i$, $t'(a_i) = t_i$ for every path agent $a_i \in A_0$.
In other words, the starting and ending positions of path agents lie in the independent sets corresponding to the original vertices $s$ and $t$, respectively.
For every $e \in E$ and every $i \in [2d-2]$, we set $s'_0(b_e^i) = v_e^i$ and $t'(b_e^i)=v_e^{2d + i - 1}$.
We now show that $\ell(\langle G'',A,s'_0,t'\rangle)\leq 2 d$ if and only if $\langle G, s, t, k, d \rangle$ is a yes-instance of \textsc{BEUP}.

First, let $\langle G, s, t, k, d \rangle$ be a yes-instance of \textsc{BEUP} and let $\{P'_i= (s= u^i_0, \dots, u^i_{d_i}=t) \mid i \in [k]\}$ be a set of $k$ edge-disjoint $(s,t)$-paths in $G$ such that $d_i \le d$ for every $i \in [k]$.
We first define the movement of the path agents.
We let the agent $a_i$ follow the natural subdivision of the path $P'_i$ in $G''$.
Formally for every $i \in [k]$, we set $s'_{2j}(a_i) = (u^i_j)_i$ and $s'_{2j-1}(a_i)=v_{\{u^i_{j-1}, u^i_j\}}$ for $j \in [d_i]$, and $s'_{2j-1}(a_i) = s'_{2j}(a_i) = t_i$ for $j \in [d_i+1,d]$.
All the movements of path agents are feasible since in even turns they occupy pairwise different copies of the original vertices from~$G$ and in odd turns, they occupy vertices corresponding to edges on the edge-disjoint paths $P'_1, \dots, P'_k$.

It remains to define the movement of the edge agents.
Fix an edge $e \in E$ of $G$.
Notice that the vertex $v_e$ is used by a path agent in at most one turn.
If that is the case, let $c_e$ be the time when the vertex $v_e$ is occupied by a path agent and otherwise, let $c_e = 2d$.
We define the movement of agents of $B_e$ in such a way that they make way for the path agent exactly at time $c_e$.
For every $i \in [2d - c_e: 2d]$, we let the agent $b_e^i$ walk directly to its destination using the shortest path of length $2d-1$.
In other words, we set $s'_j(b_e^i) = v_e^{j+i}$ for $j \in [2d - 1]$ and $s'_{2d}(b_e^i) = v_e^{2d+i-1} = t'(b_e^i)$.
Every other agent from $B_e$ moves towards its destination for the first $c_e - 1$ turns, then waits one turn and continues afterwards with no further delay.
Formally, we set for every $i \in [2d - c_e-1]$ and $j \in [2d]$
\[
s'_j(b_e^i) = \begin{cases}
v_e^{j+i} &\text{if $j \le c_e - 1$, and}\\
v_e^{j+i-1} &\text{if $j \ge c_e$.}
\end{cases}
\]

It is easy to check that the edge agents in $B_e$ are occupying pairwise different positions at each turn and moreover, they do not swap.

\begin{claim}
The vertex $v_e$ is at time $c_e$ not occupied by any agent of $B_e$.
\end{claim}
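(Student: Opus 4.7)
The plan is to verify the claim by a direct calculation based on the two explicit movement patterns the construction assigns to the edge agents. The set $B_e$ partitions into the \emph{direct-walking} agents with $i \in [2d-c_e : 2d-2]$, who progress toward their destination in every turn, and the \emph{wait-once} agents with $i \in [1 : 2d-c_e-1]$, whose single pause occurs precisely at turn $c_e$. For a representative of each group I would compute the index $q$ such that $s'_{c_e}(b_e^i) = v_e^q$ by substituting $j = c_e$ into the defining formula, and then check that $q \neq 2d-1$ (which is the index of $v_e$ on the path~$P_e$).

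For a direct-walking agent, when $c_e \leq 2d-1$ the defining formula gives $s'_{c_e}(b_e^i) = v_e^{c_e+i}$; the lower bound $i \geq 2d-c_e$ immediately forces $c_e+i \geq 2d > 2d-1$, so the agent has already moved past~$v_e$. The corner case $c_e = 2d$ is handled separately by the formula $s'_{2d}(b_e^i) = v_e^{2d+i-1}$ combined with $i \geq 1$. For a wait-once agent, since $c_e \leq 2d-2$ in this subcase, the second branch of its formula gives $s'_{c_e}(b_e^i) = v_e^{c_e+i-1}$, and the upper bound $i \leq 2d-c_e-1$ forces $c_e+i-1 \leq 2d-2 < 2d-1$, so this agent has not yet reached~$v_e$. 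The two inequalities meet precisely at the threshold $i = 2d-c_e$, which is why the split into the two groups was chosen this way.

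The only real obstacle is the arithmetic bookkeeping: correctly tracking the index shift introduced by the single wait turn, making sure the partition boundary $i = 2d-c_e$ is placed on the right side, and checking the corner case $c_e = 2d$ (where no path agent uses $v_e$, so the wait-once group is empty). Beyond this, there is no combinatorial depth — the two movement schemes were engineered precisely so that the ``hole'' on $P_e$ at position $v_e$ opens up exactly at turn $c_e$, leaving the slot free for the path agent to pass through at that turn.
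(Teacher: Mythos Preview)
Your proposal is correct and follows essentially the same approach as the paper: both split $B_e$ at the threshold $i = 2d - c_e$ into direct-walking and wait-once agents, substitute $j = c_e$ into the respective position formulas, and verify the resulting index is strictly above (respectively below) $2d-1$. You are in fact slightly more careful than the paper, which silently uses $s'_{c_e}(b_e^i) = v_e^{i+c_e}$ for the direct-walking case without separating out $c_e = 2d$.
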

\begin{proofclaim}
Assume that there is an agent $b_e^i$ such that $s'_{c_e}(b_e^i) = v_e = v_e^{2d-1}$.
If $i \ge 2d -c_e$, then agent $b_e^i$ moves without any delays and $s'_{c_e}(b_e^i) = v_e^{i + c_e} \neq v_e^{2d-1}$ since $i + c_e \ge 2d$.
On the other hand, if $i < 2d - c_e$ then $s'_{c_e}(b_e^i) = v_e^{c_e + i -1} \neq v_e^{2d-1}$ since $c_e + i - 1 < 2d - 1$.
\end{proofclaim}

It follows that also edge agents and path agents occupy pairwise disjoint positions at all times.
Therefore, the sequence $s'_1, \dots, s'_{2d}$ is feasible and $\langle G'',A,s'_0,t',2d\rangle$ is a yes-instance of \MAPFShort.

For the reverse direction, assume that $\langle G'',A,s'_0,t',2d\rangle$ is a yes-instance of \MAPFShort and let $s'_1, \dots , s'_{2d}$ be the witnessing feasible solution.
Every path agent $a_i$ performs a walk starting in $s_i$ and ending in $t_i$.
We can see this walk as a subdivision of a path in $G$ that we further transform into a path $P'_i$ in $G$ by removing any cycles and repeated vertices.
It is easily seen that the length of $P'_i$ is at most $d$.
Therefore, it suffices to show that $P'_1, \dots, P'_k$ are edge-disjoint.

\begin{claim}
For every edge $e \in E$ of $G$, there exists at most one $c_e \in [2d-1]$ such that the vertex $v_e = v_e^{2d-1}$ is not occupied by any agent of $B_e$ at time $c_e$.
\end{claim}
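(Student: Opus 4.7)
The plan is to use a simple counting argument that exploits the cut structure of the path gadgets $P_e$. I will first establish that the midpoint $v_e=v_e^{2d-1}$ is a cut vertex of $G''$ separating the ``left half'' $\{v_e^1,\dots,v_e^{2d-2}\}$ from the ``right half'' $\{v_e^{2d},\dots,v_e^{4d-3}\}$. This is immediate from the construction: the only vertex of $P_e$ that has neighbors outside $P_e$ is $v_e^{2d-1}$, since that is precisely where $P_e$ was glued to the subdivided graph (and, passing to $G''$, to the twins of the endpoints of $e$). All other vertices of $P_e$ remain of degree two, lying on one side of $v_e^{2d-1}$ or the other.

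Next I would argue that \emph{every} agent $b_e^i \in B_e$ must occupy $v_e^{2d-1}$ at some turn in $[2d-1]$. Indeed, $b_e^i$ starts at $v_e^i$ (in the left half) and must end at $v_e^{2d+i-1}$ (in the right half); since $v_e^{2d-1}$ separates these halves in $G''$, the walk traced by the agent must visit $v_e^{2d-1}$ at least once. At turn $0$ the agent sits at its source $v_e^i\neq v_e^{2d-1}$, and at turn $2d$ it sits at its target $v_e^{2d+i-1}\neq v_e^{2d-1}$, so the visit occurs at some intermediate turn $t_i\in[2d-1]$.

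Finally, the claim follows by pigeonhole. Since at each turn any vertex hosts at most one agent, we may pick the $t_i$ so that $t_1,\dots,t_{2d-2}$ are pairwise distinct. This yields $2d-2$ distinct turns in $[2d-1]$ at which $v_e^{2d-1}$ is occupied by an agent of $B_e$, leaving at most one turn $c_e\in[2d-1]$ unoccupied, which is exactly the content of the claim. The only point that really needs a sanity check is the cut-vertex property of $v_e^{2d-1}$ in $G''$ (not just in $P_e$), and this is easy because the twin-expansion step only adds edges incident to the copies of the original vertices of $G$, none of which lies on $P_e\setminus\{v_e^{2d-1}\}$. There is no significant obstacle beyond this bookkeeping.
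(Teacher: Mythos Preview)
Your proof is correct and considerably cleaner than the paper's. The paper argues first that each edge agent $b_e^i$ must traverse $P_e$ along the shortest path with at most one stationary step (since any alternative walk has length at least $2d+1$), deduces that the agents never leave $P_e$ and keep their relative order, and then runs a somewhat delicate contradiction: assuming two free turns $c_e<c'_e$, it pins down the exact position of each agent $b_e^i$ with $i\le 2d-c_e-1$ at time $c_e$ and shows these agents have no slack left, forcing $b_e^{2d-c'_e}$ onto $v_e^{2d-1}$ at time $c'_e$.

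Your cut-vertex plus pigeonhole argument sidesteps all of this bookkeeping. The only point worth phrasing a touch more carefully is the line ``we may pick the $t_i$ so that $t_1,\dots,t_{2d-2}$ are pairwise distinct'': what you are really using is that the visit-time sets $\{t\in[2d-1]: s'_t(b_e^i)=v_e^{2d-1}\}$ are nonempty (by the cut property) and pairwise disjoint (since a vertex hosts at most one agent per turn), so choosing one representative from each automatically gives $2d-2$ distinct occupied turns. What the paper's longer argument buys is additional structural information about the trajectories of the edge agents (they stay on $P_e$, keep order, have at most one delay), but none of that is needed for the claim as stated; your approach is the more economical one here.
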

\begin{proofclaim}
For every agent $b_e^i$, the distance between $s'_0(b_e^i)$ and $t'(b_e^i)$ is exactly $2d-1$ and moreover, any walk other than the shortest path has length at least $2d + 1$.
Therefore, the agent $b_e^i$ must advance by one step towards its target in every turn up to one exception when he may remain stationary.
As a consequence, the edge agents $B_e$ do not leave the path $P_e$ and their order along $P_e$ remains unchanged.
In particular, they cannot swap.

Now, assume for a contradiction that there are two turns $c_e, c'_e \in [2d-1]$ in which the vertex $v_e^{2d-1}$ is not occupied by any agent of $B_e$.
Without loss of generality, let $c_e < c'_e$.
First, consider the agent $b_e^{2d-c_e-1}$.
If it moved without any delays, it would arrive at vertex $v_e^{2d-1}$ exactly at time $c_e$.
Thus, we have $s'_{c_e}(b_e^{2d-c_e-1}) = v_e^j$ where $j < 2d-1$.
On the other hand, if $j < 2d -2$ then the agent $b_e^{2d-c_e-1}$ cannot make it to its final destination $v_e^{4d-c_e-2}$ in the remaining $2d - c_e$ turns.
Therefore, it must be that $s'_{c_e}(b_e^{2d-c_e-1}) = v_e^{2d - 2}$.
In fact, we can show that $s'_{c_e}(b_e^{i}) = v_e^{i + c_e - 1}$ for every $i \in [2d-c_e-1]$.
It holds since the edge agents in $B_e$ must keep their relative order along $P_e$ and additionally, they cannot be more than $2d-c_e$ steps away from their targets.

However, this also means that all the agents $b_e^i$ for $i \in [2d-c_e-1]$ cannot make any further delays in order to reach their targets in time.
In particular, the agent $b_e^{2d - c'_e}$ occupies the vertex $v_e^{2d-1}$ at time $c'_e$ which is a contradiction.
\end{proofclaim}

It follows that at most one path agent can use the vertex $v_e$ throughout all $2d$ turns.
Therefore, the paths $P'_1, \dots, P'_k$ are edge-disjoint and $\langle G, s, t, k, d \rangle$ is a yes-instance of \textsc{BEUP}.

The described reduction together with~\cite{GolovachT11} takes care of the cases when $p$ is even.
For odd $p \ge 20$, it is sufficient to perform the same reduction from an instance $\langle G, s, t, k, \frac{p-1}{2} \rangle$ of \textsc{BEUP} and additionally, attach a leaf $v_a$ to every original starting position $s'_0(a)$ and set new a start $s'_0(a) = v_a$.
This modification does not increase the cliquewidth of the graph $G''$ and it is straightforward to prove its correctness. 
\end{proof}


\subsection{Maximum degree plus makespan}

\thmMAPFisNPCPlanarMakespanDelta*

\begin{figure}[bt]
    \centering

    \subfloat[Variable gadget $G^x$, $m=m(x)$.]{
    \scalebox{1.1}{
    \begin{tikzpicture}[inner sep=0.5mm]

    \node[draw,black,circle,line width=1pt,fill=white] (v16) at (2,0)[label=above right:{\small $v^x_{1,6}$}]{};
    \node[draw,black,circle,line width=1pt,fill=blue] (v11) at (3,0)[label=right:{\small$v^x_{1,1}$}]{};
    \node[draw,black,circle,line width=1pt,fill=white] (v12) at (3.5,0.87)[label=right:{\small$v^x_{1,2}$}]{};
    \node[draw,black,circle,line width=1pt,fill=red] (v13) at (3,1.73)[label=right:{\small$v^x_{1,3}$}]{};
    \node[draw,black,circle,line width=1pt,fill=white] (v14) at (2,1.73)[label=below right:{\small$v^x_{1,4}$}]{};
    \node[draw,black,circle,line width=1pt,fill=white] (v26) at (2,1.73)[label=above left:{\small$v^x_{2,6}$}]{};
    \node[draw,black,circle,line width=1pt,fill=white] (v15) at (1.5,0.87)[label=left:{\small$v^x_{1,5}$}]{};
    \node[draw,black,circle,line width=1pt,fill=blue] (v21) at (2.64,2.5)[label=right:{\small$v^x_{2,1}$}]{};
    \node[draw,black,circle,line width=1pt,fill=white] (v22) at (2.3,3.44)[label=right:{\small$v^x_{2,2}$}]{};
    \node[draw,black,circle,line width=1pt,fill=red] (v23) at (1.31,3.62)[label=above:{\small$v^x_{2,3}$}]{};
    \node[draw,black,circle,line width=1pt,fill=white] (v24) at (0.67,2.85)[label=left:{\small$v^x_{2,4}$}]{};
    \node[draw,black,circle,line width=1pt,fill=white] (v25) at (1.01,1.9)[label=left:{\small$v^x_{2,5}$}]{};
    \node[draw,black,circle,line width=1pt,fill=white] (vm5) at (1.02,-0.18)[label=left:{\small$v^x_{m,5}$}]{};
    \node[draw,black,circle,line width=1pt,fill=white] (vm6) at (0.68,-1.11)[label=left:{\small$v^x_{m,6}$}]{};
    \node[draw,black,circle,line width=1pt,fill=blue] (vm1) at (1.33,-1.87)[label=left:{\small$v^x_{m,1}$}]{};
    \node[draw,black,circle,line width=1pt,fill=white] (vm2) at (2.31,-1.7)[label=right:{\small$v^x_{m,2}$}]{};
    \node[draw,black,circle,line width=1pt,fill=red] (vm3) at (2.64,-0.76)[label=right:{\small$v^x_{m,3}$}]{};
    \node [label={[xshift=1.9cm, yshift=-0.55cm]{\small$v^x_{m,4}$}}] {};

    \node[draw,black,circle,line width=1pt,fill=black] () at (0,1.8)[]{};
    \node[draw,black,circle,line width=1pt,fill=black] () at (-0.25,1.75)[]{};
    \node[draw,black,circle,line width=1pt,fill=black] () at (-0.4,1.5)[]{};

    \node[draw,black,circle,line width=1pt,fill=black] () at (0,-0.08)[]{};
    \node[draw,black,circle,line width=1pt,fill=black] () at (-0.25,-0.04)[]{};
    \node[draw,black,circle,line width=1pt,fill=black] () at (-0.4,0.14)[]{};

    \draw[line width=1pt,black] (v16) -- (v11);
    \draw[line width=1pt,black] (v11) -- (v12);
    \draw[line width=1pt,black] (v12) -- (v13);
    \draw[line width=1pt,black] (v13) -- (v14);
    \draw[line width=1pt,black] (v14) -- (v15);
    \draw[line width=1pt,black] (v15) -- (v16);

    \draw[line width=1pt,black] (v14) -- (v21);
    \draw[line width=1pt,black] (v21) -- (v22);
    \draw[line width=1pt,black] (v22) -- (v23);
    \draw[line width=1pt,black] (v23) -- (v24);
    \draw[line width=1pt,black] (v24) -- (v25);
    \draw[line width=1pt,black] (v25) -- (v14);

    \draw[line width=1pt,black] (v16) -- (vm3);
    \draw[line width=1pt,black] (vm3) -- (vm2);
    \draw[line width=1pt,black] (vm2) -- (vm1);
    \draw[line width=1pt,black] (vm1) -- (vm6);
    \draw[line width=1pt,black] (vm6) -- (vm5);
    \draw[line width=1pt,black] (vm5) -- (v16);

    \end{tikzpicture}
    }
    }\hspace{20pt}
    \subfloat[Clause gadget $H^C$.]{
    \scalebox{1}{
    \begin{tikzpicture}[inner sep=0.3mm]
    \node[draw,black,circle,line width=1pt,fill=white] (v15) at (2,2)[label=left:$v^c_{1,5}$]{};
    \node[draw,black,circle,line width=1pt,fill=white] (v14) at (2,1)[label=left:$v^c_{1,4}$]{};
    \node[draw,black,circle,line width=1pt,fill=white] (v13) at (2,0)[label=left:$v^c_{1,3}$]{};
    \node[draw,black,circle,line width=1pt,fill=white] (v12) at (3,0)[label=above right:$v^c_{1,2}$]{};
    \node[draw,black,circle,line width=1pt,fill=white] (v11) at (3,1)[label=above:$v^c_{1,1}$]{};
    \node[draw,black,circle,line width=1pt,fill=white] (v22) at (3,-1)[label=below right:$v^c_{2,2}$]{};
    \node[draw,black,circle,line width=1pt,fill=white] (v23) at (2,-2)[label=below:$v^c_{2,3}$]{};
    \node[draw,black,circle,line width=1pt,fill=white] (v24) at (3,-2)[label=below:$v^c_{2,4}$]{};
    \node[draw,black,circle,line width=1pt,fill=white] (v25) at (4,-2)[label=below:$v^c_{2,5}$]{};
    \node[draw,black,circle,line width=1pt,fill=white] (v21) at (2,-1)[label=left:$v^c_{2,1}$]{};
    \node[draw,black,circle,line width=1pt,fill=white] (v32) at (4,0)[label=above right:$v^c_{3,2}$]{};
    \node[draw,black,circle,line width=1pt,fill=white] (v33) at (5,0)[label=right:$v^c_{3,3}$]{};
    \node[draw,black,circle,line width=1pt,fill=white] (v34) at (5,1)[label=right:$v^c_{3,4}$]{};
    \node[draw,black,circle,line width=1pt,fill=white] (v35) at (5,2)[label=right:$v^c_{3,5}$]{};
    \node[draw,black,circle,line width=1pt,fill=white] (v31) at (4.5,-0.8)[label=right:$v^c_{3,1}$]{};
    \node[draw,black,circle,line width=1pt,fill=white] (vt) at (3.5,-0.5)[label=below right:$v^c_{t}$]{};

    \draw[line width=1pt,black] (v15) -- (v14);
    \draw[line width=1pt,black] (v14) -- (v13);
    \draw[line width=1pt,black] (v13) -- (v12);
    \draw[line width=1pt,black] (v12) -- (v11);
    \draw[line width=1pt,black] (v12) -- (v22);
    \draw[line width=1pt,black] (v21) -- (v22);
    \draw[line width=1pt,black] (v22) -- (v23);
    \draw[line width=1pt,black] (v23) -- (v24);
    \draw[line width=1pt,black] (v24) -- (v25);
    \draw[line width=1pt,black] (v22) -- (vt);
    \draw[line width=1pt,black] (vt) -- (v32);
    \draw[line width=1pt,black] (v12) -- (v32);
    \draw[line width=1pt,black] (v32) -- (v31);
    \draw[line width=1pt,black] (v32) -- (v33);
    \draw[line width=1pt,black] (v33) -- (v34);
    \draw[line width=1pt,black] (v34) -- (v35);
    \draw[line width=1pt,black] (v11) -- (v12);
    \end{tikzpicture}
}
}

\caption{The two gadgets used in the proof of Theorem~\ref{thm:MAPFisNPC:PlanarMakespanDelta}. In subfigure (a), the color red (blue resp.) is used on the vertices that may be incident to clause gadgets, in which $x$ appear as a negative (positive resp.) literal.}
\label{figure:planar-hard}
\end{figure}
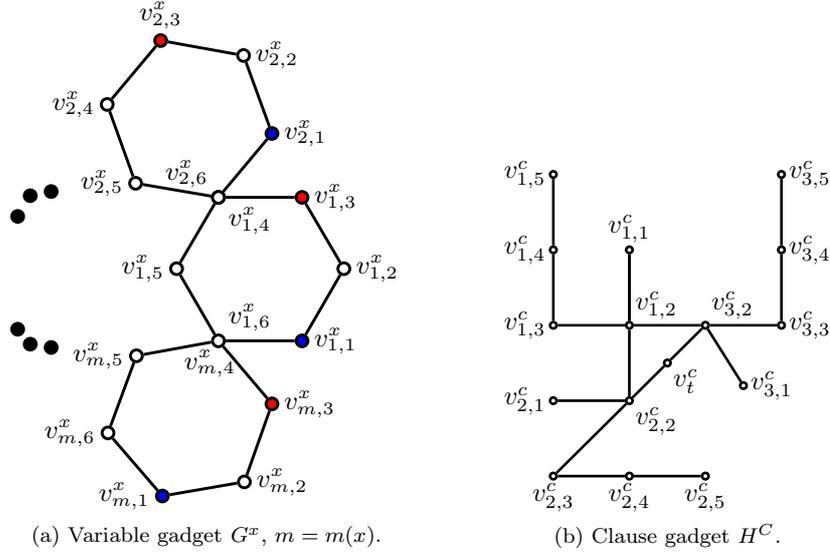

\begin{proof}
    The problem is clearly in NP. To show that is is also NP-hard, we present a reduction from the \textsc{Planar $3$-SAT} problem which is known to be NP-complete~\cite{GJ79}.
    In that problem, a 3CNF formula $\phi$ on $n$ variables and $m$ clauses is given as an input. Let $X$ denote the set of variables appearing in $\phi$. We say that a bipartite graph $G'=(V,U,E)$ \emph{corresponds} to $\phi$ if it is constructed from $\phi$ in the following way: for each variable $x_i \in X$ add the \emph{variable vertex} $v_i$ in $V$ and for each clause $C_j$ of $\phi$ add a \emph{clause vertex} $c_j$ in $U$. Then the edge $v_ic_j$ is added if the variable $x_i$ appears in a literal of the clause $C_j$.
    A 3CNF formula $\phi$ is valid as input to the \textsc{Planar 3-SAT} problem if the graph $G'$ that corresponds to $\phi$ is planar.
    The question is whether there exists a truth assignment to the variables of $X$ satisfying $\phi$.

    Our construction starts with a planar embedding of the bipartite graph $G'$ that corresponds to $\phi$.
    For each variable $x\in X$, let $m(x)$ be the number clauses that contain $x$ or $\lnot x$.
    First, we replace each clause vertex $c$ by $H^c$, which is a copy of the \emph{clause gadget} $H$ (illustrated in Figure~\ref{figure:planar-hard}(a)). Then, we replace each variable vertex $x$ by $G^x$, a copy of the the \emph{variable gadget} $G^x$ (illustrated in Figure~\ref{figure:planar-hard}(b)). The superscripts $x$ and $c$ will be used to differentiate between the vertices of the variable and clause gadgets respectively. Note that all the edges of $G'$ have been removed at this stage.
    So, for each edge $xc$ in $G'$, we add two new edges connecting vertices of $G^x$ and $H^c$.
    In particular, for each clause gadget $H^c$, we fix an arbitrary ordering of the literals appearing in $C$. Then, for each $xc\in E(G')$, we assign (arbitrarily) an $i \in [m(x)]$ that hasn't already been assigned to a different literal appearing in $C$. Let $x$ be in the $j$-th literal of $C$.
    If this literal is a positive one, then we add the edges $v^x_{i,1}v^c_{j,1}$ and $v^x_{i,1}v^c_{j,5}$.
    Otherwise, we add the edges $v^x_{i,3}v^c_{j,1}$ and $v^x_{i,3}v^c_{j,5}$. Let $G$ be the resulting graph. Note here that the arbitrary choices made for these edges will only impact the planarity of $G$.
    Thus, we will specify them in the final stage of the proof.

    We continue by defining the set of agents $A$ and the functions $s_0$ and $t$.
    First, for each clause $C$, we create four \emph{clause agents}, three of them denoted as $a^c_i$, $i \in [3]$, and the final as $a^c$.
    For $i \in [3]$, we set $s_0(a^c_i) = v^c_{i,1}$ and $t(a^c_i) = v^c_{i,4}$.
    Also, $s_0(a^c) = v^c_{1,2}$ and $t(a^c) = v^c_t$.
    Next, for each variable $x$, we create $m(x)$ \emph{variable agents} $a^x_i$, $i \in [m(x)]$.
    For each $i \in [m(x)]$, we set $s_0(a^x_i) = v^x_{i,2}$ and $t(a^x_i) = v^x_{i,5}$.
    This completes our construction. Observe that, for each $C$, if the agents $a^c_i$, $i\in[3]$, move only through edges of $H^c$, then any feasible solution will have a makespan of at least $4$.

    We are now ready to show that $\phi$ is a yes-instance of \textsc{Planar $3$-SAT} if and only if $\ell(\langle G,A,s_0,t\rangle)\leq 3$. First, assume that we have a satisfying assignment $\sigma$ for $\phi$. We will show that there exists a feasible solution $s_1,s_2,s_3$ for $\langle G,A,s_0,t\rangle$.
    For each variable $x$ that appears in $m$ clauses, if $\sigma(x) = true$ then, for all agents $a^x_i$, $i \in [m]$, we set $s_j(a^x_i) = v^x_{i,2+j}$ for all $j\in [3]$.
    Otherwise, ($\sigma(x) = false$), for all $i \in [m]$, we set $s_1(a^x_i) = v^x_{i,1}$, $s_2(a^x_i) = v^x_{i,6}$ and $s_3(a^x_i) = v^x_{i,5}$ (note that $v^x_{1,6}=v^x_{m,4}$). 
    Next, for each clause $C$, we deal with the agents $a^c_i$, $i\in [3]$, and $a^c$.
    Since $\sigma$ is a satisfying assignment, there exists at least one literal $x$ in $C$ that satisfies it. W.l.o.g., assume that $x$ is the $j$-th literal of $C$. For the agents $a^c_i$, $i \in [3] \setminus \{j\}$
    we set $s_1(a^c_i) = v^c_{i,2}$, $s_2(a^c_i) = v^c_{i,3}$ and $s_3(a^c_i) = v^c_{i,4}$.
    For the agent $a^c_j$, we set $s_1(a^c_i) = u$, $s_2(a^c_i) = v^c_{j,5}$ and
    $s_3(a^c_i) = v^c_{j,4}$, where $u$ is the common neighbor of $v^c_{j,1}$ and $v^c_{j,5}$ (which belongs in some variable gadget).
    Finally, we define the movement of the agent $a^c$. If $j=1$ then we set
    $s_1(a^c) = v^c_{1,2} $, $s_2(a^c) = v^c_{2,2}$ and $s_3(a^c) = v^c_t$.
    Otherwise, $s_1(a^c) = v^c_{j,2} $, $s_2(a^c) = v^c_t$ and $s_3(a^c) = v^c_t$.

    Observe that for any agent $a \in A$, $s_{i}(a)$ is either a neighbor of $s_{i-1}(a)$, or the same as $s_{i-1}(a)$. Therefore, it suffices to show that no two agents are occupying the same vertex during the same turn.
    By the definition of the sequence $s_1,s_2,s_3$, this may only happen between a clause and a variable agent and only during the first turn.
    Let $a$ be a clause agent that starts at a clause gadget $H^c$, and $s_1(a)=u\in G^x$.
    Notice that, by the definition of $s$, the variable $x$ has received by $\sigma$ a truth value such that the literal containing $x$ satisfies the clause $C$.
    We consider two cases, either this literal is $x$ or $\neg x$.
    In the first case, $\sigma(x)=true$; therefore, for all $i \in [m(x)]$,
    $s_1 (a^x_i) = v^x_{i,3}$. Also, $s_1 (a) = v^x_{i,1}$ since the represented literal
    is a positive one and by the construction of $G$. In the latter case, $\sigma(x)=false$; therefore, for all $i \in [m(x)]$, $s_1 (a^x_i) = v^x_{i,1}$.
    Also, $s_1 (a) = v^x_{i,3}$, since the represented literal is a negative one and by the construction of $G$. In both cases, there is no collision. This finishes the first direction of the reduction.

    For the reverse direction, we show that, if $\ell(\langle G,A,s_0,t\rangle)\leq 3$, then there exists a satisfying assignment of $\phi$.
    We start with some observations.
    \begin{observation}
        In any solution of makespan $3$, any agent $a \in A \setminus \{a^c \mid c \in \phi \}$
        must move through a shortest path between $s_0 (a)$ and $t (a)$.
    \end{observation}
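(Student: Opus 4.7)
The plan is to reduce the observation to a distance computation: I will show that for every agent $a \in A \setminus \{a^c \mid c \in \phi\}$, the distance $\dst_G(s_0(a), t(a))$ in the whole graph $G$ equals exactly $3$. Since an agent traverses at most one edge per turn, in three turns it can only reach vertices at distance at most $3$; hence if the target lies at distance exactly $3$, every trajectory of length~$3$ from $s_0(a)$ to $t(a)$ must be an actual shortest path (no waiting, no detours, no backtracking).

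For the clause agents $a^c_i$ with $i \in [3]$, I would first observe that inside the clause gadget~$H^c$ alone, the vertices $v^c_{i,1}, v^c_{i,2}, v^c_{i,3}, v^c_{i,4}$ form an induced path of length $3$, hence $\dst_{H^c}(v^c_{i,1}, v^c_{i,4}) \le 3$; a quick inspection of $H^c$ shows that no shorter path exists. For the variable agents $a^x_i$, a symmetric argument inside the hexagon~$G^x_i$ of the variable gadget $G^x$ shows that $\dst_{G^x}(v^x_{i,2}, v^x_{i,5}) = 3$, since either direction of the hexagon yields a path of length~$3$ and the hexagon is induced.

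It remains to verify that the inter-gadget edges added in the construction cannot provide shortcuts, which is the only delicate point. The key observation is that the only edges leaving a gadget go from $v^x_{i,1}$ or $v^x_{i,3}$ (in variable gadgets) to $v^c_{j,1}$ or $v^c_{j,5}$ (in clause gadgets). So any walk that leaves the home gadget of $a$ and returns to its target must traverse at least two inter-gadget edges. For a clause agent, such a detour from $v^c_{i,1}$ would first have to reach $v^c_{j,1}$ or $v^c_{j,5}$ (for some~$j$) and eventually come back to $v^c_{i,4}$, which can be seen to require strictly more than $3$ edges by a short case check on the coordinates $(j, 1)$ versus $(i, 4)$. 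An analogous check handles the variable agents, where the exit points $v^x_{i,1}$, $v^x_{i,3}$ are at distance $1$ from $s_0(a^x_i) = v^x_{i,2}$ but re-entering the variable gadget and reaching $v^x_{i,5}$ from a clause vertex costs at least $3$ additional steps.

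Thus the distance in $G$ between source and target equals $3$ for every such agent, and the observation follows. The only genuine obstacle is the routine but tedious case analysis of the inter-gadget edges; no deeper argument is needed.
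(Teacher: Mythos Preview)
Your approach is the same as the paper's: show that every agent in the specified set has $\dst_G(s_0(a),t(a))=3$, so that in three turns the agent can only reach $t(a)$ by traversing a shortest path. The paper compresses this into a single sentence, simply asserting that the distance is exactly~$3$; your write-up expands the verification.

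One point in your case analysis is inaccurate, though. You claim that for a clause agent $a^c_i$ any walk that leaves $H^c$ and returns to $v^c_{i,4}$ requires \emph{strictly more than} three edges. This is false: the walk $v^c_{i,1}\to u\to v^c_{i,5}\to v^c_{i,4}$, where $u\in\{v^x_{k,1},v^x_{k,3}\}$ is the variable-gadget vertex adjacent to both $v^c_{i,1}$ and $v^c_{i,5}$, has length exactly three. Indeed, this alternative shortest path is precisely what the reduction exploits when the $i$-th literal satisfies the clause. The over-claim does not break your argument --- to establish $\dst_G(s_0(a),t(a))\ge 3$ you only need to rule out walks of length at most two, and your ``at least two inter-gadget edges'' remark already does most of that work --- but the phrasing ``first have to reach $v^c_{j,1}$ or $v^c_{j,5}$'' overlooks that the agent already \emph{starts} at such an exit vertex, and the subsequent ``strictly more than $3$'' conclusion is wrong.
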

    Indeed, for any agent in the given set, the distance between its starting and terminal positions is exactly $3$.
    \begin{observation}
        Let $A_x$ be the set of agents $\{a^x_i \mid i \in [m(x)]\}$  for a variable $x$.
        In any solution of makespan $3$ either
        $(s_1(a^x_i),s_2(a^x_i), s_3(a^x_i)) = (v_{i,3}^x,v_{i,4}^x,v_{i,5}^x)$ for all $i \in [m(x)]$
        or $(s_1(a^x_i),s_2(a^x_i), s_3(a^x_i)) = (v_{i,1}^x,v_{i,6}^x,v_{i,5}^x)$ for all  $i \in [m(x)]$ (recall that $v_{i,6}^x=v_{m,4}$ for $m=m(x)$).
    \end{observation}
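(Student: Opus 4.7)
The plan is to leverage the preceding observation, which guarantees that each variable agent $a^x_i$ traverses a length-3 shortest path from $v^x_{i,2}$ to $v^x_{i,5}$. First I would identify the candidate paths by inspecting~$G$: the vertex $v^x_{i,2}$ has neighbors $v^x_{i,1}$ and $v^x_{i,3}$ only, and the vertex $v^x_{i,5}$ has neighbors $v^x_{i,4}$ and $v^x_{i,6}$ only, because clause edges attach at $v^x_{i,1}$ or $v^x_{i,3}$ and no clause-gadget vertex lies within distance one of $v^x_{i,5}$. Consequently, the only length-3 walks from $v^x_{i,2}$ to $v^x_{i,5}$ are the \emph{true path} $(v^x_{i,2},v^x_{i,3},v^x_{i,4},v^x_{i,5})$ and the \emph{false path} $(v^x_{i,2},v^x_{i,1},v^x_{i,6},v^x_{i,5})$; these match exactly the two cases in the statement.

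The core step is to show that all agents in $A_x$ make the same choice. For this, I would exploit the identification $v^x_{i,4}=v^x_{i+1,6}$ of the vertex shared by consecutive hexagons of $G^x$, where indices are taken modulo $m(x)$ (recalling from the statement that $v^x_{1,6}=v^x_{m,4}$). Suppose for contradiction that for some $i$ the agent $a^x_i$ follows the true path while $a^x_{i+1}$ follows the false path. Then at turn~$2$ we would have $s_2(a^x_i) = v^x_{i,4} = v^x_{i+1,6} = s_2(a^x_{i+1})$, so two distinct agents occupy the same vertex, contradicting feasibility.

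Hence whenever $a^x_i$ selects the true path, so must $a^x_{i+1}$; iterating this implication around the cyclic chain of $m(x)$ hexagons forces either every $a^x_i$ to follow the true path or every $a^x_i$ to follow the false path, which is precisely the dichotomy claimed. The main delicacy is the very first step, namely confirming that no length-3 walk from $v^x_{i,2}$ to $v^x_{i,5}$ can escape hexagon~$i$ through a clause connection; however, this reduces to a local check on the neighbourhoods of the two endpoints and, once established, the rest follows by the short collision argument above.
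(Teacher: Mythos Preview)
Your proposal is correct and follows essentially the same approach as the paper: first restrict each $a^x_i$ to the two hexagon paths via the previous observation, then derive a turn-2 collision at the shared vertex $v^x_{i,4}=v^x_{i+1,6}$ whenever two consecutive agents make opposite choices, and propagate cyclically. The paper's own justification is a single sentence (``if this were not true, then there would exist an $i$ such that $s_2(a^x_i)=s_2(a^x_{i-1})$''), so your version is simply a more explicit rendering of the same argument.
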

    Indeed, if this were not true, then there would exist an $i\in [m(x)]\setminus \{1\}$ such that $s_2(a^x_i)=s_2(a^x_{i-1})$, or $s_2(a^x_1)=s_2(a^x_m)$, for $m=m(x)$.
    \begin{observation}
        For any clause $C$, at most two of the agents $a^c_i$, $i \in [3]$, can have $s(a^c_i) = v^c_{i,2}$.
    \end{observation}
    This follows directly from the fact that if the agents of $H^c$ move only through edges of $H^c$, then any feasible solution will have a makespan of at least $4$. In other word, at least one clause agent of each clause must move through the vertices of a variable gadgets.

    We define the following assignment: let $x$ be a variable; we set $x$  to be true if $s_1 (a^x_1) = v^x_{1,3}$ and false otherwise. This assignment satisfies $\phi$.
    Indeed, consider a clause $C$. We know that there exists a $j \in [3]$ such that $a^c_j$ moves through a vertex $u$ belonging in $G^x$. There are two cases to be analysed.

    \textbf{Case 1 ($\boldsymbol{u = v^x_{i,1}}$)}: In this case, by construction, we know that $x$ appears positively in the clause $C$.
    From the previous observations, it follows that $s_1 (a^c_j) = u = v^x_{i,1}$ and thus
    $s_1 (a^x_i) = u = v^x_{i,3}$. Therefore, $s_1 (a^x_1) = u = v^x_{1,3}$ and $x$ has been set to true.

    \textbf{Case 2 ($\boldsymbol{u = v^x_{i,3}}$)}: In this case, by construction, we know that $x$ appears negatively in the clause $C$.
    From the previous observations, it follows that $s_1 (a^c_j) = u = v^x_{i,3}$ and thus
    $s_1 (a^x_i) = u = v^x_{i,1}$. Therefore, $s_1 (a^x_1) = u = v^x_{1,1}$ and $x$ has been set to false.

    In both cases above, the clause $C$ is satisfied. In other words, the assignment we defined is indeed satisfying $\phi$.

    Lastly, it remains to argue about the planarity of $G$.

    \begin{claim}
     The graph $G$ can be constructed so that it is planar.
    \end{claim}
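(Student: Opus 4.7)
The plan is to start from a fixed planar embedding of the bipartite graph $G'$ corresponding to $\phi$ and carry out the vertex replacements so that planarity is preserved at every step. Two observations drive the argument: each individual gadget is itself planar when drawn in isolation, and for every incidence of $G'$ the two edges that need to be added between a variable gadget and a clause gadget can be routed inside a common face.

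First, I would fix a planar embedding of $G'$. Around each variable vertex $v_x$, the $m(x)$ incident clause vertices appear in a well-defined cyclic order $c_{\pi(1)}, c_{\pi(2)}, \dots, c_{\pi(m(x))}$. I would replace $v_x$ by the gadget $G^x$ drawn as a ring of $m(x)$ hexagons inside a small disk centered at the former location of $v_x$, choosing the index $i$ of each hexagon so that the hexagon faces outward in the direction of $c_{\pi(i)}$. In this drawing, the outer boundary of hexagon $i$ contains the three consecutive vertices $v^x_{i,1}$, $v^x_{i,2}$, $v^x_{i,3}$, so both candidates for the attachment vertex, $v^x_{i,1}$ (positive literal) and $v^x_{i,3}$ (negative literal), lie on the outer face of $G^x$.

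Symmetrically, I would replace each clause vertex $c$ by the gadget $H^c$ drawn inside a small disk so that the three arms $(v^c_{j,1}, v^c_{j,2}, v^c_{j,3}, v^c_{j,4}, v^c_{j,5})$ point outward in the three directions of the corresponding variable gadgets, in the cyclic order inherited from the embedding of $G'$. Each arm is a path of length four whose endpoints $v^c_{j,1}$ and $v^c_{j,5}$ have degree $1$ within $H^c$; hence each arm can be embedded so that both of its endpoints lie on the outer face of $H^c$.

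Finally, for each edge $xc$ of $G'$ I would route the two required edges inside the corridor previously occupied by $xc$. On the variable side, the chosen attachment vertex (either $v^x_{i,1}$ or $v^x_{i,3}$, dictated by the polarity of the literal) lies on the outer face of $G^x$; on the clause side, both $v^c_{j,1}$ and $v^c_{j,5}$ lie on the outer face of $H^c$; and the corridor is a face shared by the two partial drawings. Thus the two new edges can be drawn as non-crossing curves inside the corridor. The main subtlety, and the step I expect to require the most care, is ensuring that connecting only $v^x_{i,1}$ (respectively $v^x_{i,3}$) while leaving the other candidate unused does not force a route to cross through the hexagon; this is handled by the observation that $v^x_{i,2}$ separates $v^x_{i,1}$ and $v^x_{i,3}$ along the outer boundary of the hexagon, so the entire ``Y''-shape formed by the attachment vertex together with its two edges fits on one side of $v^x_{i,2}$, and the argument goes through uniformly for both polarities.
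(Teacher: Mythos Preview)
Your proposal is correct and follows essentially the same approach as the paper: fix a planar embedding of $G'$, place each gadget inside a small disk at its former vertex so that the attachment vertices lie on the outer boundary, use the cyclic order of incidences inherited from the embedding to decide which hexagon index $i$ (variable side) and which arm index $j$ (clause side) to use, and then route the two replacement edges inside the ``corridor'' of each former edge of $G'$. The paper phrases the routing step via an observation about matching points on two concentric circles in the same cyclic order, whereas you argue it directly via outer-face accessibility, but the content is the same.
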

    \begin{proofclaim}
    In order to draw a planar embedding of $G$, we start from a planar emending of $G'$. Imagine that around each vertex of $G'$ we have a circle that is small enough so that no two such circles intersect. We may also assume that the edges of $G'$ are drawn in such a way so that each edge intersects every such circle at most once.
    We will say that the area inside any such circle is \emph{close} to the corresponding vertex and the area not included in any cycle is \emph{far} from the vertices.

    First, when we replace the vertices with gadgets, we draw them inside the corresponding circles, and as they are drawn in Figure~\ref{figure:planar-hard}.
    We also need to draw the new edges. Recall that each edge of $G'$ has been replaced by two new edges in $G$.
    We draw the new edges so that their parts that lie far from the vertices are parallel to the original edges. That way, and since $G'$ is planar, no two of the new edges are crossing in the area far from the vertices.

    Before we deal with the part of the new edges that lies in the area close to the vertices (defined by the aforementioned circles) we will define,
    for each gadget (regardless of it being a variable or a clause gadget) an ordering of the edges that need to be attached to
    its vertices. We proceed as follows. For each vertex of $G$, consider the center of its circle.
    We define a ray $\mu$ that starts from this center and moves horizontally to the right of it. We enumerate the edges $e_1,\dots,e_m$ of $G'$ based on their entering points on this circle (i.e., the intersection of the original edges and the circles around the original vertices in $G'$).
    In particular, we follow the order in which $\mu$ meets the entering points of the edges when it moves counter clockwise.
    Let us denote the edges that replace the edge $e_i$ (for any $i\in[m]$) by $e_{i,1}$ and $e_{i,2}$.
    We decide the vertices of the gadget to which these edges will be incident based on $i$.
    When we consider a variable gadget, $e_{i,1}$ and $e_{i,2}$ will be incident to $v^x_{i,1}$ if the corresponding literal is a positive one and to $v^x_{i,3}$ otherwise (as described in the initial construction of $G$). Similarly, when we consider a clause gadget,  $e_{i,1}$ and $e_{i,2}$ will be incident to $v^c_{i,1}$ and $v^c_{i,5}$ respectively.

    To show that we can draw the new edges so they do not cross in the area close to the circles, it suffices to observe that:
    \begin{observation}
        Consider two concentric circles. Let $p_1,\ldots, p_i$ be $i\ge 1$ vertices on the outside circle in a counter-clockwise order. Also, let $p'_1,\ldots, p'_i$ be $i$ vertices on the inside circle in a counter-clockwise order. Then, we can draw all the edges  $p_ip'_i$ such that all of then are in the area between the two circles and no two of them are crossing.
    \end{observation}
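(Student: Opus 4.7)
The plan is to construct the edges explicitly as curves in the annulus using polar coordinates. Denote by $R$ and $r$ the radii of the outer and inner circles and place their common centre at the origin. After a rigid rotation we may assume the polar angles $\theta_j$ of the outer vertices satisfy $0 = \theta_1 < \theta_2 < \dots < \theta_i < 2\pi$. I would then lift the polar angles $\theta'_j$ of the inner vertices to real numbers (rather than representatives in $[0, 2\pi)$) as follows: pick the unique representative of $\theta'_1$ in $[0, 2\pi)$, and inductively pick $\theta'_{j+1}$ to be the smallest real greater than $\theta'_j$ that is congruent to the polar angle of $p'_{j+1}$ modulo $2\pi$. Because $p'_1, \dots, p'_i$ are given in CCW order around the inner circle, this lifting yields $\theta'_1 \leq \theta'_2 \leq \dots \leq \theta'_i$ with $\theta'_i - \theta'_1 < 2\pi$.

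Next, for every $j \in [i]$, I would define the edge $\gamma_j$ connecting $p_j$ to $p'_j$ as the curve parameterised by the radial coordinate $\rho \in [r, R]$ and given in polar coordinates by
\[
\gamma_j(\rho) \;=\; \bigl(\rho,\; \alpha_j(\rho) \bmod 2\pi\bigr), \qquad \alpha_j(\rho) \;=\; \tfrac{\rho - r}{R - r}\,\theta_j \;+\; \tfrac{R - \rho}{R - r}\,\theta'_j.
\]
Each $\gamma_j$ lies entirely in the closed annulus between the two circles, starts at $p_j$ when $\rho = R$, and ends at $p'_j$ when $\rho = r$.

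The key verification is that no two of these curves intersect. At any fixed radius $\rho \in [r, R]$, the lifted angles satisfy $\alpha_{j+1}(\rho) - \alpha_j(\rho) = \tfrac{\rho - r}{R - r}(\theta_{j+1} - \theta_j) + \tfrac{R - \rho}{R - r}(\theta'_{j+1} - \theta'_j) > 0$, since both parenthesised differences are non-negative and at least one of them is strictly positive (otherwise $p_j = p_{j+1}$ or $p'_j = p'_{j+1}$). Moreover, $\alpha_i(\rho) - \alpha_1(\rho)$ is a convex combination of $\theta_i - \theta_1 < 2\pi$ and $\theta'_i - \theta'_1 < 2\pi$, hence strictly less than $2\pi$. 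Consequently the values $\alpha_1(\rho), \dots, \alpha_i(\rho)$ project to $i$ distinct angles modulo $2\pi$, showing that the curves are pairwise disjoint at every radius.

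The main subtlety (and essentially the only obstacle) is choosing the correct lifting of the $\theta'_j$: a naive choice in $[0, 2\pi)$ might not be increasing, which would destroy the convex-combination argument. The CCW hypothesis on the inner circle is exactly what guarantees that the inductive lifting described above produces a strictly increasing sequence of total angular span less than $2\pi$, and everything else reduces to the elementary monotonicity computation above.
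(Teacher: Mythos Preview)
Your construction is correct: the linear interpolation of lifted angles as a function of radius gives well-defined simple curves in the annulus, and the monotonicity argument (together with the total angular span staying below $2\pi$) shows they are pairwise disjoint at every radius. The only sloppiness is at the boundary radii $\rho = r$ and $\rho = R$, where one convex coefficient vanishes and you need the specific difference $\theta'_{j+1}-\theta'_j$ (respectively $\theta_{j+1}-\theta_j$) to be strictly positive, not merely ``at least one of the two''; but this follows from distinctness of the vertices, which your parenthetical already invokes.

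As for comparison with the paper: there is nothing to compare. The paper states this as an \emph{Observation} inside the planarity claim and offers no argument whatsoever, treating it as geometrically evident. Your polar-coordinate construction therefore supplies exactly the justification the paper omits; it is more than the paper does, and it works.
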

    This completes the proof of the claim.
    \end{proofclaim}
\end{proof}
\thmMAPFisNPCPlanarMakespanTwo*
\begin{figure}[!t]
\centering

\subfloat[Variable gadget $G^x$, $m=m(x)$.]{
\scalebox{0.7}{
\begin{tikzpicture}[inner sep=0.7mm]

\node [draw, circle, line width=1pt, fill=white] (u1) at  (5,-1)  [label=above:\LARGE $u^x_{1,1}$]{};
\node [draw, circle, line width=1pt, fill=white] (w1) at  (3,0)  [label=above:\LARGE $w^x_{1}$]{};
\node [draw, circle, line width=1pt, fill=white] (u3) at  (1,-1)  [label=above:\LARGE $u^x_{1,3}$]{};
\node [draw, circle, line width=1pt, fill=blue] (u2) at  (3,-2)  [label=above:\LARGE $u^x_{1,2}$]{};
\node [draw, circle, line width=1pt, fill=white] (u4) at  (0,-2)  [label=above:\LARGE $u^x_{1,4}$]{};
\node [draw, circle, line width=1pt, fill=white] (u5) at  (1,-3)  [label=above:\LARGE $u^x_{1,5}$]{};
\node [draw, circle, line width=1pt, fill=white] (v1) at  (5,1)  [label=above:\LARGE $v^x_{1,1}$]{};
\node [draw, circle, line width=1pt, fill=white] (v3) at  (1,1)  [label=above:\LARGE $v^x_{1,3}$]{};
\node [draw, circle, line width=1pt, fill=red] (v2) at  (3,2)  [label=above:\LARGE $v^x_{1,2}$]{};
\node [draw, circle, line width=1pt, fill=white] (v4) at  (0,2)  [label=above:\LARGE $v^x_{1,4}$]{};
\node [draw, circle, line width=1pt, fill=white] (v5) at  (1,3)  [label=right:\LARGE $v^x_{1,5}$]{};

\draw[-, line width=1pt]  (v1) -- (v2);
\draw[-, line width=1pt]  (v2) -- (v3);
\draw[-, line width=1pt]  (v3) -- (v4);
\draw[-, line width=1pt]  (v4) -- (v5);
\draw[-, line width=1pt]  (v5) -- (v2);
\draw[-, line width=1pt]  (w1) -- (v1);
\draw[-, line width=1pt]  (w1) -- (v3);

\draw[-, line width=1pt]  (u1) -- (u2);
\draw[-, line width=1pt]  (u2) -- (u3);
\draw[-, line width=1pt]  (u3) -- (u4);
\draw[-, line width=1pt]  (u4) -- (u5);
\draw[-, line width=1pt]  (u5) -- (u2);
\draw[-, line width=1pt]  (w1) -- (u1);
\draw[-, line width=1pt]  (w1) -- (u3);

\node [draw, circle, line width=1pt, fill=white] (u1) at  (0.2675,8.609)  [label=above left:\LARGE $v^x_{1,1}$]{};
\node [draw, circle, line width=1pt, fill=white] (w1) at  (-0.159,6.414)  [label=above left:\LARGE $w^x_{1}$]{};
\node [draw, circle, line width=1pt, fill=white] (u3) at  (-2.171,5.438)  [label=above left:\LARGE $v^x_{1,3}$]{};
\node [draw, circle, line width=1pt, fill=red] (u2) at  (-1.744,7.633)  [label=above left:\LARGE $v^x_{1,2}$]{};
\node [draw, circle, line width=1pt, fill=white] (u4) at  (-3.573,5.255)  [label=above left:\LARGE $v^x_{1,4}$]{};
\node [draw, circle, line width=1pt, fill=white] (u5) at  (-3.756,6.658)  [label=above left:\LARGE $v^x_{1,5}$]{};
\node [draw, circle, line width=1pt, fill=white] (v1) at  (1.853,7.39)  [label=above left:\LARGE  $u^x_{1,1}$]{};
\node [draw, circle, line width=1pt, fill=white] (v3) at  (-0.585,4.219)  [label=above left:\LARGE   $u^x_{1,3}$]{};
\node [draw, circle, line width=1pt, fill=blue] (v2) at  (1.426,5.195)  [label=above left:\LARGE   $u^x_{1,2}$]{};
\node [draw, circle, line width=1pt, fill=white] (v4) at  (-0.402,2.817)  [label=above left:\LARGE  $u^x_{1,4}$]{};
\node [draw, circle, line width=1pt, fill=white] (v5) at  (1,3)  [label=above left:{\LARGE  $u^x_{1,5}$}]{};

\draw[-, line width=1pt]  (v1) -- (v2);
\draw[-, line width=1pt]  (v2) -- (v3);
\draw[-, line width=1pt]  (v3) -- (v4);
\draw[-, line width=1pt]  (v4) -- (v5);
\draw[-, line width=1pt]  (v5) -- (v2);
\draw[-, line width=1pt]  (w1) -- (v1);
\draw[-, line width=1pt]  (w1) -- (v3);

\draw[-, line width=1pt]  (u1) -- (u2);
\draw[-, line width=1pt]  (u2) -- (u3);
\draw[-, line width=1pt]  (u3) -- (u4);
\draw[-, line width=1pt]  (u4) -- (u5);
\draw[-, line width=1pt]  (u5) -- (u2);
\draw[-, line width=1pt]  (w1) -- (u1);
\draw[-, line width=1pt]  (w1) -- (u3);

\node [draw, circle, line width=1pt, fill=white] (u1) at  (1.853,-7.39)  [label=above right: \LARGE  $v^x_{m,1}$]{};
\node [draw, circle, line width=1pt, fill=white] (w1) at  (-0.159,-6.414)  [label=above right:\LARGE  $w^x_{m}$]{};
\node [draw, circle, line width=1pt, fill=white] (u3) at  (-0.585,-4.219)  [label=above right:\LARGE  $v^x_{m,3}$]{};
\node [draw, circle, line width=1pt, fill=red] (u2) at  (1.426,-5.195)  [label=above right:\LARGE  $v^x_{m,2}$]{};
\node [draw, circle, line width=1pt, fill=white] (u4) at  (-0.402,-2.817)  [label=left:\LARGE  $v^x_{m,4}$]{};
\node [draw, circle, line width=1pt, fill=white] (u5) at  (1,-3)  [label= right:\LARGE  $v^x_{m,5}$]{};
\node [draw, circle, line width=1pt, fill=white] (v1) at  (0.267,-8.609)  [label=above right:\LARGE  $u^x_{m,1}$]{};
\node [draw, circle, line width=1pt, fill=white] (v3) at  (-2.171,-5.438)  [label=above right:\LARGE  $u^x_{m,3}$]{};
\node [draw, circle, line width=1pt, fill=blue] (v2) at  (-1.744,-7.633)  [label=above right:\LARGE  $u^x_{m,2}$]{};
\node [draw, circle, line width=1pt, fill=white] (v4) at  (-3.573,-5.255)  [label=above right:\LARGE  $u^x_{m,4}$]{};
\node [draw, circle, line width=1pt, fill=white] (v5) at  (-3.756,-6.658)  [label=above right:\LARGE  $u^x_{m,5}$]{};

\draw[-, line width=1pt]  (v1) -- (v2);
\draw[-, line width=1pt]  (v2) -- (v3);
\draw[-, line width=1pt]  (v3) -- (v4);
\draw[-, line width=1pt]  (v4) -- (v5);
\draw[-, line width=1pt]  (v5) -- (v2);
\draw[-, line width=1pt]  (w1) -- (v1);
\draw[-, line width=1pt]  (w1) -- (v3);

\draw[-, line width=1pt]  (u1) -- (u2);
\draw[-, line width=1pt]  (u2) -- (u3);
\draw[-, line width=1pt]  (u3) -- (u4);
\draw[-, line width=1pt]  (u4) -- (u5);
\draw[-, line width=1pt]  (u5) -- (u2);
\draw[-, line width=1pt]  (w1) -- (u1);
\draw[-, line width=1pt]  (w1) -- (u3);

    \node[draw,black,circle,line width=0.5pt,fill=black] () at (-4.25,4.25)[]{};
    \node[draw,black,circle,line width=0.5pt,fill=black] () at (-5.25,3.675)[]{};
    \node[draw,black,circle,line width=0.5pt,fill=black] () at (-5.75,2.75)[]{};

    \node[draw,black,circle,line width=0.5pt,fill=black] () at (-4.25,-4.25)[]{};
    \node[draw,black,circle,line width=0.5pt,fill=black] () at (-5.25,-3.675)[]{};
    \node[draw,black,circle,line width=0.5pt,fill=black] () at (-5.75,-2.75)[]{};

\end{tikzpicture}
}
}\hspace{10pt}
\subfloat[Clause gadget $H^C$.]{
    \scalebox{1}{
    \begin{tikzpicture}[inner sep=0.5mm]
    \node[draw,black,circle,line width=1pt,fill=white] (v13) at (2,0)[label=above:$v^c_{1,3}$]{};
    \node[draw,black,circle,line width=1pt,fill=white] (v12) at (3,0)[label=above right:$v^c_{1,2}$]{};
    \node[draw,black,circle,line width=1pt,fill=white] (v11) at (3,1)[label=above:$v^c_{1,1}$]{};
    \node[draw,black,circle,line width=1pt,fill=white] (v22) at (3,-1)[label=below right:$v^c_{2,2}$]{};
    \node[draw,black,circle,line width=1pt,fill=white] (v23) at (2,-2)[label=below:$v^c_{2,3}$]{};
    \node[draw,black,circle,line width=1pt,fill=white] (v21) at (2,-1)[label=left:$v^c_{2,1}$]{};
    \node[draw,black,circle,line width=1pt,fill=white] (v32) at (4,0)[label=below right:$v^c_{3,2}$]{};
    \node[draw,black,circle,line width=1pt,fill=white] (v33) at (5,0)[label=above:$v^c_{3,3}$]{};
    \node[draw,black,circle,line width=1pt,fill=white] (v31) at (4,1)[label=above:$v^c_{3,1}$]{};

    \draw[line width=1pt,black] (v13) -- (v12);
    \draw[line width=1pt,black] (v12) -- (v11);
    \draw[line width=1pt,black] (v12) -- (v22);
    \draw[line width=1pt,black] (v21) -- (v22);
    \draw[line width=1pt,black] (v22) -- (v23);
    \draw[line width=1pt,black] (v22) -- (v32);
    \draw[line width=1pt,black] (v12) -- (v32);
    \draw[line width=1pt,black] (v32) -- (v31);
    \draw[line width=1pt,black] (v32) -- (v33);
    \draw[line width=1pt,black] (v11) -- (v12);
    \end{tikzpicture}
}
}

\caption{The two gadgets used in the proof of Theorem~\ref{thm:NP-hard-makespan-2}. In subfigure (a), the color red (blue resp.) is used on the vertices that may be incident to clause gadgets, in which $x$ appear as a negative (positive resp.) literal.}
\label{figure:planar-hard-noswaps-makespan2}

\end{figure}
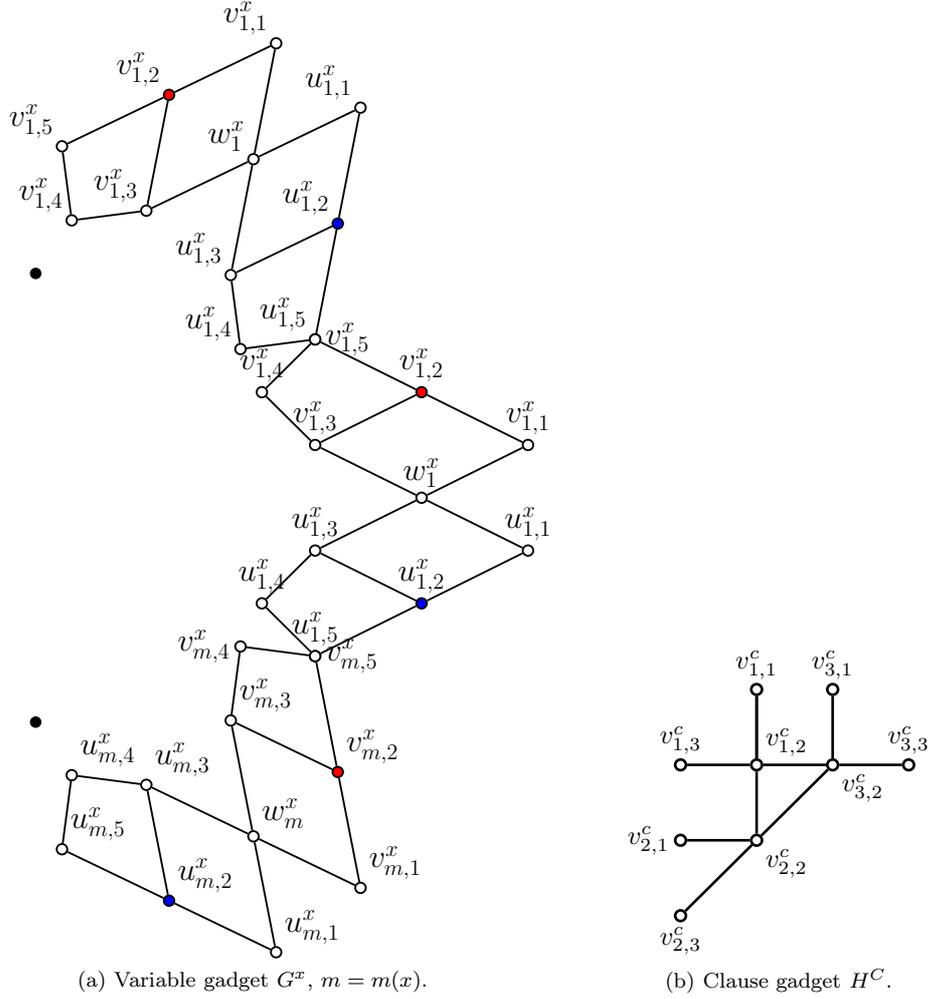

\begin{proof}
    The proof is similar to that of Theorem~\ref{thm:MAPFisNPC:PlanarMakespanDelta}. The problem is clearly in \NP.
    To show that is is also NP-hard, we present a reduction from the \textsc{Planar $3$-SAT}.
    As before, we assume that we have the 3CNF formula $\phi$ on $n$ variables and $m$ clauses. We denote with $X$ the set of variables appearing in $\phi$. Also, let $G'=(V,U,E)$ be the bipartite graph that \emph{corresponds} to $\phi$.

    Our construction starts with a planar embedding of the bipartite graph $G'$ that corresponds to $\phi$.
    For each variable $x\in X$, let $m(x)$ be the number of clauses that contain $x$ or $\lnot x$.
    First, we replace each clause vertex $c$ by $H^c$, which is a copy of the \emph{clause gadget} $H$ (illustrated in Figure~\ref{figure:planar-hard-noswaps-makespan2}(a)). Then, we replace each variable vertex $x$ by $G^x$, a copy of the the \emph{variable gadget} $G^x$ (illustrated in Figure~\ref{figure:planar-hard-noswaps-makespan2}(b)).
    The superscripts $x$ and $c$ will be used to differentiate between the vertices of the variable and clause gadgets respectively. Note that all the edges of $G'$ have been removed at this stage.
    So, for each edge $xc$ in $G'$, we add two new edges connecting vertices of $G^x$ and $H^c$.
    In particular, to each $xc\in E(G')$, we assign an $i \in [m(x)]$ that hasn't already been assigned to a different appearance of $x$ in $\phi$. Let $x$ be in the $j$-th literal of $C$.
    If this literal is a positive one, then we add the edges $u^x_{i,2}v^c_{j,1}$ and $u^x_{i,2}v^c_{j,3}$.
    Otherwise, we add the edges $v^x_{i,2}v^c_{j,1}$ and $v^x_{i,2}v^c_{j,3}$. Let $G$ be the resulting graph.

    We continue by defining the set of agents $A$ and the functions $s_0$ and $t$.
    First, for each clause $C$, we create three \emph{clause agents} denoted as $a^c_i$, $i \in [3]$.
    For $i \in [3]$, we set $s_0(a^c_i) = v^c_{i,1}$ and $t(a^c_i) = v^c_{i,3}$.
    Then, for each clause $C$, we create four \emph{clause agents}, three of them denoted as $a^c_i$, $i \in [3]$, and the final as $a^c$.
    For $i \in [3]$, we set $s_0(a^c_i) = v^c_{i,1}$ and $t(a^c_i) = v^c_{i,3}$.
    Also, $s_0(a^c) = v^c_{1,2}$ and $t(a^c) = v^c_2$.
    Next, for each variable $x$, we create $4m(x)$ \emph{variable agents} $a^x_{i,j}$ and $b^x_{i,j}$, $i \in [m(x)]$ and $j \in [2]$.
    For each $i \in [m(x)]$, we set $s_0(a^x_{i,1}) = v^x_{i,1}$, $s_0(b^x_{i,1}) = u^x_{i,1}$, $s_0(a^x_{i,2}) = v^x_{i,4}$
    and $s_0(b^x_{i,2}) = u^x_{i,4}$. Finally, for each $i \in [m(x)]$, we set $t(a^x_{i,1}) = v^x_{i,3}$, $t(b^x_{i,1}) = u^x_{i,3}$, $t(a^x_{i,2}) = v^x_{i,2}$ and $t(b^x_{i,2}) = u^x_{i,2}$.
    This completes our construction. Observe that, for each $C$, since we are not allowing swaps, if the agents $a^c_i$, $i\in[3]$, move only through edges of $H^c$, then any feasible solution will have a makespan of at least $3$.

    We are now ready to show that $\phi$ is a yes-instance of \textsc{Planar $3$-SAT} if and only if $\langle G,A,s_0,t,2 \rangle$ is a yes-instance of \MAPFNoSwap. First, assume that we have a satisfying assignment $\sigma$ for $\phi$. We will show that there exists a feasible solution $s_1,s_2$ for $\langle G,A,s_0,t,2\rangle$.
    For each variable $x$ that appears in $m$ clauses, if $\sigma(x) = true$ then, for all agents $a^x_{i,j}$ and $b^x_{i,j}$, $i \in [m]$ and $j \in [2]$, we set:
    \begin{itemize}
        \item $s_1(a^x_{i,1}) = v^x_{i,2}$ and $s_2(a^x_{i,1}) = v^x_{i,3}$
        \item $s_1(b^x_{i,1}) = w^x_{i}$ and $s_2(b^x_{i,1}) = u^x_{i,3}$
        \item $s_1(a^x_{i,2}) = v^x_{i,5}$ and $s_2(a^x_{i,2}) = v^x_{i,2}$
        \item $s_1(b^x_{i,2}) = v^x_{i,3}$ and $s_2(b^x_{i,1}) = u^x_{i,2}$
    \end{itemize}
    Otherwise, ($\sigma(x) = false$), for all $i \in [m]$, we set:
    \begin{itemize}
        \item $s_1(a^x_{i,1}) = w^x_{i}$ and $s_2(a^x_{i,1}) = v^x_{i,3}$
        \item $s_1(b^x_{i,1}) = u^x_{i,2}$ and $s_2(b^x_{i,1}) = u^x_{i,3}$
        \item $s_1(a^x_{i,2}) = v^x_{i,3}$ and $s_2(a^x_{i,2}) = v^x_{i,2}$
        \item $s_1(b^x_{i,2}) = v^x_{i,5}$ and $s_2(b^x_{i,1}) = u^x_{i,2}$
    \end{itemize}
    Note that, for any $i \in [m]$, $v^x_{i,5}=u^x_{k,5}$ where $k= 1+ (i \mod m)$. Therefore, no two agents occupy the same vertex during the same turn.
    Next, for each clause $C$, we deal with the agents $a^c_i$, $i\in [3]$, and $a^c$.
    Since $\sigma$ is a satisfying assignment, there exists at least one literal $\lambda$ in $C$ that satisfies it.
    W.l.o.g., assume that $x$ is the variable that appears in $\lambda$ and that $\lambda$ in  $j$-th literal of $C$.
    For the agents $a^c_i$, $i \in [3] \setminus \{j\}$
    we set $s_1(a^c_i) = v^c_{i,2}$ and $s_2(a^c_i) = v^c_{i,3}$.
    For the agent $a^c_j$, we set $s_1(a^c_i) = u$ and $s_2(a^c_i) = v^c_{j,5}$, where $u$ is the common neighbor of $v^c_{j,1}$ and $v^c_{j,3}$ (which belongs in the variable gadget of $x$).
    Finally, we define the movement of the agent $a^c$. We set $s_1(a^c) = v^c_{j,2} $ and $s_2(a^c) = v^c_{2,2}$.

    Observe that for any agent $a \in A$, $s_{i}(a)$ is either a neighbor of $s_{i-1}(a)$, or the same as $s_{i-1}(a)$. Therefore, it suffices to show that no two agents are occupying the same vertex during the same turn.
    By the definition of the sequence $s_1,s_2$, this may only happen between a clause and a variable agent and only during the first turn. Let $a$ be a clause agent that starts at a clause gadget $H^c$, and $s_1(a)=u\in G^x$.
    Notice that, by the construction of the solution, $\sigma(x)$ is such that the literal $\lambda$ containing $x$ satisfies the clause $C$.
    We consider two cases, either this literal is $\lambda=x$ or $\lambda=\neg x$.
    In the first case, $\sigma(x)=true$; therefore, for all $i \in [m(x)]$, no agent occupies the vertex $u^x_{i,2}$.
    Also, $s_1 (a) = u^x_{i,2}$ since the represented literal is a positive one and by the construction of $G$.
    In the latter case, $\sigma(x)=false$; therefore, for all $i \in [m(x)]$, no agent occupies the vertex $v^x_{i,2}$.
    Also, $s_1 (a) = v^x_{i,2}$, since the represented literal is a negative one and by the construction of $G$.
    In both cases, there is no collision. This finishes the first direction of the reduction.

    For the reverse direction, we show that, if $s_1,s_2$ is a solution of $\langle G,A,s_0,t,2 \rangle$, then there exists a satisfying assignment of $\phi$.
    We start with some observations.
    \begin{observation}
        In any solution of makespan $2$, any agent $a \in A \setminus \{a^c \mid c \in \phi \}$
        must move through a shortest path between $s_0 (a)$ and $t (a)$.
    \end{observation}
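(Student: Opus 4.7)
The plan is to exploit the fact that, for every agent in the set $A \setminus \{a^c \mid c \in \phi\}$, the graph distance between its start and target is exactly $2$, and then observe that a walk of length at most $2$ between two vertices at distance $2$ is necessarily a shortest path.

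First I would verify the distance claim by a short case analysis on the definition of the agents. For the clause agents $a^c_i$ with $i \in [3]$, we have $s_0(a^c_i) = v^c_{i,1}$ and $t(a^c_i) = v^c_{i,3}$; inspecting the clause gadget in Figure~\ref{figure:planar-hard-noswaps-makespan2}(b) shows that these two vertices are not adjacent but share the neighbour $v^c_{i,2}$, so the distance is $2$. For the variable agents, by construction of the variable gadget in Figure~\ref{figure:planar-hard-noswaps-makespan2}(a), each pair $(v^x_{i,1}, v^x_{i,3})$ and $(u^x_{i,1}, u^x_{i,3})$ is at distance exactly $2$ (via $v^x_{i,2}$/$u^x_{i,2}$ or via $w^x_i$), and each pair $(v^x_{i,4}, v^x_{i,2})$ and $(u^x_{i,4}, u^x_{i,2})$ is at distance exactly $2$ (via $v^x_{i,3}$/$u^x_{i,3}$ or via $v^x_{i,5}$/$u^x_{i,5}$). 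The edges added between a variable gadget and a clause gadget in the construction only connect vertices of the form $v^x_{i,2}$ or $u^x_{i,2}$ to vertices of the form $v^c_{j,1}$ or $v^c_{j,3}$; one must check that no such edge creates a shortcut shorter than $2$ between the relevant pairs, which is immediate since the added edges do not identify any of the pairs listed above.

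Once the distances are established, the rest is immediate. Fix $a \in A \setminus \{a^c \mid c \in \phi\}$ and let $s_0(a), s_1(a), s_2(a)$ be its trajectory in a solution of makespan $2$. By the feasibility condition, $s_1(a)$ is either equal to $s_0(a)$ or a neighbour of it, and similarly for $s_2(a)$ with respect to $s_1(a)$. Hence the trajectory describes a walk of length at most $2$ from $s_0(a)$ to $s_2(a) = t(a)$ in $G$. Since $\dist_G(s_0(a), t(a)) = 2$, this walk cannot have length strictly less than $2$, and a walk of length exactly $2$ between vertices at distance $2$ is a shortest path.

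The argument contains no real obstacle; the only point requiring care is the distance verification, which is entirely a matter of inspecting how the clause gadgets and variable gadgets are attached. In particular, one must check that the inter-gadget edges introduced in the construction do not inadvertently shorten the distance between any $s_0(a)$ and $t(a)$ for the relevant agents, which follows from the fact that those edges are incident only to the ``port'' vertices $v^x_{i,2}$, $u^x_{i,2}$, $v^c_{j,1}$, $v^c_{j,3}$ and therefore cannot create a path of length $1$ between any pair of endpoints listed above.
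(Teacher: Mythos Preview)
Your proposal is correct and follows exactly the paper's approach: the paper justifies the observation with the single sentence ``Indeed, for any agent in the given set, the distance between its starting and terminal positions is exactly $2$,'' and you simply carry out that distance verification in detail before drawing the same conclusion.
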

    Indeed, for any agent in the given set, the distance between its starting and terminal positions is exactly $2$.
    \begin{claim}\label{cl:makespan-2-variable-agents}
        Let $A_x$ be the set of agents $\{a^x_i, b^x_i \mid i \in [m(x)]\}$  for a variable $x$.
        In any solution of makespan $2$ either, for all $i \in [m(x)]$,:\\
    \begin{itemize}
        \item $s_1(a^x_{i,1}) = v^x_{i,2}$ and $s_2(a^x_{i,1}) = v^x_{i,3}$
        \item $s_1(b^x_{i,1}) = w^x_{i}$ and $s_2(b^x_{i,1}) = u^x_{i,3}$
        \item $s_1(a^x_{i,2}) = v^x_{i,5}$ and $s_2(a^x_{i,2}) = v^x_{i,2}$
        \item $s_1(b^x_{i,2}) = v^x_{i,3}$ and $s_2(b^x_{i,1}) = u^x_{i,2}$
    \end{itemize}
    or, for all $i \in [m(x)]$,:
    \begin{itemize}
        \item $s_1(a^x_{i,1}) = w^x_{i}$ and $s_2(a^x_{i,1}) = v^x_{i,3}$
        \item $s_1(b^x_{i,1}) = u^x_{i,2}$ and $s_2(b^x_{i,1}) = u^x_{i,3}$
        \item $s_1(a^x_{i,2}) = v^x_{i,3}$ and $s_2(a^x_{i,2}) = v^x_{i,2}$
        \item $s_1(b^x_{i,2}) = u^x_{i,5}$ and $s_2(b^x_{i,1}) = u^x_{i,2}$
    \end{itemize}
    \end{claim}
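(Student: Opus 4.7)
The plan is to argue that in a solution of makespan~$2$, every variable agent has its starting and target vertices at distance exactly~$2$, so it must traverse one of its shortest paths. Concretely, $a^x_{i,1}$ can go via $v^x_{i,2}$ or via $w^x_{i}$, and symmetrically $b^x_{i,1}$ can go via $u^x_{i,2}$ or via $w^x_{i}$; each of the agents $a^x_{i,2}$ and $b^x_{i,2}$ has two length-$2$ routes, one through the ``$3$-side'' of the pentagon and one through the ``$5$-side''.

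I would first perform a local analysis inside a single copy~$i$. Writing $\phi_i = T$ (resp.\ $\psi_i = T$) to mean that $a^x_{i,1}$ (resp.\ $b^x_{i,1}$) uses its route through $v^x_{i,2}$ (resp.\ $u^x_{i,2}$) rather than through $w^x_{i}$, the only pairs $(\phi_i,\psi_i)$ avoiding a vertex collision at $w^x_{i}$ at turn~$1$ are $(T,T)$, $(T,F)$, and $(F,F)$. The no-swap rule on the edge $v^x_{i,2}v^x_{i,3}$ then forces $a^x_{i,2}$ onto the $v^x_{i,5}$-route whenever $\phi_i = T$, and a symmetric argument on $u^x_{i,2}u^x_{i,3}$ forces $b^x_{i,2}$ onto the $u^x_{i,5}$-route whenever $\psi_i = F$. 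The remaining positional choices (for $b^x_{i,2}$ when $\psi_i = T$, and for $a^x_{i,2}$ when $\phi_i = F$) will be pinned down by the global analysis.

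The crux of the proof is to exploit the cyclic identifications $v^x_{i,5} = u^x_{(i \bmod m)+1, 5}$ that glue the $m$ copies into a cycle. At each shared vertex, at most one of the two candidate agents can occupy it at turn~$1$. From this I would derive the propagation rules: a $(T,T)$-copy~$i$ sends $a^x_{i,2}$ onto $v^x_{i,5}$, which forbids $b^x_{i+1,2}$ from the $u^x_{i+1,5}$-route, then the no-swap rule forces $b^x_{i+1,1}$ through $w^x_{i+1}$, and the collision rule at $w^x_{i+1}$ forces $a^x_{i+1,1}$ through $v^x_{i+1,2}$, making copy $i+1$ also $(T,T)$; an entirely analogous backward chain shows that $(F,F)$ propagates to $(F,F)$. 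A hypothetical $(T,F)$-copy would simultaneously enforce $(F,F)$ on one side and $(T,T)$ on the other, but continuing these forced propagations around the cycle produces a copy that would need to be in both states, a contradiction.

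Therefore every copy is in the same state, and the cycle constraint additionally pins the remaining free agents: in the all-$(T,T)$ case $b^x_{i,2}$ is forced onto $u^x_{i,3}$ (because $u^x_{i,5} = v^x_{i-1,5}$ is already occupied by $a^x_{i-1,2}$), and in the all-$(F,F)$ case $a^x_{i,2}$ is forced onto $v^x_{i,3}$. Matching these two explicit assignments with the two bulleted lists in the claim completes the proof. The main obstacle I anticipate is a clean formalisation of the cyclic propagation argument, since the no-swap interaction between adjacent copies through the shared $5$-vertices needs to be tracked carefully for all four agents in each copy simultaneously.
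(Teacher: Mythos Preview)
Your overall strategy---identify the two length-$2$ routes for each variable agent, then propagate constraints cyclically through the shared ``$5$-vertices'' using the no-swap rule---is exactly the paper's approach. The paper picks one $a^x_{i,1}$, assumes it goes through $v^x_{i,2}$, and chases the implications around the cycle to the next copy, just as you do.

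However, your bookkeeping is off in a way that makes the write-up internally inconsistent. The pair excluded by a turn-$1$ collision at $w^x_i$ is $(F,F)$ (both agents through $w^x_i$), not $(F,T)$ as you write; $(F,T)$ is perfectly legal locally. More importantly, your own propagation chain from a $(T,\cdot)$-copy concludes ``$b^x_{i+1,1}$ through $w^x_{i+1}$'' (i.e.\ $\psi_{i+1}=F$) and ``$a^x_{i+1,1}$ through $v^x_{i+1,2}$'' (i.e.\ $\phi_{i+1}=T$), which is the state $(T,F)$, not $(T,T)$. Indeed the two uniform outcomes in the claim are precisely $(T,F)$ for all~$i$ and $(F,T)$ for all~$i$, not $(T,T)$ and $(F,F)$. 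Once you correct these labels, your argument matches the paper's; the mixed state you need to eliminate is $(T,T)$ (which forces $(T,F)$ forward and $(F,T)$ backward, contradicting each other around the cycle), not $(T,F)$.
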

    \begin{proofclaim}
    Consider the movement of any agent $a^x_{i,1}$ and observe how this affects the rest of the agents.
    W.l.o.g. assume that, for some $i \in [m(x)]$, $a^x_{i,1}$ moves through $v^x_{i,2}$. We will show that:
    \begin{itemize}
        \item $s_1(a^x_{i,2}) = v^x_{i,5}$,
        \item $s_1(b^x_{k,1}) = u^x_{k,3}$,
        \item $s_1(b^x_{k,1}) = w^x_{k}$ and
        \item $s_1(a^x_{k,1}) = v^x_{k,2}$,
    \end{itemize}
    where $k = 1+ (i \mod m)$.

    Notice that $s_1(a^x_{i,1}) = v^x_{i,2}$ forces that $s_1(a^x_{i,2}) = v^x_{i,5}$, as otherwise we would not achieve a solution of makespan $2$. Similarly, since $v^x_{i,5} = u^x_{k,5}$, we have that $s_1(b^x_{k,1}) = u^x_{k,3}$. This gives us that $s_1(b^x_{k,1}) = w^x_{k}$ and, finally, $s_1(a^x_{k,1}) = v^x_{k,2}$.
    Then, the $s_2$ follows.

    Similarly, we can show that $s_1(a^x_{i,1}) = w^x_{i}$ forces the second set of moves.
    \end{proofclaim}

    \begin{observation}
        For any clause $C$, at most two of the agents $a^c_i$, $i \in [3]$, can have $s(a^c_i) = v^c_{i,2}$.
    \end{observation}
    This follows directly from the fact that if the agents of $H^c$ move only through edges of $H^c$, then any feasible solution will have a makespan of at least $3$. In other words, at least one clause agent of each clause must move through the vertices of a variable gadgets.

    We define the following truth assignment. Let $x$ be a variable; we set $x$ to be true if $s_1 (a^x_{1,1}) = v^x_{1,2}$ and false otherwise. This assignment satisfies $\phi$.
    Indeed, consider a clause $C$. We know that there exists at least one $j \in [3]$ such that $a^c_j$ moves through a vertex $u$ belonging in $G^x$, for a variable $x$. There are two cases to be analysed.

    \textbf{Case 1 ($\boldsymbol{u = u^x_{i,2}}$)}: In this case, by construction, we know that $x$ appears positively in the clause $C$.
    From the previous observations, it follows that $s_1 (a^c_j) = u = u^x_{i,2}$ and thus
    $s_1 (b^x_{i,1}) = w^x_{i}$ (as it can not be $u^x_{i,2}$). Therefore, by Claim~\ref{cl:makespan-2-variable-agents}, $s_1 (a^x_{1,1}) = v^x_{1,2}$ and $x$ has been set to true.

    \textbf{Case 2 ($\boldsymbol{u = v^x_{i,2}}$)}: In this case, by construction, we know that $x$ appears negatively in the clause $C$.
    From the previous observations, it follows that $s_1 (a^c_j) = u = v^x_{i,2}$ and thus
    $s_1 (a^x_{i,1}) = w^x_{i}$. Therefore, by Claim~\ref{cl:makespan-2-variable-agents}, $s_1 (a^x_{1,1})= w^x_{1}$ and $x$ has been set to false.

    In both cases above, the clause $C$ is satisfied. In other words, the assignment we defined is indeed satisfying $\phi$.

    The planarity of the graph $G$ follows from exactly the same arguments as these in proof of Theorem~\ref{thm:MAPFisNPC:PlanarMakespanDelta}.
\end{proof}

\section{Conclusion}
In this paper we studied the parameterised complexity of the \MAPF problem. The main takeaway message is that the problem is rather intractable. Indeed, the problem remains hard even on trees. Hence, the treewidth of the input graph is highly unlikely to yield an efficient algorithm (under standard theoretical assumptions). This suggests that the current heuristic-oriented approach followed by the community is, in some sense, optimal. On the positive side, we showed that there are various combinations of parameters that lead to efficient algorithms, such as the number of agents plus the makespan. These positive results could potentially lead to improvements in practice, which should be the subject of a dedicated future study. The first step towards this direction is to check whether our algorithms can be utilised in tandem with some state-of-the-art heuristic algorithm, in order to obtain an improved result.

\bibliographystyle{abbrv}
\bibliography{bibliography}

\end{document}